\definecolor{webgreen}{rgb}{0,.5,0}
\definecolor{webblue}{rgb}{0,0,.5}
\theoremstyle{plain}
\newtheorem{experiment}{Experiment}
\newtheorem{scheme}{Scheme}
\newtheorem{condition}{Condition}
\newcommand{\N}{\mathbb{N}}
\newcommand{\NN}{\mathbb{N}}
\newcommand{\foral}{\ensuremath{\forall \, }\xspace}
\newcommand{\id}{\mathrm{id}}
\newcommand{\ket}[1]{| #1 \rangle}
\newcommand{\bra}[1]{\langle #1 |}
\newcommand{\ketbra}[2]{\left|#1\right\rangle\!\!\left\langle #2\right|}
\newcommand{\tr}{\mathrm{Tr}}
\newcommand{\proj}[1]{\ensuremath{|#1\rangle \langle #1|}}
\renewcommand{\rho}{\varrho}
\newcommand{\Hi}{\mathcal{H}}
\newcommand{\hi}{\Hi}
\newcommand{\one}{\mathds 1}
\newcommand{\secpar}{\ensuremath{n}\xspace}
\newcommand{\prob}{\ensuremath{\mathsf{Pr}}}
\renewcommand{\prob}{\Pr}
\newcommand{\SKQES}{\ensuremath{\textsf{SKQES}}\xspace}
\newcommand{\CSKE}{\ensuremath{\textsf{SKES}}\xspace}
\newcommand{\SKES}{\ensuremath{\textsf{SKES}}\xspace}
\newcommand{\expref}[2]{\texorpdfstring{\hyperref[#2]{#1~\ref{#2}}}{#1~\ref{#2}}}
\newcommand{\A}{\ensuremath{\mathcal{A}}\xspace}
\newcommand{\B}{\ensuremath{\mathcal{B}}\xspace}
\newcommand{\D}{\ensuremath{\mathcal{D}}\xspace}
\renewcommand{\H}{\ensuremath{\mathcal H}\xspace}
\newcommand{\K}{\ensuremath{\mathcal{K}}\xspace}
\newcommand{\M}{\ensuremath{\mathcal{M}}\xspace}
\newcommand{\states}{\ensuremath{\mathfrak D}\xspace}
\newcommand{\negl}{\ensuremath{\operatorname{negl}}\xspace}
\newcommand{\supp}{\textbf{supp\,}}
\newcommand{\egoketbra}[1]{\ketbra{#1}{#1}}			% Ket-Bra with the same argument
\newcommand{\from}{\ensuremath{\leftarrow}}
\newcommand{\bit}{\{0,1\}}
\newcommand{\rand}{\raisebox{-1pt}{\ensuremath{\,\xleftarrow{\raisebox{-1pt}{$\scriptscriptstyle\$$}}\,}}}
\newcommand{\bin}{\left\{0,1\right\}}
\newcommand{\KeyGen}{\ensuremath{\mathsf{KeyGen}}\xspace}
\newcommand{\Enc}{\ensuremath{\mathsf{Enc}}\xspace}
\newcommand{\Dec}{\ensuremath{\mathsf{Dec}}\xspace}
\newcommand{\poly}{\operatorname{poly}}
\newcommand{\algo}{\mathcal}
\newcommand{\inrand}{\raisebox{-1pt}{\ensuremath{\,\xleftarrow{\raisebox{-1pt}{$\scriptscriptstyle\$$}}\,}}}
\newcommand{\PRF}{\ensuremath{\mathsf{PRF}}\xspace}
\newcommand{\QPRF}{\ensuremath{\mathsf{QPRF}}\xspace}
\newcommand{\pqPRF}{\ensuremath{\mathsf{pqPRF}}\xspace}
\newcommand{\des}{\ensuremath{\mathsf{2desTag}}\xspace}
\newcommand{\onePRF}{\ensuremath{\mathsf{1des^{PRF}}}\xspace}
\newcommand{\adver}{\A}
\newcommand{\badver}{\B}
\newcommand{\chall}{\algo C}
\newcommand{\QUFforge}{\ensuremath{\mathsf{QUF\mbox{-}Forge}}\xspace}
\newcommand{\QUFcheat}{\ensuremath{\mathsf{QUF\mbox{-}Cheat}}\xspace}
\newcommand{\UFforge}{\ensuremath{\mathsf{UF\mbox{-}Forge}}\xspace}
\newcommand{\UFcheat}{\ensuremath{\mathsf{UF\mbox{-}Cheat}}\xspace}
\newcommand{\QAEreal}{\ensuremath{\mathsf{QAE\mbox{-}Real}}\xspace}
\newcommand{\QAEideal}{\ensuremath{\mathsf{QAE\mbox{-}Ideal}}\xspace}
\newcommand{\QINDCCAAtest}{\ensuremath{\mathsf{QCCA2\mbox{-}Test}}\xspace}
\newcommand{\QINDCCAAfake}{\ensuremath{\mathsf{QCCA2\mbox{-}Fake}}\xspace}
\newcommand{\INDCCAAtest}{\ensuremath{\mathsf{CCA2\mbox{-}Test}}\xspace}
\newcommand{\INDCCAAfake}{\ensuremath{\mathsf{CCA2\mbox{-}Fake}}\xspace}
\newcommand{\IND}{\ensuremath{\mathsf{IND}}\xspace}
\newcommand{\INDCPA}{\ensuremath{\mathsf{IND\mbox{-}CPA}}\xspace}
\newcommand{\INDCCA}{\ensuremath{\mathsf{IND\mbox{-}CCA1}}\xspace}
\newcommand{\INDCCAA}{\ensuremath{\mathsf{IND\mbox{-}CCA2}}\xspace}
\newcommand{\QIND}{\ensuremath{\mathsf{QIND}}\xspace}
\newcommand{\QINDCPA}{\ensuremath{\mathsf{QIND\mbox{-}CPA}}\xspace}
\newcommand{\QINDCCA}{\ensuremath{\mathsf{QIND\mbox{-}CCA1}}\xspace}
\newcommand{\QINDCCAA}{\ensuremath{\mathsf{QIND\mbox{-}CCA2}}\xspace}
\newcommand{\EUFCMA}{\ensuremath{\mathsf{EUF\mbox{-}CMA}}\xspace}
\newcommand{\INTCTXT}{\ensuremath{\mathsf{INT\mbox{-}CTXT}}\xspace}
\newcommand{\QUF}{\ensuremath{\mathsf{QUF}}\xspace}
\newcommand{\UF}{\ensuremath{\mathsf{UF}}\xspace}
\newcommand{\CAE}{\ensuremath{\mathsf{AE}}\xspace}
\newcommand{\QAE}{\ensuremath{\mathsf{QAE}}\xspace}
\newcommand{\QCA}{\ensuremath{\mathsf{QCA}}\xspace}
\newcommand{\cQCA}{\ensuremath{\mathsf{cQCA}}\xspace}
\newcommand{\CAEreal}{\ensuremath{\mathsf{AE\mbox{-}Real}}\xspace}
\newcommand{\CAEideal}{\ensuremath{\mathsf{AE\mbox{-}Ideal}}\xspace}
\newcommand{\DNS}{\ensuremath{\mathsf{DNS}}\xspace}
\newcommand{\cDNS}{\ensuremath{\mathsf{cDNS}}\xspace}
\newcommand{\acc}{\ensuremath{\mathsf{acc}}\xspace}
\newcommand{\crej}{\ensuremath{\mathsf{rej}}\xspace}
\newcommand{\rej}{\ensuremath{\mathsf{reject}}\xspace}
\newcommand{\win}{\ensuremath{\mathsf{win}}\xspace}
\newcommand{\cheat}{\ensuremath{\mathsf{cheat}}\xspace}
\newcommand{\real}{\ensuremath{\mathsf{real}}\xspace}
\newcommand{\ideal}{\ensuremath{\mathsf{ideal}}\xspace}
\title{Unforgeable Quantum Encryption%\thanks{This work appears in the proceedings of IACR EUROCRYPT 2018, Tel Aviv, Israel.}
}
\author{} \institute{}
\date{\today}
\author{Gorjan Alagic\inst{1,2} \and Tommaso Gagliardoni\inst{3} \and Christian Majenz\inst{4,5}}\institute{}
\institute{Joint Center for Quantum Information and Computer Science, University of Maryland, College Park, MD \and National Institute of Standards and Technology, Gaithersburg, MD\and IBM Research, Zurich, Switzerland\and Institute for Logic, Language and Computation, University of Amsterdam, Amsterdam, Netherlands\and Centrum for Wiskunde en Informatica, Amsterdam, Netherlands
	\\ \email{galagic@umd.edu}; \email{tog@zurich.ibm.com}; \email{c.majenz@uva.nl}}
\begin{document}

\maketitle

\begin{abstract}
We study the problem of encrypting and authenticating quantum data in the presence of adversaries making adaptive chosen plaintext and chosen ciphertext queries. Classically, security games use string copying and comparison to detect adversarial cheating in such scenarios. Quantumly, this approach would violate no-cloning. We develop new techniques to overcome this problem: we use entanglement to detect cheating, and rely on recent results for characterizing quantum encryption schemes. We give definitions for (i.) ciphertext unforgeability
, (ii.) indistinguishability under adaptive chosen-ciphertext attack, and (iii.) authenticated encryption. 
The restriction of each definition to the classical setting is at least as strong as the corresponding classical notion: (i) implies \INTCTXT, (ii) implies \INDCCAA, and (iii) implies \CAE. All of our new notions also imply \QINDCPA privacy. Combining one-time authentication and classical pseudorandomness, we construct symmetric-key quantum encryption schemes for each of these new security notions, and provide several separation examples. Along the way, we also give a new definition of one-time quantum authentication which, unlike all previous approaches, authenticates ciphertexts rather than plaintexts.

\end{abstract}

%%%%%%%%%%%%%%%%%%%%%%%%%
\section{Introduction}
%%%%%%%%%%%%%%%%%%%%%%%%%

Given the rapid development of quantum information processing, it is reasonable to conjecture that future communication networks will include at least some large-scale quantum computers and high-capacity quantum channels. What will secure communication look like on the resulting ``quantum Internet''? For instance, how will we transmit quantum messages securely over a completely insecure channel? One approach is via interactive and information-theoretically secure methods, e.g., combining entanglement distillation with teleportation. In this work, we will instead consider the non-interactive, highly efficient approach which dominates the current classical Internet. A natural goal here is to achieve, in the quantum setting, all the basic features that are enjoyed by classical encryption: (i.) a single small key suffices for transmitting an essentially unlimited amount of data, (ii.) these keys can be exchanged over public channels, and (iii.) the security guarantees are as strong as possible. Previous work has shown how to achieve both (i.) and (ii.), but only for secrecy against chosen-plaintext and non-adaptive chosen-ciphertext attacks~\cite{BJ15,ABF+16}. Authentication or adaptive chosen-ciphertext security for such schemes has, as yet, not been considered. In fact, at the time of writing, there is not even a definition for \emph{two-time quantum authentication}, much less for quantum analogues of \EUFCMA or \INDCCAA. The aim of this work is to address this problem.

The security definitions we seek do not yet exist due to a number of technical obstacles, all of which can be traced to quantum no-cloning and the destructiveness of quantum measurements. These obstacles make it difficult even just to formulate the basic security notion, much less to prove reductions or to construct secure schemes. In unforgeability, for example, no-cloning makes it impossible to record the adversary's queries and check whether the final output is a fresh forgery. In adaptive chosen-ciphertext security, no-cloning makes it impossible to record the challenge ciphertext and ensure that the adversary does not ``cheat'' by simply decrypting it (and thus win against any scheme). Moreover, due to the destructiveness of quantum measurement, it is unclear if one can \emph{both} perform cheat-detection \emph{and} answer non-cheating queries correctly. 

In this work, we overcome these obstacles, and present the first definitions of multiple-query unforgeability and adaptive chosen-ciphertext indistinguishability for quantum encryption schemes, thereby solving a longstanding open problem~\cite{ABF+16,BZ13,GHS16}. While our definitions are inherently quantum in nature, we are able to show that they are in fact natural analogues of well-known classical security definitions, such as \INTCTXT and \INDCCAA. The strongest security notion we define is called \emph{quantum authenticated encryption} (or \QAE) and corresponds to the strongest form of security normally studied in the classical setting. A secret-key scheme satisfying \QAE is unforgeable and indistinguishable even against adversaries that can make adaptive encryption and decryption queries. 

In an effort to explore this new landscape, we prove several theorems which relate our new notions to each other and to established quantum and classical security definitions. We also show how to satisfy each of our new security notions with explicit, efficient constructions. In particular, we show that combining a post-quantum secure pseudorandom function with a unitary $2$-design yields the strongest form of secret-key quantum encryption defined thus far, i.e., \QAE.

\subsubsection{Related Work.}
%%%%%%%%%%%%%%%%%%%%%%%%%

Computationally-secure quantum encryption has garnered significant interest in the past few years, beginning with basic security notions  like \QINDCPA and \QINDCCA~\cite{BJ15,ABF+16}, and then with more advanced concepts such as quantum fully-homomorphic encryption (QFHE) ~\cite{BJ15,DSS16}. For authentication, uncloneability, and non-malleability, the one-time setting has received considerable attention~(see, e.g., \cite{ABW09,DNS12,HLM16,GYZ17,AM17,BW16,Portmann17,Gottesman03}.)  We will make use of the authentication definition of~\cite{DNS12}, a characterization lemma of~\cite{AM17}, and a simulation adversary of~\cite{BW16}. %The body of work on classical unforgeability and adaptive chosen-ciphertext security is too extensive to outline here. With few exceptions (which we will mention explicitly when necessary), we will require only what is standard in textbooks on the subject (see, e.g.,~\cite{KL14}.)
For classical notions of unforgeability and chosen-ciphertext security, see e.g.~\cite{KL14}.

\subsection{Our approach}
%%%%%%%%%%%%%%%%%%%%%%%%%

\subsubsection{The problem.}\label{sec:problem}
%%%%%%%%%%%%%%%%%%%%%%%%%

We begin by outlining the technical difficulties in some further detail. 
Let us consider many-time authentication for symmetric-key encryption schemes first.
In the classical setting, secure many-time authentication is defined in terms of \emph{unforgeability}. A scheme is unforgeable if no adversary, even if granted the black-box power to authenticate with our secret key, can generate a fresh and properly authenticated message (i.e., a forgery). Translating this idea to the quantum setting presents immediate technical difficulties. First, no-cloning prevents us from recording the adversary's previous queries. Second, even if the first problem is surmounted, the nature of measurement might make it difficult to reliably identify whether the adversary's output is indeed fresh. For example, we might need many copies of the adversary's query, as well as many copies of their final output. 

A similar problem occurs for secrecy. The current state-of-the-art is the so-called \QINDCCA model. In this model, the transmitted state (the ``challenge'') remains secret even to adversaries with the black-box power to both encrypt and non-adaptively decrypt with our secret key. Our experience in the classical world tells us that this model is too weak, because real-world adversaries can sometimes gain \emph{adaptive} access to decryption (e.g., in WEP and early versions of SSL~\cite{Barak-notes-ch6}.) Classically, this is addressed using the so-called \INDCCAA model, where the adversary is allowed adaptive decryption queries \emph{but cannot use them on the challenge} (without this caveat, security becomes impossible). Here again, the quantum setting presents numerous technical difficulties: no-cloning prevents us from recording the challenge, and the nature of measurement makes it difficult to tell if the adversary is attempting to decrypt the challenge.

Recall that the strongest form of classical security, so-called ``authenticated encryption'' (or \CAE) is defined to be \INDCCAA together with unforgeability of ciphertexts~\cite{KL14}. Achieving a comparable quantum notion thus seems to require solving all of the above problems.

Using classical intuition, one might attempt a solution as follows: consider only pure-state plaintexts, and demand that the final forgery is orthogonal to the previous queries (or, in $\textsf{CCA2}$, that decryption queries are orthogonal to the challenge). This may seem promising at first, but a closer look reveals numerous issues; for example: (i.) quantum states are in general not pure, and may include side registers kept by the adversary, (ii.) this idea charges the adversary with adhering to very strict demands, contrary to good theory practice, (iii.) checking whether a particular adversary satisfies the demands cannot be done efficiently.

\subsubsection{A promising approach.}\label{sec:approach}
%%%%%%%%%%%%%%%%%%%%%%%%%

We now describe a more promising solution, beginning with unforgeability. We will express security in terms of the performance of adversaries $\algo A$ in two games: (1.) \textsf{F-Real}, where $\algo A$ gets oracle access to $\Enc_k$ and wins if he outputs {\em any} valid ciphertext, and (2.) \textsf{F-Cheat}, where we attempt to ascertain if $\algo A$ is cheating by feeding us an output of the oracle. How do we detect this kind of cheating? Recall that, even in the one-time setting, quantum authentication implies indistinguishability of ciphertexts. A consequence of this is that, whenever $\algo A$ performs an encryption query on a certain plaintext state, we are free to respond with an encryption \emph{of a different state} -- for example, half of a maximally-entangled state. This will be our approach: we prepare an entangled pair $\ket{\phi^+}_{MM'}$, apply $\Enc_k$ to register $M$, give the resulting ciphertext register to $\algo A$, and keep $M'$. When the game ends, we decrypt the output of $\algo A$ into a register $O$, and then perform the measurement $\{\Pi_{\phi^+}, \one - \Pi_{\phi^+}\}$ on $OM'$. We then declare that $\algo A$ is cheating if and only if the first outcome is recorded. 

This idea can also be applied to the multiple-query setting. There, we respond to the $j$th query with an encryption of register $M$ of $\ket{\phi^+}_{MM_j}$, and save $M_j$; at the end of the game, we perform the aforementioned measurement on $OM_j$ for all $j$ and declare that $\algo A$ cheated if any of them return the first outcome.

To define a quantum analogue of \INDCCAA, we can try a similar strategy. We again compare the performance of $\algo A$ in two games: (1.) \textsf{C-Real}, which is just like the classical \INDCCAA game, except with no restrictions on $\algo A$'s use of the $\Dec_k$ oracle, and (2.) \textsf{C-Cheat}, where we again attempt to detect cheating. In \textsf{C-Cheat}, when the adversary sends us the challenge plaintext, we discard it and respond with the ciphertext register of $(\Enc_k \otimes \one_{M'})\ket{\phi^+}_{MM'}$ instead, while keeping $M'$ to ourselves. Whenever $\algo A$ queries the decryption oracle, we first apply $\Dec_k$ and place the resulting plaintext in a register $O$. Then we apply the measurement $\{\Pi_{\phi^+}, \one - \Pi_{\phi^+}\}$ to $OM'$ to see if the adversary is cheating. If we get the first outcome, we declare that $\algo A$ cheated.

The above ideas do lead to reasonable security definitions, which (at least partly) fulfill our original goals. However, they suffer from a number of drawbacks. First, repeated measurement of the plaintext requires the use of a so-called ``gentle measurement lemma''~\cite{Winter99}, and thus can only apply to large plaintext spaces%\footnote{We are thankful to an anonymous referee for pointing this out in a review of a previous work of ours.}
 (e.g., $n^c$ qubits for $c > 0$). Second, they only offer \emph{plaintext authentication} and a kind-of \emph{plaintext CCA security}; modification of ciphertexts (that does not also modify the underlying plaintext) cannot be detected. Our classical experience tells us that this is insufficient, and that we should demand impossibility of any ciphertext manipulation whatsoever. Addressing these problems is where many of our new technical contributions (in addition to the above ideas) are needed. While our actual approach will be different, and more sophisticated techniques are required, we will still follow the spirit of the idea outlined above.

\subsection{Summary of Results}
%%%%%%%%%%%%%%%%%%%%%%%%%

Recall that, in the setting of quantum data, copying is impossible and authentication implies encryption~\cite{BCG+02}. In particular, there is no direct quantum analogue of a MAC. As a result, the central objects of study in our work will be symmetric-key quantum encryption schemes, or \SKQES for short, but our results on quantum CCA2 security %adaptive chosen-ciphertext security 
carry over to the public-key setting as well.

\subsubsection{Quantum ciphertext authentication.}
%%%
All previous definitions of authentication for quantum data allow manipulation of the ciphertext (see \expref{Section}{sec:dns}), thus only authenticating the plaintext state. In our first main contribution, we solve this problem, laying the necessary groundwork for our remaining results.
\begin{itemize}
\item We give a new definition: information-theoretic \emph{quantum one-time ciphertext authentication} (\QCA), inspired by ideas of~\cite{AM17,BW16}.
\item We prove that \QCA is a strengthening of ``\DNS''-authentication~\cite{DNS12}.
\begin{theorem}[informal]
	If a \SKQES authenticates ciphertexts \emph{(\QCA)}, then it also authenticates plaintexts \emph{(\DNS)}; in particular, it satisfies secrecy \emph{(\QIND)}.
\end{theorem}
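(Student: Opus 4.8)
The plan is to establish the chain \QCA $\Rightarrow$ \DNS $\Rightarrow$ \QIND, with the first implication carrying essentially all of the content. Both authentication notions are phrased as the assertion that a single \emph{real} process---encrypt the plaintext, hand the ciphertext register to the adversary \A (who also controls a side register), then run \Dec on \A's output and record an accept/reject flag---is negligibly close, in the relevant distance and over all inputs including a purifying reference, to an \emph{ideal} process that decomposes into an accept branch and a reject branch. The only difference between the two notions lies in the accept branch: the \DNS ideal requires merely that the recovered \emph{plaintext} be left intact (with some fixed completely positive map applied to \A's register and the reference untouched), whereas the \QCA ideal imposes the stronger requirement that the entire \emph{ciphertext} be left intact. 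My first step is therefore to write both guarantees over the common real process $\Dec \circ \A \circ \Enc$, reducing the problem to a comparison of the two families of ideal channels.

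The core of the argument is that every \QCA ideal is already a legal \DNS ideal. By correctness of the scheme, if the accept branch of the \QCA ideal leaves the ciphertext register invariant, then composing with \Dec recovers the original plaintext exactly, together with its correlations to the reference; hence the \QCA accept branch, read off at the plaintext level, has precisely the form the \DNS accept branch demands. The reject branches of the two ideals coincide by definition. Consequently the ideal channel witnessing \QCA security \emph{is} one of the admissible \DNS ideal channels, and since the real process is identical in both games, $\eps$-closeness to the \QCA ideal is literally $\eps$-closeness to a valid \DNS ideal. If instead the definitions are set up so that \QCA stops at the ciphertext while \DNS continues through \Dec, I would obtain the same conclusion by invoking monotonicity of the distance (trace distance, equivalently the diamond norm) under the CPTP map \Dec: applying \Dec to both the real and ideal \QCA channels cannot increase their separation, and \Dec sends the \QCA ideal onto the \DNS ideal.

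I expect the main obstacle to be the bookkeeping around the accept/reject decomposition and the side and reference registers, rather than any deep inequality. Concretely, each ideal is a sum of two completely positive, non-trace-preserving branches, so I must verify that the accept sub-channels agree \emph{after} decryption and that the fixed completely positive map the \QCA ideal applies to \A's register is admissible as a \DNS side-effect map---which it is, since \DNS permits an arbitrary such map. Reconciling how the two definitions quantify over purifying references, and using the characterization of quantum authentication from~\cite{AM17} to place both ideals in a common normal form, is where the care will be needed.

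Finally, the secrecy claim \DNS $\Rightarrow$ \QIND requires no fresh argument: plaintext authentication is already known to imply encryption in the quantum setting~\cite{BCG+02,DNS12}, so \QIND follows at once once \DNS is established. Alternatively it can be read directly off the same~\cite{AM17} characterization, since any \DNS-authenticating scheme must map every plaintext to a ciphertext negligibly close to a fixed, plaintext-independent state.
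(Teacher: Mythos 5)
Your proposal is correct and follows essentially the same route as the paper: in the paper's formulation, condition \eqref{eq:normal-constraint} of the \QCA definition \emph{is} literally the \DNS condition, with \QCA merely pinning down the accept map $\Lambda^{\acc}$ to a specific simulator form, so \QCA $\Rightarrow$ \DNS holds by inspection (your observation that the \QCA ideal is an admissible \DNS ideal, with no need for the correctness/monotonicity detour you hedge with), and \DNS $\Rightarrow$ \QIND is then cited from prior work~\cite{BCG+02,GYZ17}, exactly as you do.
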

\item We define computational-security (one-time) analogues: \cQCA and \cDNS.
\end{itemize}

\subsubsection{Quantum unforgeability.}
%%%
In this setting, the adversary is granted access to an encryption oracle, and must generate a valid ``fresh'' ciphertext. 
\begin{itemize}
\item We give a new definition: \emph{quantum unforgeability} (\QUF), combining ideas of~\expref{Section}{sec:approach} and~\cite{AM17}. We also define a bounded-query analogue ($t$-\QUF).
\item We show that \UF, the classical analogue of \QUF, is remarkably strong.
\begin{theorem}[informal]
For classical schemes, $\UF \iff \CAE$.
\end{theorem}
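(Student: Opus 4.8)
The plan is to prove the equivalence by identifying $\CAE$ with its standard decomposition and matching each half to a feature of the classical restriction of $\QUF$. Recall that classical authenticated encryption is $\CAE \equiv \INTCTXT \wedge \INDCPA$, and that by the Bellare--Namprempre theorem $\INTCTXT \wedge \INDCPA \Rightarrow \INDCCAA$, so $\CAE$ also equals $\INTCTXT \wedge \INDCCAA$. It therefore suffices to show $\UF \Leftrightarrow \INTCTXT \wedge \INDCPA$. Throughout I read the classical $\UF$ as the faithful restriction of $\QUF$: the adversary $\A$ queries the encryption oracle, and security compares a \emph{forge} game, in which the oracle honestly encrypts $\A$'s chosen messages, with a \emph{cheat} game, in which each query is answered by $\Enc_k$ of an independent uniform message while the challenger retains that message as a reference; in both games $\A$ outputs one ciphertext, which is decrypted and tested against the retained references. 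The classical specialization of the entanglement test $\{\Pi_{\phi^+},\one-\Pi_{\phi^+}\}$ is exactly this reference-matching test, and $\UF$ asks that $\A$'s success probabilities in the two games differ only negligibly.

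For $\CAE \Rightarrow \UF$ I would first use $\INDCPA$ to pass from the forge game to the cheat game: a standard $q$-query hybrid shows that, since encryptions of $\A$'s chosen messages are indistinguishable from encryptions of independent uniform messages, $\A$'s entire view, and hence the distribution of its final output ciphertext, is computationally identical in the two games. Consequently the forge and cheat success probabilities can differ only when $\A$ outputs a valid ciphertext whose decryption is uncorrelated with every reference, i.e.\ a genuinely \emph{fresh} valid ciphertext. But this is precisely the event ruled out, up to $\negl$, by $\INTCTXT$. Combining the two bounds gives $\UF$.

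The reverse direction $\UF \Rightarrow \CAE$ is the substantive one and splits in two. Showing $\UF \Rightarrow \INTCTXT$ is direct: an $\INTCTXT$ forger outputting a fresh valid ciphertext induces a $\UF$ adversary whose forge-minus-cheat advantage is exactly its forgery probability, since a fresh ciphertext is counted as a win in the forge game yet, being uncorrelated with the re-randomized references, is not flagged as a replay in the cheat game. The remarkable step is $\UF \Rightarrow \INDCPA$. Here the two games differ \emph{only} in whether the oracle encrypts $\A$'s chosen messages or independent uniform ones, so an $\INDCPA$ distinguisher $\D$ is exactly a device that tells the forge world from the cheat world; I would feed $\D$ the oracle transcript and design the reduction's output so that a counted win occurs in one world but not the other, producing a non-negligible forge-minus-cheat gap and contradicting $\UF$. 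Assembling $\INTCTXT \wedge \INDCPA$ and invoking Bellare--Namprempre to upgrade to $\INDCCAA$ then yields $\CAE$.

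The main obstacle is precisely the $\UF \Rightarrow \INDCPA$ step. In the quantum setting privacy comes for free, because quantum authentication implies ciphertext indistinguishability; classically there is no such implication — a scheme that appends a message-authentication tag to the plaintext in the clear is unforgeable yet leaks its plaintext — so the secrecy must be extracted from the definitional structure of $\UF$ alone. The delicate point is to arrange the reduction so that the distinguisher genuinely \emph{separates} the forge and cheat success probabilities rather than affecting both symmetrically; this requires choosing carefully what the reduction outputs conditioned on $\D$'s verdict, so that the reference-matching test fires in exactly one of the two worlds. Secondary care is needed to confirm that the classical specialization of the $\{\Pi_{\phi^+},\one-\Pi_{\phi^+}\}$ measurement coincides with reference matching, and to absorb the $1/|\mathcal{M}|$ collision terms from the union bound over the $q$ queries into the final negligible error.
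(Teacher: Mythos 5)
Your proof founders on the definition of the classical restriction itself. You specialize only the entanglement test $\{\Pi_{\phi^+},\one-\Pi_{\phi^+}\}$, so your cheat game flags a cheat by decrypting $\adver$'s output and comparing the result to the retained reference \emph{plaintexts}. The paper's restriction is ciphertext-level: the quantum game \QUFcheat applies the randomness test $\{\Pi_{k,r},\one-\Pi_{k,r}\}$ to the tag register \emph{before} the entanglement test, and since classical encryption is a deterministic function of message and randomness, the two tests together collapse to checking whether the output ciphertext \emph{equals} one of the stored ciphertexts $C_j$, with no decryption at all --- this is exactly how the paper defines \UFcheat. The difference is not cosmetic: under your plaintext-matching reading the theorem is false. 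Take any \CAE-secure scheme and append to each ciphertext one bit that decryption ignores. The modified scheme is not \INTCTXT (hence not \CAE), since flipping that bit yields a fresh valid forgery. Yet it satisfies your version of \UF: any plaintext-preserving mauling decrypts to the retained reference and is therefore flagged as cheating, so it opens no forge-versus-cheat gap, while any output whose underlying ciphertext part is fresh is valid only with negligible probability, by the \INTCTXT of the unmodified scheme combined with your own hybrid step for moving between the two worlds. This is precisely the plaintext-versus-ciphertext authentication distinction (\DNS versus \QCA) that motivates the paper, and it is why your step ``$\UF \Rightarrow \INTCTXT$ is direct'' fails as stated.

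Two further remarks. First, even with the correct ciphertext-equality restriction, your claim that an \INTCTXT forger's forge-minus-cheat advantage is ``exactly its forgery probability'' is wrong for forgers that sometimes replay oracle answers: replays win \UFforge and may also trigger cheats in \UFcheat, and the two contributions can cancel. The paper's reduction therefore wraps the forger, recording the list of oracle answers and substituting a random ciphertext whenever the forger's output appears on that list; only then is the gap bounded below by $\nu-\negl(n)$. Second, on the route: your $\UF\Rightarrow\CAE$ direction (via \INTCTXT, \INDCPA, and Bellare--Namprempre, with \INDCPA obtained by converting the distinguisher into a forge/cheat gap) coincides with the paper's, but your $\CAE\Rightarrow\UF$ direction is genuinely different --- the paper invokes Shrimpton's real/ideal ($\mathsf{IND\mbox{-}CCA3}$) characterization of \CAE and runs a single reduction that tests the forger's output on the restricted decryption oracle, whereas your hybrid-plus-freshness decomposition would also work once the restriction is fixed, and has the merit of making explicit which assumption absorbs which part of the gap, at the price of a $q$-query hybrid loss.
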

\end{itemize}

\subsubsection{Quantum chosen-ciphertext security.}
%%%
We address the longstanding problem of defining quantum security under adaptive chosen-ciphertext attack~\cite{ABF+16,BZ13,GHS16}; the state of the art was previously the non-adaptive \QINDCCA~\cite{ABF+16}.
\begin{itemize}
\item We give a new definition: \emph{quantum indistinguishability under adaptive chosen-ciphertext attack}\\ (\QINDCCAA), using all of the aforementioned ideas.
\item We relate \QINDCCAA to existing security notions.
\begin{theorem}[informal]
\begin{enumerate}
\item For quantum schemes, $\QINDCCAA \implies \QINDCCA$.
\item The classical analogue of $\QINDCCAA$ is equivalent to classical \INDCCAA.
\end{enumerate}
\end{theorem}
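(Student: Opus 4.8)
The plan is to prove the two items separately, in both cases exploiting the fact that the classical and the non-adaptive notions arise as \emph{restrictions} of the adaptive quantum game. For item~1, I would argue that \QINDCCA is precisely the special case of \QINDCCAA in which the adversary is forbidden from issuing decryption queries \emph{after} receiving the challenge. Concretely, given any \QINDCCA adversary $\algo A$, I would build a \QINDCCAA adversary $\algo B$ that runs $\algo A$ verbatim: $\algo B$ forwards $\algo A$'s pre-challenge encryption and decryption queries to its own oracles, relays the challenge, and then continues simulating $\algo A$'s (encryption-only) post-challenge phase before outputting $\algo A$'s guess. The key observation is that the cheat-detection machinery of \QINDCCAA --- the entangled reference register $M'$ and the measurement $\{\Pi_{\phi^+}, \one - \Pi_{\phi^+}\}$ --- is invoked only on decryption queries; since $\algo B$ makes none after the challenge, cheat-detection never fires, and the \textsf{C-Real} and \textsf{C-Cheat} games collapse onto the corresponding \QINDCCA games. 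Hence $\algo B$'s \QINDCCAA advantage equals $\algo A$'s \QINDCCA advantage, and security transfers. The one point needing care is reconciling the replacement of the real challenge by the ciphertext half of $\ket{\phi^+}_{MM'}$ in \textsf{C-Cheat} with the \QINDCCA challenge; with no post-challenge decryption, this is absorbed by the one-time ciphertext indistinguishability (\QIND) already guaranteed by \QCA.

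For item~2, I would prove both implications between classical \INDCCAA and the classical restriction of \QINDCCAA. The heart of the argument, and the step I expect to be the main obstacle, is to show that the quantum cheat-detection mechanism, evaluated on classical (computational-basis-diagonal) data, coincides exactly with the classical \INDCCAA restriction that the adversary may not submit the challenge ciphertext to the decryption oracle. I would establish this by noting that the maximally-entangled challenge $\ket{\phi^+}_{MM'}$, restricted to classical inputs, becomes a uniformly random plaintext recorded identically in $M$ and in the reference register $M'$, and that the projector $\Pi_{\phi^+}$ then acts as an equality test between the decrypted output register $O$ and $M'$. By correctness of decryption, this test fires precisely when the adversary's decryption query carries the challenge ciphertext --- exactly the classically forbidden query.

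With this correspondence in hand, both directions follow. In the forward direction, any classical \INDCCAA adversary, lifted to classical-data inputs, never submits the challenge ciphertext and hence never triggers cheat-detection, so \textsf{C-Real} and \textsf{C-Cheat} reduce to the single classical \INDCCAA game and the advantages match. In the reverse direction, I would run a given classical \INDCCAA adversary inside the classical restriction of the \QINDCCAA games and invoke the equivalence of advantages established above. The remaining bookkeeping is to reconcile the two-game (\textsf{C-Real} versus \textsf{C-Cheat}) distinguishing formulation of \QINDCCAA with the single-game guess-the-bit formulation of \INDCCAA, and to handle randomized classical encryption correctly in the reference-register equality-test argument; I expect these to be routine once the cheat-detection correspondence is fixed.
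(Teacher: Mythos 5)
Your item~1 argument is essentially the paper's: run the \QINDCCA adversary unchanged as a \QINDCCAA adversary; since it makes no post-challenge decryption queries, cheat detection never fires, so the \QINDCCAAfake experiment outputs \cheat with probability exactly $1/2$ (the challenger's final coin flip), while \QINDCCAAtest faithfully reproduces the \QINDCCA game. One correction: your appeal to \QIND ``guaranteed by \QCA'' to reconcile the fake challenge is both unnecessary and unfounded --- the scheme under attack is not assumed \QCA, and no indistinguishability is needed, because the fake game's cheat probability is pinned at $1/2$ by the coin flip regardless of what the adversary sees or does. Your ``forward'' direction of item~2 (a classical \INDCCAA adversary never triggers cheat detection, so its advantage transfers) likewise matches the paper.

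The genuine gap is the other direction of item~2: from an adversary $\adver$ against the classical restriction of \QINDCCAA to an adversary against \INDCCAA. As written, your ``reverse direction'' just re-runs a classical \INDCCAA adversary inside the restricted quantum games, i.e., it restates the forward direction; but the adversary you must start from may query the decryption oracle \emph{on the challenge ciphertext} --- this is allowed in the test game (it only inflates its cheat probability in the fake game), yet it is forbidden in \INDCCAA, so $\adver$ cannot simply be ``run'' there, and your cheat-detection correspondence does not tell you how to remove such queries without destroying the advantage. The paper's missing ingredient is the \emph{self-checking} adversary $\adver'$, which copies the challenge ciphertext, intercepts any decryption query equal to it, and aborts with a uniform guess, together with the probability bookkeeping you dismissed as routine: the test and fake games are identical up to the first cheating query, so $\Pr[\text{cheat}]$ agrees in both; cheating forces output \cheat in the fake game while non-cheating runs end in the coin flip, giving $\Pr[\text{win fake}\wedge\text{cheat}]=\Pr[\text{cheat}]$ and $\Pr[\text{win fake}\wedge\neg\text{cheat}]=\tfrac12\Pr[\neg\text{cheat}]$; combining these with the hypothesis yields $\Pr[\adver\text{ wins test}\wedge\neg\text{cheat}]\geq\tfrac12\Pr[\neg\text{cheat}]+\nu$, hence $\adver'$ wins \INDCCAA with probability at least $\tfrac12+\nu$. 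This cancellation of the cheating component is the substance of the equivalence, and nothing in your proposal supplies it. A secondary imprecision: the quantum cheat detection also measures the randomness/tag register via $\Pi_{k,r}$, not only $\Pi_{\phi^+}$; a plaintext-equality test alone would also fire on a mauled (e.g., re-randomized) ciphertext that decrypts to the challenge plaintext, so by itself it does not correspond to the ciphertext-equality restriction the paper actually uses for the classical games.
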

\end{itemize}

\subsubsection{Quantum authenticated encryption.}
%%%
In our main contribution, we define a natural quantum analogue of the classical concept of \emph{authenticated encryption} (\CAE). All previous quantum security notions lacked both unforgeability and adaptive chosen-ciphertext security.
\begin{itemize}
\item We give a new definition: \emph{quantum authenticated encryption} (\QAE), combining the ideas of \expref{Section}{sec:approach}, the notion of \QCA, and a real/ideal approach~\cite{Shrimpton04}.
\item We give evidence that \QAE is indeed the correct quantum analogue of \CAE.
\begin{theorem}[informal]\label{thm:implications-informal} 
~
\begin{enumerate}
\item Unforgeability and secure authentication: $\QAE \implies \QUF \wedge \cQCA$. 
\item Chosen-ciphertext security: $\QAE \implies \QINDCCAA$.
\item The classical analogue of \QAE is equivalent to classical \CAE.
\end{enumerate}
\end{theorem}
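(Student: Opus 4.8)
The plan is to leverage the real/ideal structure of \QAE. Recall that \QAE asserts the existence of an efficient simulator $\mathcal S$ rendering the real experiment \QAEreal (in which the adversary has genuine \Enc and \Dec oracles) indistinguishable from the ideal experiment \QAEideal (in which the adversary interacts only with $\mathcal S$). Two features of the ideal functionality will be used throughout: \textbf{(a)} the ciphertexts handed to the adversary are produced independently of the queried plaintexts, which $\mathcal S$ keeps in a private register; and \textbf{(b)} every decryption query on a ciphertext that $\mathcal S$ did not itself generate is rejected. Each game-based target is reached by the same template: run the game adversary against the \QAE oracles, observe that under \QAEreal this reproduces the relevant ``real'' game exactly, argue that under \QAEideal the adversary's advantage collapses to at most a negligible quantity, and conclude that the game advantage is bounded by the \QAE distinguishing advantage.

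For part~1, let $\mathcal A$ be a \QUF adversary, whose advantage compares \QUFforge with \QUFcheat. First I would embed $\mathcal A$ in the \QAE experiment, answering its encryption queries with the \QAE encryption oracle and, when it halts, feeding its candidate forgery to the \QAE decryption oracle. Under \QAEreal this is precisely \QUFforge; under \QAEideal, feature~\textbf{(b)} forces any fresh ciphertext to be rejected, so $\mathcal A$'s success probability matches the \QUFcheat baseline up to negligible terms, and the \QUF advantage is at most the \QAE advantage. For \cQCA I would instead restrict the \QAE adversary to a single encryption query; the one-query restriction of \QAEideal coincides, up to relabelling of registers, with the ideal behaviour demanded by \cQCA, so any \cQCA adversary yields a single-query \QAE distinguisher. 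The care needed here is to verify that the entanglement-based cheat detection of \QUFcheat is faithfully implemented by feature~\textbf{(b)}, which forces $\mathcal S$ to track exactly the set of ciphertexts it has produced.

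For part~2, I would take a \QINDCCAA adversary playing \QINDCCAAtest against \QINDCCAAfake and again run it against the \QAE oracles, using the encryption oracle to produce the challenge and the decryption oracle for its adaptive queries. In \QAEideal, feature~\textbf{(a)} makes the challenge ciphertext statistically independent of the underlying plaintext, so the two challenges are perfectly indistinguishable, while feature~\textbf{(b)} rejects exactly the decryption queries that ``touch'' the challenge, which is what the cheat detection in \QINDCCAA is designed to catch. The ideal-world advantage is therefore zero, and the \QINDCCAA advantage is bounded by the \QAE advantage.

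For part~3, the forward direction (classical \QAE $\implies$ \CAE) is immediate from the previous two parts specialized to classical schemes and adversaries: part~1 already yields classical \UF, and the earlier theorem $\UF \iff \CAE$ then closes it. The reverse direction (\CAE $\implies$ classical \QAE) is where the actual work lies, and I expect it to be the main obstacle. Here I would construct the ideal simulator explicitly from the classical guarantees: it answers encryption queries by encrypting a fixed dummy message (hiding the true plaintext, justified by \INDCCAA), records every ciphertext it outputs, and answers decryption queries by returning the stored plaintext for recorded ciphertexts while rejecting all others (justified by \INTCTXT). A two-step hybrid argument, first replacing genuine decryptions of unrecorded ciphertexts by rejection and then replacing genuine ciphertexts by dummy encryptions, shows $\QAEreal \approx \QAEideal$; this is the classical equivalence between real/ideal authenticated encryption and $\INDCCAA \wedge \INTCTXT$, which I would invoke in the style of~\cite{Shrimpton04}. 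The delicate point, and the crux of the whole theorem, is to make the bookkeeping of our \QAEideal functionality line up precisely with classical ciphertext integrity so that the hybrids telescope cleanly.
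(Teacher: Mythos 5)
Your overall template---run the game adversary against the \QAE oracles, note that \QAEreal reproduces the real game exactly, and argue that the advantage collapses in \QAEideal---is precisely the paper's strategy, and your part 1 (both the \QUF reduction and the identification of the one-query ideal interaction with the \cQCA simulator) matches the paper's proofs. Part 3 also works: the paper proves the classical equivalence $\CAE'\iff\CAE$ directly by a wrapper-simulation argument (recording plaintext/ciphertext pairs and translating between the two ideal decryption behaviours), whereas you detour through $\UF\iff\CAE$ and an $\INDCCAA\wedge\INTCTXT$ hybrid; both are fine classically, where copying is free. The genuine gap is in part 2. Your feature (b) correctly says that \emph{fresh} ciphertexts are rejected in the ideal world, but you then apply it as ``(b) rejects exactly the decryption queries that touch the challenge,'' which is backwards: in \QAEideal the decryption oracle rejects ciphertexts that do \emph{not} match a stored encryption query, while on a \emph{replayed} challenge ciphertext it passes the cheat-detection measurements and returns the stored plaintext register $M$. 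Consequently the claim that ``the ideal-world advantage is therefore zero'' is false. An adversary who replays the challenge in the ideal world receives back either its original challenge plaintext (if your reduction's bit is $b=0$) or the maximally mixed state (if $b=1$), and these are distinguishable---e.g., by an entanglement test with the side register $S$, or simply by measuring when the submitted challenge was $\egoketbra{0}$. The ideal-world probability of guessing $b$ can thus be as large as $1/2+\frac12\Pr[\cheat]$, not $1/2$. This is exactly why \QINDCCAA is defined by comparing the win probability in \QINDCCAAtest against the \emph{cheat} probability in \QINDCCAAfake (which is at least $1/2$ by the final coin flip), rather than against $1/2$; the paper's proof of \expref{Theorem}{thm:QAECCA2} bounds the ideal-world quantity by $\Pr[\QINDCCAAfake\to\cheat]$ for precisely this reason. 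Your argument can be repaired by making that comparison (and relating the cheating probabilities in the two ideal-type games), but as written the central step of part 2 fails.

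A secondary omission: the implication $\QAE\Rightarrow\cQCA$ is proved in the paper only for schemes satisfying \expref{Condition}{con:efficient}, i.e., when the decomposition of \expref{Corollary}{cor:randomized-isometry} (sampling $p_k$, preparing $\ket{\psi^{(k,r)}}$, implementing $V_k$) is efficient, so that the single-encryption-query ideal interaction is literally the \cQCA simulator and all parties remain QPT. The paper flags this caveat explicitly for the informal theorem; your proposal assumes it silently. Relatedly, \QAE in the paper is not an existential-simulator definition: the ideal challenger is a fixed algorithm determined by the scheme, which is what makes the identification with the \cQCA and \QUFcheat functionalities immediate rather than something one must extract from an arbitrary simulator.
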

\end{itemize}
\begin{figure}[h]
\centering
\includegraphics[width=0.80\textwidth]{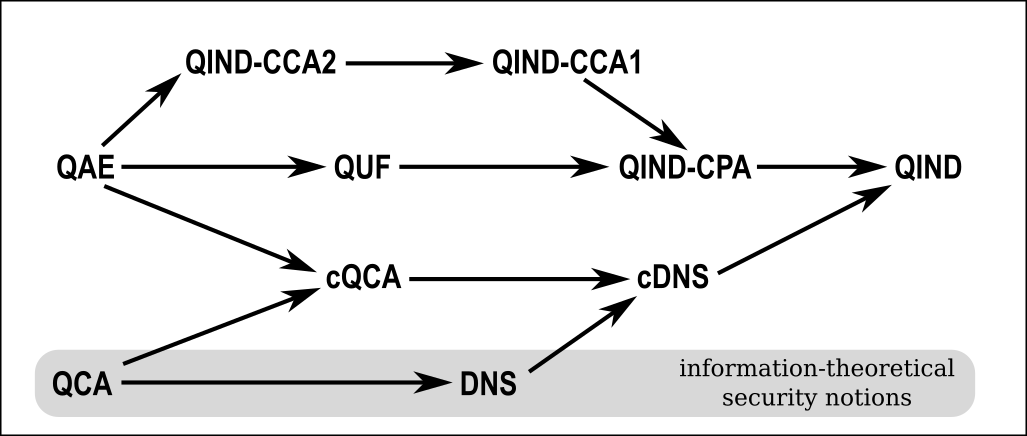}
\caption{Implications between quantum security notions}
\label{fig:impl}
\end{figure}
The new notions and connections we develop are summarized in \expref{Figure}{fig:impl}.

\subsubsection{Constructions and separations.}
%%%

Our new constructions combine a \SKQES $\Pi$ with a classical keyed function family $f$ to build a new \SKQES $\Pi^f$, as follows. In $\Pi^f$, key generation outputs a key for $f$; to encrypt a state $\rho$, we generate a random $r$ and output $(r, \Enc^\Pi_{f_k(r)}(\rho))$.  For example, if $\Pi$ is the quantum one-time pad and $f$ is a \pqPRF (i.e, a post-quantum-secure pseudo-random function), then $\Pi^f$ is the \INDCCA-secure scheme from~\cite{ABF+16}. We will also need the standard one-time authentication scheme $\des$, defined by $\Enc_k: \rho \mapsto C_k (\rho \otimes \egoketbra{0^n}) C_k^\dagger$ where $C$ is an (exact or approximate) unitary two-design. 

\begin{theorem}[informal]
Let $\Pi$ be a $\des$ scheme, let $f$ be a \pqPRF, and let $g$ be a $t$-wise independent classical function family. Then
\begin{enumerate}
\item $\Pi$ is one-time ciphertext authenticating (\QCA).
\item $\Pi^g$ is $t$-time quantum unforgeable ($t$-\QUF).
\item $\Pi^f$ satisfies quantum authenticated encryption (\QAE); in particular, it  is quantum unforgeable (\QUF) and chosen-ciphertext secure (\QINDCCAA).
\end{enumerate}
\end{theorem}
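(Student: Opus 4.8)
The plan is to treat the three parts in order of increasing difficulty, using Part~1 as the one-time building block for Parts~2 and~3 via hybrid arguments that replace the keyed function by an idealized source of fresh, independent keys. For Part~1, I would establish that the \des scheme is \QCA by exploiting the defining property of a unitary two-design: any quantity that is quadratic in $C_k$, once averaged over $k$, agrees with its Haar average. The valid ciphertexts are exactly those states lying in $C_k(\Hi_M \otimes \ketbra{0^n}{0^n})C_k^\dagger$, and decryption applies $C_k^\dagger$, measures the $n$-qubit tag against $\ket{0^n}$, and accepts on outcome $0^n$. To obtain ciphertext authentication I would bound the probability that an adversary-chosen ciphertext register, possibly entangled with a side register, is simultaneously accepted and differs from the honestly produced ciphertext; writing the tampering as a fixed channel and invoking the two-design (equivalently, a Pauli/Weingarten average over $C_k$), the accept probability of any component orthogonal to the unique honest ciphertext collapses to $O(2^{-n})$, which is precisely the one-time simulation guarantee required by \QCA. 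The \DNS and \QIND consequences then follow from the implication proven earlier.

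For Part~2, I would argue that across at most $t$ invocations the scheme $\Pi^g$ is statistically identical to a sequence of independent one-time \des instances. The challenger samples a fresh uniform $r$ for each encryption, so the at most $t$ distinct randomness values appearing in the game are distinct except with negligible probability; conditioned on distinctness, $t$-wise independence of $g$ makes the corresponding keys jointly uniform and independent, exactly as if each query (and the forgery point) used a fresh one-time key. A candidate forgery $(r^\ast, c^\ast)$ then splits into two cases. If $r^\ast$ coincides with a queried $r_i$, the forgery attacks a key used exactly once, and Part~1 (one-time \QCA, detected through the entangled-half cheat test of \expref{Section}{sec:approach}) guarantees that any ciphertext other than the one returned is rejected. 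If $r^\ast$ is fresh, then its key is uniform and independent of the adversary's view, so passing the tag check under an unknown two-design key succeeds with probability only $O(2^{-n})$. Summing the two cases yields $t$-\QUF.

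For Part~3, I would prove \QAE for $\Pi^f$ by a two-step hybrid. First, every evaluation of $f_k$ occurs at a classical point — the challenger samples each $r$ in the clear, and decryption measures the classical $r$-register before computing $f_k(r)$ — so the whole \QAE experiment is a \textsf{QPT} oracle algorithm making only classical queries to $f_k$, and post-quantum pseudorandomness lets me replace $f_k$ by a truly random function $R$ at negligible cost. Second, under $R$ each distinct $r$ yields an independent uniform one-time key, and (by the birthday bound over the polynomially many challenger-chosen $r_i$) all encryption randomness is distinct with overwhelming probability, so the scheme decomposes into independent one-time \des instances indexed by $r$. I would then assemble the \QAE simulator from the per-instance one-time simulators supplied by \QCA: encryption queries are answered by fresh one-time ciphertexts (indistinguishable from encryptions of the true plaintext via $\QCA \Rightarrow \QIND$), decryption queries carrying a previously used $r_i$ are routed to that instance's one-time authentication (returning the stored plaintext or rejecting), and decryption queries with a fresh $r$ are rejected through the $O(2^{-n})$ tag bound. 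The \QUF and \QINDCCAA consequences then follow immediately from \expref{Theorem}{thm:implications-informal}.

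The main obstacle I anticipate is the final composition in Part~3: showing that independent one-time \QCA instances assemble into a single \QAE simulator in the adaptive, two-oracle, real/ideal setting. The delicate points are that decryption queries are adaptive and the submitted register may be entangled both with the adversary's side information and — across the classical $r$-tag — with several instances simultaneously, so I must argue that routing by the measured $r$ leaks nothing about which hybrid we are in, and that the per-instance cheat-detection measurements compose without the disturbance that forced the earlier plaintext-based approach to rely on a gentle-measurement lemma. Establishing that the one-time \QCA simulators can be run in parallel and stitched together while preserving real/ideal indistinguishability against a fully adaptive adversary is where I expect the bulk of the technical work to lie.
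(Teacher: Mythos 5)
Parts 1 and 3 of your proposal follow essentially the paper's own route. For Part 1, the paper simply observes that the simulation-based proof of Theorem 5.1 in~\cite{BW16} applies to arbitrary 2-designs and in fact establishes \QCA rather than only \DNS; this is in substance the 2-design averaging computation you sketch. For Part 3, the paper splits the argument exactly as you do: \expref{Theorem}{thm:QAErandom} proves \QAE for $\Pi^{\algo F}$ with $\algo F$ a truly random function, by induction over decryption queries with your same case split (measured randomness fresh versus previously used, the fresh case handled by showing the effective attack has negligible accept component, the reused case matched to the ideal oracle via the \QCA simulator), and \expref{Corollary}{cor:QAEconstr} swaps in the \pqPRF using precisely your observation that every evaluation of $f_k$ happens at a classical point --- including the subtle point, which the paper makes explicitly, that the ideal decryption oracle may first measure the randomness register so that $V_k^\dagger$ is implementable with only classical oracle access. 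The adaptive stitching of one-time simulators that you flag as the main obstacle is exactly what the induction in \expref{Theorem}{thm:QAErandom} carries out.

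Part 2 is where you genuinely diverge from the paper, and where your argument has a gap. The paper does not do a direct forgery analysis: its \expref{Corollary}{cor:t-time} reduces $\Pi^g$ to the random-function scheme $\Pi^{\algo F}$ (arguing that simulating a single one of the two \QUF games keeps the classical query count within the budget of the $t$-wise independent family), and then concludes via \expref{Theorem}{thm:QAErandom} together with the bounded-query version of \expref{Theorem}{thm:QAEUF}. Your direct argument instead asserts, in the fresh-$r^\ast$ case, that the key $g_k(r^\ast)$ is ``uniform and independent of the adversary's view.'' When the adversary has used its full budget of $t$ encryption queries, this assertion concerns $t+1$ evaluation points, about which $t$-wise independence says nothing. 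Concretely, for the standard $t$-wise independent family of degree-$(t-1)$ polynomials over a finite field, the value at any fresh point is a deterministic function (Lagrange interpolation) of the $t$ values already used, so the fresh key is certainly not independent of the used key material; whether it is still effectively uniform \emph{given only the ciphertexts} requires a nontrivial argument about how much key information a 2-design encryption leaks, which your sketch does not supply. To close this, you would either need to strengthen the hypothesis to $(t+1)$-wise independence, or restructure the proof along the paper's lines so that the function family is only ever compared to a random function on the points arising inside a single simulated game.
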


\begin{theorem}[informal]
~
\begin{enumerate}
\item There exists an \SKQES which is \QINDCCA but not \QINDCCAA.
\item There exists an \SKQES which is \QINDCCAA but not \QAE.
\end{enumerate}
\end{theorem}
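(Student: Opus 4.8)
The plan is to prove each separation by exhibiting a concrete \SKQES together with an explicit attack, reusing the $\Pi^f$ construction and the one-time scheme \des from the construction theorem. For part (1), I would take $\Pi^f$ with $\Pi$ the quantum one-time pad and $f$ a \pqPRF; this is exactly the scheme of~\cite{ABF+16}, so \QINDCCA security comes for free. To show it is \emph{not} \QINDCCAA, I would exploit the malleability of the one-time pad, mimicking the classical stream-cipher separation of \INDCCA from \INDCCAA. Concretely, the adversary receives the challenge $(r^*, c^*)$ with $c^* = \Enc^\Pi_{f_k(r^*)}(\rho_b)$, applies a fixed nontrivial Pauli $P$ (say $X$ on the first qubit) to the ciphertext register to obtain $c' = P c^* P$, and submits $(r^*, c')$ as an adaptive decryption query. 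Since Paulis commute through one-time-pad decryption (up to phase), $\Dec_{f_k(r^*)}(c') = P \rho_b P$, and $c'$ is a genuinely different ciphertext decrypting to a genuinely different plaintext, hence a legal non-challenge query. The adversary then undoes $P$ and measures to recover $b$, winning the real game with overwhelming advantage.

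The crux is arguing that this query is not flagged as cheating. In the cheat game the challenge ciphertext encrypts the $M$-half of $\ket{\phi^+}_{MM'}$; after the mauling, decryption produces $(P \otimes \one)\ket{\phi^+}$ in register $OM'$, which is an \emph{orthogonal} Bell state and is therefore rejected by the projection $\{\Pi_{\phi^+}, \one - \Pi_{\phi^+}\}$ with the ``no-cheat'' outcome. Thus the adversary distinguishes in the real game while evading cheat-detection, contradicting \QINDCCAA. I would stress that this holds against the strong (ciphertext-level) definition as well: because the mauling alters the underlying plaintext, no cheat-detector can flag it without also rejecting honest, genuinely-distinct decryption queries.

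For part (2), since $\QAE \implies \QUF$, it suffices to build a scheme that is \QINDCCAA but forgeable. I would start from the \QAE scheme $\Pi = \des^f$ and adjoin a key-independent ``trapdoor'' ciphertext: attach a one-bit flag to each ciphertext, let $\Enc$ always set it to $0$, and define $\Dec$ to first measure the flag and, on outcome $1$, \emph{accept} and output the fixed plaintext $\egoketbra{0^n}$ regardless of the rest. An adversary then breaks \QUF by simply outputting a flag-$1$ ciphertext: it is always accepted, is never produced by the encryption oracle, and decrypts to a fixed state unentangled with the challenger's registers, so it is a genuine non-cheating forgery. Conversely, \QINDCCAA is preserved by a reduction to the base scheme: every flag-$1$ decryption query returns $\egoketbra{0^n}$ independently of the key and the challenge and behaves identically in the real and cheat games (it is never entangled with $M'$), so it can be simulated for free, and any \QINDCCAA distinguisher for the modified scheme yields one for $\des^f$, which is secure.

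I expect the main obstacle to be the cheat-evasion analysis in part (1): pinning down, against the final (rather than the simplified) \QINDCCAA definition, that a plaintext-altering Pauli mauling passes \emph{every} cheat test, and formally separating the adversary's real-game advantage from its (negligible) detection probability. In part (2) the delicate point is checking that the added branch is genuinely inert for secrecy --- in particular that coherent superpositions over the flag register, which a quantum adversary may prepare, do not open a side channel. Measuring the flag inside $\Dec$ neutralizes this, and I would verify that this measurement keeps the modified scheme a valid \SKQES.
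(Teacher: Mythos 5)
Your proposal is correct, but it is not a carbon copy of the paper's argument, so a comparison is in order. For part (2), the paper (\expref{Proposition}{prop:sepQINDCCAA-QUF} and the discussion preceding it) takes any \QAE-secure scheme and simply replaces the reject symbol $\bot$ in decryption by a fixed plaintext such as the all-zero state: the modified scheme never rejects, hence is manifestly neither \QUF nor \QAE, while \QINDCCAA survives because any adversary against the modified scheme can be run against the original one by simulating the modification. Your flag-bit construction realizes the same idea --- manufacture trivially forgeable, secrecy-inert valid ciphertexts --- at the cost of a slightly larger ciphertext space; your reduction is sound, since flag-$1$ queries return a fixed, key-independent plaintext in both \QINDCCAAtest and \QINDCCAAfake and can never trigger the cheat detector (they lie outside the support of the modified scheme's tag state, so the fake decryption oracle routes them to the default map $\hat D_k$), while flag-$0$ queries coincide exactly with the base scheme $\des^{\pqPRF}$, which is \QAE and hence \QINDCCAA. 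For part (1), this version of the paper states the \QINDCCA-versus-\QINDCCAA separation only in the informal theorem and gives no proof in the body, so your explicit attack on \onePRF is a genuine addition, and it does work: in \QINDCCAAtest the mauled challenge $(r^*, X_1 c^* X_1)$ decrypts to $X_1 \rho_b X_1$ (the phase from (anti)commuting $X_1$ past the one-time pad cancels under conjugation), so $b$ is recovered with probability $1-2^{-n-1}$; in \QINDCCAAfake the same query, after $V_k^\dagger$, leaves $MM''$ in the state $(X_1\otimes\one)\ket{\phi^+}$, which is orthogonal to $\ket{\phi^+}$, so the $\Pi^+$ test never fires and the cheat probability stays pinned at the baseline $1/2$ coming from the challenger's final coin flip; the resulting gap of roughly $1/2$ is non-negligible, while \QINDCCA holds by~\cite{ABF+16}. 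One simplification you could make: \QINDCCAAtest places no restriction whatsoever on decryption queries, so your discussion of the mauled ciphertext being a ``legal non-challenge query'' is moot --- the entire burden of the proof is the cheat-evasion calculation in the fake game, which you carry out correctly.
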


\subsubsection{Our choice of primitives.}
%%%%%%%%%%%%%%%%%%%%%%%%%
The reader may wonder why our constructions do not need ``quantum-oracle-secure'' primitives (e.g., $\QPRF$s for unforgeability and $2t$-wise independence for $t$-time security, as in the quantum-secure classical setting of~\cite{BZ13a}.) In our work, the classical portion of the ciphertext is generated by honest parties during encryption, and measured during decryption. As a result, oracle access to $\Enc_k$ and $\Dec_k$ (as CPTP maps) never grants quantum oracle access to the underlying classical primitive. Of course, one could grant the adversary more powerful oracles that do grant this kind of access, and then quantum-oracle-secure primitives (such as $\QPRF$s) would indeed be required.

\subsubsection{A remark on applicability.}
%%%%%%%%%%%%%%%%%%%%%%%%%
While all of our definitions apply to arbitrary quantum encryption schemes, security reductions sometimes require the following additional condition. As discussed in~\expref{Section}{sec:characterization}, all quantum encryption algorithms can be characterized as (1.) drawing a random pure state from a probability distribution, (2.) attaching it to the plaintext, and (3.) applying a unitary operator. For the implication \QAE$\Rightarrow\cQCA$ of \expref{Theorem}{thm:implications-informal} to hold, it is required that (1), (2) and (3) are efficiently implementable. This condition holds for all schemes known to us. However, it is \emph{in principle} possible that there are schemes for which $\Enc_k$ is efficiently implementable, but the particular implementation ``(1), then (2), then (3)'' is not. We leave this as an open problem.

%%%%%%%%%%%%%%%%%%%%%%%%%
%%%%%%%%%%%%%%%%%%%%%%%%%
%%%%%%%%%%%%%%%%%%%%%%%%%

%%%%%%%%%%%%%%%%%%%%%%%%%
%%%%%%%%%%%%%%%%%%%%%%%%%
%%%%%%%%%%%%%%%%%%%%%%%%%
%%%%%%%%%%%%%%%%%%%%%%%%%
%%%%%%%%%%%%%%%%%%%%%%%%%
%%%%%%%%%%%%%%%%%%%%%%%%%
%%%%%%%%%%%%%%%%%%%%%%%%%
%%%%%%%%%%%%%%%%%%%%%%%%%
%%%%%%%%%%%%%%%%%%%%%%%%%
%%%%%%%%%%%%%%%%%%%%%%%%%
%%%%%%%%%%%%%%%%%%%%%%%%%
%%%%%%%%%%%%%%%%%%%%%%%%%
\section{Preliminaries}
%%%%%%%%%%%%%%%%%%%%%%%%%

\subsubsection{Basic Notation and Conventions.}
%%%%%%%%%%%%%%%%%%%%%%%%%

In the rest of this work, we use ``classical'' to denote ``non-quantum'', ``iff'' for ``if and only if'', and \secpar to denote the security parameter. A function $\varepsilon(\secpar)$ is negligible (denoted $\varepsilon(\secpar) \leq \negl(\secpar)$) if it is asymptotically smaller than $1/p(\secpar)$ for every polynomial function $p$. The notation $x \rand X$ means that $x$ is a sample from the uniform distribution over the set $X$. By ``PPT'' we mean a polynomial-time uniform family of probabilistic circuits, and by ``QPT'' we mean a polynomial-time uniform family of quantum circuits. We will frequently give such algorithms names like ``adversary'' or ``challenger,'' but this is only to help remember the role of the algorithm.

For notation and conventions regarding quantum information, we refer the reader to~\cite{NC11}. We recall a few basics here. We denote by $\hi_M$ a complex Hilbert space with label $M$ and finite dimension $\dim M$. We use the standard bra-ket notation to work with pure states $\ket{\varphi} \in \hi_M$. The class of positive, Hermitian, trace-one linear operators on $\hi_M$ is denoted by $\states(\hi_M)$. A {\em quantum register} is a physical system whose set of valid states is $\states(\hi_M)$; in this case we label by $M$ the register itself. We reserve the notation $\tau_M$ for the maximally mixed state (i.e., uniform classical distribution) $\one / \dim M$ on $M$.

In a typical cryptographic scenario, a ``quantum register $M$'' is in fact an infinite family of registers $\{M_n\}_{n \in \NN}$ consisting of $p(n)$ qubits, where $p$ is some fixed polynomial. This family is parameterized by $n$, which is typically also the security parameter. We will consider completely positive (CP), trace-preserving (TP) maps (i.e., quantum channels) when describing quantum algorithms. To indicate that $\Phi$ is a channel from register $A$ to $B$, we will write $\Phi_{A \to B}$. When it helps to clarify notation, we will use $\circ$ to denote composition of operators. We will also often drop tensor products with the identity, e.g., given a map $\Psi_{BC \to D}$, we will write $\Psi \circ \Phi$ to denote the map $\Psi \circ (\Phi \otimes \one_C)$ from $AC$ to $D$. 

\label{par:support}
The support of a quantum state $\rho$ is its cokernel (as a linear operator). Equivalently, this is the span of the pure states making up any decomposition of $\rho$ as a convex combination of pure states. We will denote the orthogonal projection operator onto this subspace by $P_\rho$. The two-outcome projective measurement (to test if a state has the same or different support as $\rho$) is then $\{P_\rho, \one -P_\rho\}$.

Next, we single out some unitary operators that will appear frequently. First, the group of $n$-qubit operators generated by Paulis $I, X, Y, Z$ (applied to individual qubits) is a well-known \emph{unitary one-design}. The Clifford group on $n$ qubits is defined to be the normalizer of the Pauli group inside the unitary group. It can also be seen as the group generated by the gate set $(H, P, CNOT)$~\cite{Gottesman98}; it is also a \emph{unitary two-design}~\cite{VLT02}. 

A \emph{unitary $t$-design} (for a fixed $t$) is an infinite collection $\mathcal U = \{\mathcal U^{(n)} : n \in \N\}$, where $\mathcal U^{(n)}$ forms an $n$-qubit unitary t-design in the standard sense, i.e.,
\begin{equation}
\frac{1}{|\mathcal U^{(n)}|}\sum_{U\in {\mathcal U}^{(n)}}U^{\otimes t}X\left(U^\dagger\right)^{\otimes t}
= \int U^{\otimes t}X\left(U^\dagger\right)^{\otimes t} dU\,.
\end{equation}
In the above, the integral is taken over the $n$-qubit unitary group according to the Haar measure. We assume that there is an explicit polynomial function $m(n)$ and a deterministic polynomial-time algorithm which, given $1^n$ and $k \inrand \bit^{m(n)}$, produces a circuit for a unitary operator $U_{k, n}$ which is distributed uniformly at random in $\mathcal U^{(n)}$. We will not refer to this algorithm explicitly and will simply write $\{U_{k, n} : k \in \bit^{m(n)}\}$ for the resulting distribution on unitary operators; we will also frequently suppress one index and write $U_k$ when $n$ is clear from context. We refer to the polynomial $m$ as the \emph{key length} of the $t$-design. Standard examples are: (i.) the Pauli one-design (where we apply $X^aZ^b$ to each qubit for random $a, b \in \bit$) is a unitary one-design on $n$ qubits with key length $2n$; (ii.) the Clifford group (where we apply a uniformly random element of the $n$-qubit Clifford group, efficiently generated via the Gottesman-Knill theorem~\cite{AG04}) is a unitary 3-design, and therefore in particular a unitary 2-design, on $n$ qubits with key length $O(n^2)$; (iii.) random $\poly(t, n)$-size quantum circuits, randomly generated from a universal gate set, are approximate $t$-designs on $n$ qubits~\cite{BHH16}.

In this work, we will only require one-designs and two-designs, and we will assume for simplicity that the designs are exact. While approximate designs would also suffice, some additional (but straightforward) analysis would be required.

\subsubsection{Quantum Encryption.}\label{sec:basics}
%%%%%%%%%%%%%%%%%%%%%%%%%

We will follow the conventions set in~\cite{ABF+16}; the exception is that decryption can reject by outputting a special symbol $\bot$. 

\begin{definition}\label{def:SKE}
A {\em symmetric-key quantum encryption scheme (or $\SKQES$)} is a triple of QPT algorithms:
\begin{enumerate}
\item (key generation)\footnote{A more general definition uses arbitrary key generation algorithms. We assume a uniform key in this paper for technical and notational convenience.} $\KeyGen:$ on input $1^n$, outputs $k \rand \K$
\item (encryption) $\Enc: \K \times \states(\H_M) \rightarrow \states(\H_C)$
\item (decryption) $\Dec: \K \times \states(\H_C) \rightarrow \states(\H_M \oplus \ketbra{\bot}{\bot})$
\end{enumerate}
such that $\| \Dec_k \circ \Enc_k - \one_M \oplus 0_\bot \|_\diamond \leq \negl(n)$
for all $k \in \emph{\supp} \KeyGen(1^n)$. 
\end{definition}

It is implicit that the key space $\K$ is classical and of size $\poly(n)$; likewise, the registers $C$ and $M$ are quantum registers of at most $\poly(n)$ qubits. We will only consider \SKQES of \emph{fixed-length}, meaning that the number of qubits in $M$ is a fixed function of the security parameter $n$. We assume that honest parties will apply the measurement $\{\Pi_\bot , \one - \Pi_\bot\}$ (where $\Pi_\bot = \egoketbra{\bot}$) immediately after decryption. This allows us to write, e.g., $\Dec_k(\rho) \neq \bot$ to mean that decryption (followed by this measurement) successfully produced a valid plaintext.

We will often combine quantum schemes with classical (keyed) function families. A keyed function family consists of functions $f:\bit^{p(n)} \times \bit^{q(n)} \to \bit^{s(n)}$ where $p, q, s$ are polynomials in $n$. In typical usage, we sample a key $k \rand \bit^{p(n)}$ and then consider the restricted function $f_k : \bit^{q(n)} \to \bit^{s(n)}$ defined by $f_k(x) = f(k, x)$. All keyed function families are assumed to be computable by a deterministic polynomial-time uniform classical algorithm. 

\begin{definition}\label{def:SKE-generic}
Let $\Pi = (\KeyGen^\Pi, \Enc^\Pi, \Dec^\Pi)$ be a \SKQES, and $f:\bit^{p(n)} \times \bit^{q(n)} \to \bit^{s(n)}$ a classical keyed function family. Define a new \SKQES $\Pi^f = (\KeyGen, \Enc, \Dec)$ as follows:
\begin{enumerate}
	\item $\KeyGen:$ on input $1^n$, outputs $k \rand \bit^{p(n)}$;
	\item $\Enc_k:$ on input $\rho$, outputs  $\ket{r}\bra{r} \otimes \Enc^\Pi_{f_k(r)}(\rho)$, where $r \rand \bit^{q(n)}$;
	\item $\Dec_k: \egoketbra{s} \otimes \sigma \mapsto \Dec^\Pi_{f_k(s)}(\sigma)$.
\end{enumerate}
\end{definition}
We extend $\Dec_k$ to arbitrary inputs by postulating that it begins by measuring the first register in the computational basis. Note that $\Pi^f$ has plaintext length $t(s(n))$ where $t(.)$ is the plaintext length of $\Pi$ as a function of $\Pi$'s key length. %We remark that t
This construction can be extended to schemes $\Pi$ with a non-uniform key by using the output of the keyed function family as a random tape for $\KeyGen^\Pi$.

\subsubsection{Quantum secrecy.}
%%%

The literature contains a number of information-theoretic definitions of quantum secrecy~(see, e.g., \cite{AMT+00,ABW09,BJ15,ABF+16}). It is well-known that a unitary one-design (e.g., the Pauli group) is an information-theoretically secret scheme. In this work, however, we focus on the computational setting~\cite{BJ15,ABF+16}.

\begin{definition}[\QIND]\label{def:IND}
A $\SKQES$ $\Pi = (\KeyGen, \Enc, \Dec)$ has indistinguishable encryptions (or is \QIND) if for every QPT adversary $\A=(\M,\D)$ we have:
$$
\Bigl| \Pr \bigl[ \D \big\{ (\Enc_k \otimes \one_E) \rho_{ME} \big\}  = 1  \bigr]
- \Pr \bigl[ \D \big\{ (\Enc_k \otimes \one_E) (\egoketbra{0}_M \otimes \rho_E) \big\} = 1 \bigr] \Bigr| \leq \negl(n),
$$
where $\rho_{ME} \from \M(1^n)$, $\rho_E = \tr_M(\rho_{ME})$, and the probabilities are taken over $k \leftarrow \KeyGen(1^n)$ and the coins and measurements of \Enc, $\M$, $\D$. We also define:
\begin{itemize}
\item \QINDCPA: In addition to the above, $\M$ and $\D$ have oracle access to $\Enc_k$.
\item \QINDCCA: In addition to \QINDCPA, $\M$ has oracle access to $\Dec_k$.
\end{itemize}
\end{definition}

Recall that a \pqPRF (post-quantum pseudorandom function) is a classical, deterministic, efficiently computable keyed function family $\{f_k\}_k$ which appears random to QPT algorithms with classical oracle access to $f_k$ for uniformly random $k$. The strongest notion ($\QINDCCA$) is satisfied by $\Pi^f$ where $\Pi$ is a one-design  and $f$ is a \pqPRF~\cite{ABF+16}. We let \onePRF denote such schemes.

\subsubsection{One-time authentication.}\label{sec:dns}
%%%

We recall quantum authentication as defined by Dupuis et al.~\cite{DNS12}, and adapt it to our conventions. Given an attack map $\Lambda_{CB \to C\tilde B}$ on a scheme $\Pi = (\KeyGen, \Enc, \Dec)$ (where the adversary holds $B$ and $\tilde B$), we define the ``averaged effective plaintext map'' (or just ``effective map'') as follows.
$$
\Lambda^\Pi_{MB\to M\tilde B} := \mathbb{E}_{\,k \from \KeyGen(1^n)} \left[ \Dec_k \circ \Lambda \circ \Enc_k\right]\,.
$$
We then require that, conditioned on acceptance, this map is the identity on $M$.
\begin{definition}[\cite{DNS12}]\label{def:auth}
A $\SKQES$ $\Pi = (\KeyGen, \Enc, \Dec)$ is \DNS-authenticating if, for any CP-map $\Lambda_{CB\to C\tilde B}$, there exist CP-maps $\Lambda^{\acc}_{B\to \tilde B}$ and $\Lambda^{\crej}_{B\to \tilde B}$ that sum to a TP map, such that
	\begin{equation}
	\left\|\Lambda^\Pi_{MB\to M\tilde B} - 
	\left(\id_M\otimes\Lambda^{\acc}_{B\to \tilde B} + \proj\bot_M\otimes \Lambda^{\crej}_{B\to \tilde B}\right)\right\|_\diamond\le\negl(n)\,.
	\end{equation}
\end{definition}

An important observation is that this definition only provides for authentication of the plaintext state. To see that this cannot be ``ciphertext authentication,'' simply take a scheme which is $\DNS$ and change it so that (i.) an extra bit is added to the ciphertext during encryption, and (ii.) that same bit is ignored during decryption. The resulting scheme still satisfies $\DNS$, but the adversary can clearly forge ciphertexts by flipping the extra bit. A perhaps more compelling example just adds encoding (in some QEC code) after encryption, and decoding prior to decryption. The adversary is then free to modify ciphertexts with correctable errors without violating \DNS. We remark that, in this respect, the recent strengthening of \DNS due to Garg et al.~\cite{GYZ17} is no different: a scheme secure according to this stronger notion of authentication can be modified in the same way without losing security.

Next, we recall a standard one-time authentication scheme. We encrypt by appending $n$ ``tag'' qubits in the fixed state $\ket{0}$ and then applying a random element of a 2-design. %To decrypt, we first undo the 2-design and measure the tag qubits. If all tags measure to $0$, we then output the plaintext; otherwise we output $\bot$.
Decryption first undoes the 2-design, then outputs the plaintext iff all tag qubits measure to $0$; otherwise it outputs $\bot$.

\begin{scheme}\label{scheme:2-design}
The scheme family \des is defined as follows. Select a unitary 2-design $\mathcal U$ with key length $m(\cdot)$, and define algorithms:
\begin{enumerate}
\item $\KeyGen$: on input $1^n$, output $k \rand \bit^{m(2n)}$;
\item $\Enc_k$: on input $\rho_M$, output $U_k(\rho_M \otimes \egoketbra{0^n}_T)U^\dagger_k$
\item $\Dec_k$: on input $\sigma_{MT}$, output
$$
\bra{0^n}_TU_k^\dagger \sigma_{MT} U_k\ket{0^n}_T
+ \tr \bigl[(\mathds 1-\egoketbra{0^n}_T)U_k^\dagger \sigma_{MT} U_k\bigr]\egoketbra{\bot}_M\,.
$$ 
\end{enumerate}
\end{scheme}

We chose $\des$ to have plaintext and tag length $n$. It is well-known that, for plaintexts of at most polynomial length and tags of length at least $n^c$, these schemes are \DNS-authenticating~\cite{ABE10,DNS12}. 
%%%%%%%%%%%%%%%%%%%%%%%%%
%%%%%%%%%%%%%%%%%%%%%%%%%
%%%%%%%%%%%%%%%%%%%%%%%%%
%%%%%%%%%%%%%%%%%%%%%%%%%
%%%%%%%%%%%%%%%%%%%%%%%%%
%%%%%%%%%%%%%%%%%%%%%%%%%
%%%%%%%%%%%%%%%%%%%%%%%%%
%%%%%%%%%%%%%%%%%%%%%%%%%
%%%%%%%%%%%%%%%%%%%%%%%%%
%%%%%%%%%%%%%%%%%%%%%%%%%
%%%%%%%%%%%%%%%%%%%%%%%%%
%%%%%%%%%%%%%%%%%%%%%%%%%
%%%%%%%%%%%%%%%%%%%%%%%%%
%%%%%%%%%%%%%%%%%%%%%%%%%
%%%%%%%%%%%%%%%%%%%%%%%%%
%%%%%%%%%%%%%%%%%%%%%%%%%
\section{One-Time Ciphertext Authentication}
%%%%%%%%%%%%%%%%%%%%%%%%%

One-time quantum authentication has been extensively studied~\cite{BCG+02,DNS10,DNS12,BW16,GYZ17,AM17}. As we observed above, all of these works concern \emph{plaintext authentication}, which ensures that manipulated ciphertexts decrypt to either the original plaintext or the reject symbol. Classical MACs, on the other hand, provide \emph{ciphertext authentication}, which ensures that any ciphertext manipulation whatsoever will result in rejection. This distinction is important; for instance, in classical \INDCCAA, the adversary can defeat plaintext-authenticating schemes by invoking the decryption oracle on a modified challenge ciphertext.

In this section we show how to define and construct ciphertext authentication in the quantum setting. These ideas will be crucial to defining more advanced notions (such as ciphertext unforgeability and adaptive chosen-ciphertext security) later in the paper. We start with the information-theoretical security setting, and then we discuss how to apply these notions to the computational setting.

\subsubsection{A characterization of encryption schemes.}\label{sec:characterization}
%%%

We recall a lemma from~\cite{AM17} stating that all \SKQES encrypt by (i.) attaching some (possibly key-dependent) auxiliary state, and (ii.) applying a unitary\footnote{If the dimension of the plaintext space does not divide the dimension of the ciphertext space, then we may need an isometry. In our case, all spaces are made up of qubits.} operator. Decryption undoes the unitary, and then checks if the support of the state in the auxiliary register has changed. We emphasize that this characterization follows from correctness only, and thus applies to all schemes. 
\begin{lemma}[Lemma B.9 in~\cite{AM17}, restated]\label{lem:SKQES-char}
	Let $\Pi=(\KeyGen, \Enc, \Dec)$ be a $\SKQES$. Then $\Enc$ and $\Dec$ have the following form:
	\begin{align*}
	&\Enc_k(X_M) = V_k\left(X_M \otimes (\sigma_k)_T\right)V_k^\dagger\nonumber\\
	&\Dec_k(Y_C) = \tr_{T}\left[P^{\sigma_k}_T \left(V_k^\dagger Y_CV_k\right) P^{\sigma_k}_T \right] +\hat D_k\left[\bar P^{\sigma_k}_T \left(V_k^\dagger Y_C V_k\right) \bar P^{\sigma_k}_T\right].
	\end{align*}
	Here, $\sigma_k$ is a state on register $T$, $P^{\sigma_k}_T$ and $\bar P^{\sigma_k}_T$ are the orthogonal projectors onto the support of $\sigma^{(k)}$ (see \expref{Section}{par:support}) and its complement (respectively), $V_k$ is a unitary operator, and $\hat D_k$ is a channel.
\end{lemma}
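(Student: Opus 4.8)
The plan is to extract the entire structure from the correctness requirement $\|\Dec_k\circ\Enc_k - \one_M\oplus 0_\bot\|_\diamond\le\negl(n)$, working pointwise in a fixed $k$. I would first prove the claim under \emph{perfect} correctness, $\Dec_k\circ\Enc_k=\id_M$, and then recover the general case by a robustness argument; see the last paragraph. First I would dilate: write $\Enc_k(X_M)=\sum_\ell A_\ell X_M A_\ell^\dagger$ with $A_\ell\colon\H_M\to\H_C$, or equivalently fix a Stinespring isometry $W_k\colon\H_M\to\H_C\otimes\H_E$ so that $\Enc_k(X)=\tr_E[W_k X W_k^\dagger]$ with complementary channel $\Enc_k^c(X)=\tr_C[W_k X W_k^\dagger]$.

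The core step is to show that perfect recovery forces a Knill--Laflamme-type condition $A_\ell^\dagger A_{\ell'}=(\sigma_k)_{\ell\ell'}\,\one_M$ for some state $\sigma_k$ (equivalently, that $\Enc_k^c$ is a constant ``replacer'' channel). I would argue this by decoupling: feed half of a maximally entangled state $\ket{\phi^+}_{MR}$ into $\Enc_k$. Since $\Dec_k$ recovers it exactly, the reference $R$ remains maximally entangled with the recovered plaintext; because the global state $(W_k\otimes\one_R)\ket{\phi^+}_{MR}$ is pure, this pins all $R$-correlations to $C$ and decouples $E$ from $R$, so the Choi state of $\Enc_k^c$ factorises as $\sigma_k\otimes\tau_R$. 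By the Choi--Jamio\l{}kowski isomorphism this is equivalent to $\Enc_k^c$ being the replacer channel $X\mapsto\tr[X]\,\sigma_k$, which in Kraus terms is exactly $A_\ell^\dagger A_{\ell'}=(\sigma_k)_{\ell\ell'}\,\one_M$; trace preservation gives $\tr[\sigma_k]=1$.

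Given this condition the encryption form is immediate. Diagonalising $\sigma_k=\sum_j p_j\egoketbra{e_j}$ and regrouping the Kraus operators yields $A_j=\sqrt{p_j}\,W_j$ with each $W_j\colon\H_M\to\H_C$ an isometry and with mutually orthogonal ranges. Introducing an ancilla register $T$ carrying the vectors $\ket{e_j}$ and setting $V_k(\ket{\psi}_M\otimes\ket{e_j}_T)=W_j\ket{\psi}$ defines an isometry (a unitary once the qubit counts are padded to match, as noted in the footnote) with $\Enc_k(X_M)=V_k(X_M\otimes\sigma_k)V_k^\dagger$. For decryption, observe that on a valid ciphertext $V_k^\dagger Y_C V_k=X_M\otimes\sigma_k$ is supported entirely in the range of $P^{\sigma_k}_T$, so $\tr_T[P^{\sigma_k}_T(V_k^\dagger Y_C V_k)P^{\sigma_k}_T]=X_M$ reproduces the plaintext. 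I would then argue that one may, without loss of generality, take $\Dec_k$ to begin with the projective measurement $\{P^{\sigma_k}_T,\bar P^{\sigma_k}_T\}$ in the $V_k^\dagger$-rotated basis: this measurement is the identity on every valid ciphertext, so inserting it preserves correctness, while it deletes the off-diagonal $P\,\cdot\,\bar P$ blocks and routes the input into the accept branch (the first term) and a reject branch processed by an arbitrary residual channel $\hat D_k$ on $\bar P^{\sigma_k}_T(\,\cdot\,)\bar P^{\sigma_k}_T$, yielding the stated form.

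I expect the decoupling step — perfect recovery implies $\Enc_k^c$ is constant — to be the main obstacle, both because it is the conceptual heart of the lemma and because the definition only guarantees \emph{approximate} correctness. For the exact case the argument above is clean; for the $\negl(n)$ case I would replace it with a robust version, using continuity of trace distance (or an approximate-decoupling bound) to show that $\Enc_k^c$ is $\negl(n)$-close to a replacer channel and hence $\Enc_k$ is correspondingly close to the attach-and-rotate form. Controlling how this error propagates through the diagonalisation of $\sigma_k$, whose small eigenvalues are delicate, is the one place where genuine care is required.
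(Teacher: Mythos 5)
Your encryption half is sound, and it follows essentially the same route as the source proof (the paper itself gives no proof of this lemma; it imports it from~\cite{AM17}, where the argument likewise runs through a Stinespring dilation, the observation that perfect correctability forces the complementary channel to be constant, and the resulting Knill--Laflamme structure). The genuine gap is in the decryption half. The lemma asserts that the \emph{given} $\Dec_k$ has the stated two-branch form --- and the paper relies on this reading, since the challengers in \QUFcheat, \QINDCCAAfake and \QAEideal invoke the residual channel $\hat D_k$ of the \emph{actual} scheme. Your argument instead replaces $\Dec_k$ by a modified map (``WLOG insert the measurement $\{P^{\sigma_k}_T,\bar P^{\sigma_k}_T\}$ first''), justified only by the fact that the insertion preserves correctness. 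That establishes existence of \emph{some} correct decryptor of the stated form, which is a strictly weaker statement than the characterization being claimed.

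Moreover, even inside your WLOG framing there is a missing step: identifying the accept branch as $\tr_T\left[P^{\sigma_k}_T(\cdot)P^{\sigma_k}_T\right]$ does not follow from what you verify. Correctness constrains $\Dec_k$ only on the states $V_k(X\otimes\sigma_k)V_k^\dagger$, and these do \emph{not} span the valid subspace $V_k(\H_M\otimes\mathrm{supp}\,\sigma_k)V_k^\dagger$ (every such state carries the same fixed $\sigma_k$ on $T$), so linearity alone does not determine $\Dec_k$ on that block. The missing ingredient is a complete-positivity argument: writing Kraus operators $\{K_i\}$ for $\Dec_k(V_k(\cdot)V_k^\dagger)$ and purifying $\sigma_k$, correctness forces $K_iP^{\sigma_k} = \iota\,(\one_M\otimes\bra{\phi_i}_T)$ for vectors $\phi_i\in\mathrm{supp}\,\sigma_k$ with $\sum_i\proj{\phi_i}=P^{\sigma_k}$, where $\iota$ is the embedding of $\H_M$ into the output space; this is what yields $\Dec_k(PZP)=\tr_T[PZP]$ for \emph{all} $Z$, and trace preservation then forces the cross-block contributions $PZ\bar P$ to vanish, except for coherences with $\ket{\bot}$, which disappear only under the paper's convention that honest parties apply $\{\Pi_\bot,\one-\Pi_\bot\}$ immediately after decryption. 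Without that Kraus-level analysis and that convention, the literal claim is in fact false: one can write down perfectly correct decryptors (already for one-qubit $M$ and $T$, $\sigma_k=\egoketbra{0}$, $V_k=\one$) whose output contains $M$--$\bot$ coherences depending on the off-diagonal blocks $PZ\bar P$, and such outputs cannot be reproduced by any channel of the form $\tr_T[P(\cdot)P]+\hat D_k[\bar P(\cdot)\bar P]$. Your concern about approximate correctness is legitimate (the lemma, stated with exact equalities, strictly holds only for exactly correct schemes), but it is secondary to this structural gap.
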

In practice, $\hat D_k$ (i.e., the map that is applied to any ciphertext outside of the range of $\Enc_k$) will just discard the state and replace it with $\bot$. Let us explain how the schemes we have seen so far fit into this characterization. For $\des$, $\sigma_k$ is simply the (key-independent) pure state $\proj{0^n}_T$, $V_k$ is the unitary operator of the two-design corresponding to key $k$, $P^{\sigma_k} = \proj{0^n}$, and $\hat D_k$ replaces the state with $\bot$. For $\onePRF$, $\sigma_k$ is the maximally mixed state $\tau$ (i.e., the classical randomness $r$ from \expref{Definition}{def:SKE-generic}), and $V_k$ is the controlled-unitary which applies a quantum one-time pad on the first register, controlled on the contents of the second register (using the $\pqPRF$ $f$), i.e., $\ket{x}\ket{r} \mapsto P_{f_k(r)}\ket{x} \ket {r}$. Decryption undoes the controlled unitary and never rejects, i.e., $P^{\sigma_k} = \one$. This corresponds to the fact that $\tau$ has full support.

By considering the spectral decomposition of the state $\sigma_k$ from \expref{Lemma}{lem:SKQES-char}, it is straightforward to show that encryption can always be implemented using unitary operators and only classical randomness. We state this fact as follows.
\begin{corollary}\label{cor:randomized-isometry}
Let $\Pi=(\KeyGen, \Enc, \Dec)$ be a $\SKQES$. Then for every $k$, there exists a probability distribution $p_k:\{0,1\}^t\to[0,1]$ and a family of quantum states $\ket{\psi^{(k,r)}}_T$ such that $\Enc_k$ is equivalent to the following algorithm:
\begin{enumerate}
\item sample $r \in \bit^t$ according to $p_k$;
\item apply the following map: 
$\Enc_{k;r}(X_M) =V_k \left(X_M \otimes \proj{\psi^{(k,r)}}_T \right) V_k^{\dagger}$.
\end{enumerate}
Here $V_k$ and $T$ are defined as in \expref{Lemma}{lem:SKQES-char}, and $t$ is the number of qubits in $T$.
\end{corollary}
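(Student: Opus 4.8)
The plan is to read the statement off directly from the spectral decomposition of the auxiliary state $\sigma_k$ supplied by \expref{Lemma}{lem:SKQES-char}. That lemma already puts $\Enc_k$ in the form $\Enc_k(X_M) = V_k\left(X_M \otimes (\sigma_k)_T\right)V_k^\dagger$ for a fixed unitary $V_k$ and a fixed state $\sigma_k \in \states(\hi_T)$. The only gap between this and the corollary is that $\sigma_k$ is in general mixed, whereas the corollary wants a classical mixture over \emph{pure} auxiliary states; diagonalizing $\sigma_k$ is exactly the operation that closes this gap, so no genuinely new idea is needed beyond the lemma.

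Concretely, I would first invoke the spectral theorem: since $\sigma_k$ is positive semidefinite with unit trace, it admits an orthonormal eigenbasis $\{\ket{\psi^{(k,r)}}_T\}$ with nonnegative eigenvalues $p_k(r)$ that sum to one, so $\sigma_k = \sum_r p_k(r)\proj{\psi^{(k,r)}}_T$. Because $T$ consists of $t$ qubits, $\hi_T$ has dimension $2^t$ and the eigenbasis has exactly $2^t$ elements; indexing them by $r \in \bit^t$ turns $p_k$ into a genuine probability distribution $p_k : \bit^t \to [0,1]$ (vanishing eigenvalues are harmless). I would then substitute this decomposition into the expression for $\Enc_k$ and pull the sum outside, using that $X_M \mapsto V_k\left(X_M \otimes (\cdot)_T\right)V_k^\dagger$ is linear in the second slot:
\[
\Enc_k(X_M) = V_k\!\left(X_M \otimes \sum_r p_k(r)\proj{\psi^{(k,r)}}_T\right)\!V_k^\dagger = \sum_r p_k(r)\, V_k\!\left(X_M \otimes \proj{\psi^{(k,r)}}_T\right)\!V_k^\dagger = \sum_r p_k(r)\,\Enc_{k;r}(X_M).
\]
This is precisely the two-step algorithm in the statement (sample $r \sim p_k$, then apply $\Enc_{k;r}$), and the claimed ``equivalence'' is just equality of channels, which the display establishes on every input $X_M$.

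There is no real obstacle here; the argument is a one-line consequence of the lemma plus the spectral theorem. The only points worth a word of care are (i.) that the eigenvalues genuinely form a probability distribution, which is guaranteed by $\sigma_k \geq 0$ and $\tr \sigma_k = 1$, and (ii.) that the index set can be taken to be $\bit^t$ rather than an abstract label set, which follows from $\dim \hi_T = 2^t$. I would also remark that the decomposition depends on $k$ only through $\sigma_k$, so both $p_k$ and the pure states $\ket{\psi^{(k,r)}}$ inherit their $k$-dependence directly from the lemma, with $V_k$ and $T$ unchanged.
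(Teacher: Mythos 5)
Your proposal is correct and matches the paper's own argument: the paper likewise obtains the corollary by taking the spectral decomposition of $\sigma_k$ from \expref{Lemma}{lem:SKQES-char}, indexing the orthonormal eigenvectors $\ket{\psi^{(k,r)}}_T$ by $r \in \bit^t$ with eigenvalues $p_k(r)$, and using linearity to rewrite $\Enc_k$ as the classical mixture $\sum_r p_k(r)\,\Enc_{k;r}$. Your added remarks (positivity and unit trace giving a genuine distribution, and $\dim \hi_T = 2^t$ justifying the index set) are exactly the routine checks the paper leaves implicit.
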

For example, in the case of \des, the distribution is a point distribution and $\ket{\psi^{(k, r)}} = \ket{0^t}$. In $\onePRF$, the distribution is uniform and $\ket{\psi^{(k, r)}} = \ket{r}$.

It is important to remark here that, even if $\Enc_k$ is a polynomial-time algorithm, the functionally-equivalent algorithm provided by \expref{Corollary}{cor:randomized-isometry} may not be. We thus define the following.

\begin{condition}\label{con:efficient}
{\em Let $\Pi$ be a \SKQES, and let $p_k$, $\ket{\psi^{(k, r)}}$ and $V_k$ be as given in \expref{Corollary}{cor:randomized-isometry}. We say that {\em $\Pi$ satisfies \expref{Condition}{con:efficient}} if there exist efficient quantum algorithms for (i.) sampling from $p_k$, (ii.) preparing $\ket{\psi^{(k, r)}}$, and (iii.) implementing $V_k$, and this holds for all but a negligible fraction of $k$ and $r$.} 
\end{condition}

We are not aware of any examples of \SKQES that violate \expref{Condition}{con:efficient}.  In fact, in all schemes we will consider (including all schemes constructed via \expref{Definition}{def:SKE-generic}), the distribution $p_k$ and the states $\ket{\psi^{(k, r)}}$ are trivial to prepare, and the unitaries $V_k$ are implementable by poly-size quantum circuits. In any case, when \expref{Condition}{con:efficient} is required for a particular result, we will state this explicitly.

\subsubsection{Defining ciphertext authentication.}\label{sec:define-qCA}
%%%

We begin by outlining our approach. Fix an encryption scheme $\Pi$ with plaintext register $M$ and ciphertext register $C$. Let $\Lambda_{CB\to C\tilde B}$ be an attack map. Intuitively, we would like to decide whether to accept or reject conditioned on whether $\Lambda$ has changed the ciphertext. A possible approach would be to use the simulator from Theorem 5.1 in~\cite{BW16}: in the case of acceptance, this simulator\footnote{In~\cite{BW16}, this simulator was used to prove $\DNS$ security of the \des scheme. Here, we consider whether that simulator can be used to \emph{define} secure authentication. } ensures that $\Lambda$ is equivalent to $\one_C \otimes \Phi$ for some side-information map $\Phi_{B \to \tilde B}$. While this approach is on the right track, it is unnecessarily strong as a definition of security: it prevents the adversary from even looking at (or copying) classical parts of the ciphertext! This would place strange requirements on encryption. It would disallow constant classical messages (e.g., ``begin PGP message'') accompanying ciphertexts. It would also disallow a large class of natural schemes, including all schemes $\Pi^f$ from \expref{Section}{sec:basics}. This class has many schemes that (intuitively speaking) should be adequate for authenticating poly-many quantum ciphertexts, such as the case where $\Pi$ applies a random unitary and $f$ is a random function. 

The key to finding the middle ground lies in \expref{Corollary}{cor:randomized-isometry}: any scheme can be decomposed in a way that enables us to check separately whether the identity has been applied to the quantum part, and whether the classical register has changed. In effect, this will amount to an additional constraint over \DNS-authentication\footnote{One might also start from the authentication definitions of \cite{GYZ17,Portmann17} rather than \DNS. However, this is not necessary: these definitions' advantage over \DNS is in key recycling; our setting is non-interactive and has no back-channel for key recycling.} (\expref{Definition}{def:auth}), demanding extra structure from the simulator. 

Recall that an attack $\Lambda_{CB \to C\tilde B}$ on the scheme $\Pi$ defines the averaged effective plaintext map $\Lambda^\Pi_{MB\to M\tilde B} = \mathbb{E}_k[ \Dec_k \circ \Lambda \circ \Enc_k]$. We define ciphertext authentication as follows, using notation from \expref{Lemma}{lem:SKQES-char} and \expref{Corollary}{cor:randomized-isometry}.
\begin{definition}\label{def:qCauth}
	A \SKQES $\Pi=(\KeyGen, \Enc, \Dec)$ is \emph{ciphertext authenticating}, or \QCA, if for all CP-maps $\Lambda_{CB\to C\tilde B}$, there exists a CP-map $\Lambda^{\crej}_{B\to \tilde B}$ such that:
	\begin{equation}\label{eq:normal-constraint}
	\left\|\Lambda^\Pi_{MB\to M\tilde B} - 
	\left(\id_M\otimes\Lambda^{\acc}_{B\to \tilde B} + \proj\bot_M\otimes \Lambda^{\crej}_{B\to \tilde B}\right)\right\|_\diamond\le\negl(n),
	\end{equation}
	and $\Lambda^{\acc}_{B\to \tilde B}+\Lambda^{\crej}_{B\to \tilde B}$ is TP. Here
	$\Lambda^{\acc}_{B\to \tilde B}$ is given by:
	\begin{equation}\label{eq:extra-constraint}
	\Lambda^{\acc}_{B\to \tilde B} (Z_B) =\mathbb{E}_{k,r}\left[\bra{\Phi_{k, r}} V_k^\dagger\Lambda \left(\Enc_{k;r}\left(\phi^+_{MM'} \otimes Z_B \right)\right)V_k\ket{\Phi_{k, r}}\right]
	\end{equation}
	where 
	$\ket{\Phi_{k, r}} = \ket{\phi^+}_{MM'}\otimes\ket{\psi^{(k,r)}}_T.$
\end{definition}
Condition \eqref{eq:normal-constraint} is simply \DNS. It ensures that, in the accept case, the adversary performs the identity on the plaintext. Condition \eqref{eq:extra-constraint} demands that the rest of the action (i.e., on the side-information) is well-simulated by the following:
\begin{enumerate}
\item prepare a maximally entangled state $\phi^+_{MM'}$ and attach it to the input $B$;
\item run encryption, saving the classical randomness $r$ used (meaning that the tag register $T$ was prepared in the state $\ket{\psi^{(k, r)}}$);
\item apply decryption while conditioning on (i.) the plaintext still being maximally entangled with $M'$, and (ii.) register $T$ still containing $\ket{\psi^{(k, r)}}$;
\item output the contents of $\tilde B$.
\end{enumerate}
Note that this definition only adds further constraints to \DNS. Recalling that \DNS implies \QIND~\cite{BCG+02,GYZ17}, we thus have the following.
\begin{theorem}\label{thm:qCauth2DNS-IND}
If a \SKQES is \QCA, then it is also \DNS; in particular, it is \QIND.
\end{theorem}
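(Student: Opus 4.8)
The plan is to exploit the fact that \QCA is defined as \DNS together with one extra constraint: in \QCA the accept-branch map is pinned down to the specific simulator of \eqref{eq:extra-constraint}, whereas in \DNS it is merely an arbitrary existentially-quantified CP-map. Concretely, to show \QCA $\Rightarrow$ \DNS I would fix an arbitrary attack $\Lambda_{CB\to C\tilde B}$ and invoke the \QCA hypothesis (\expref{Definition}{def:qCauth}), which furnishes a CP-map $\Lambda^{\crej}_{B\to\tilde B}$ such that $\Lambda^{\acc}+\Lambda^{\crej}$ is trace-preserving and the diamond-norm bound \eqref{eq:normal-constraint} holds, where $\Lambda^{\acc}$ is the map of \eqref{eq:extra-constraint}. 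I would then simply offer the very same pair $(\Lambda^{\acc},\Lambda^{\crej})$ as the witness demanded by the \DNS definition (\expref{Definition}{def:auth}): the diamond-norm inequality and the trace-preservation of the sum are literally the same two conditions in both definitions, so nothing further needs checking about them.

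The only genuine verification is that the pinned-down $\Lambda^{\acc}$ of \eqref{eq:extra-constraint} is itself completely positive, since \DNS requires its accept-branch map to be a CP-map. I would read this directly off the formula as a composition of CP operations followed by a convex average. On input $Z_B$: first $Z_B\mapsto\Enc_{k;r}(\phi^+_{MM'}\otimes Z_B)=V_k(\phi^+_{MM'}\otimes Z_B\otimes\proj{\psi^{(k,r)}}_T)V_k^\dagger$ tensors with fixed states and conjugates by $V_k$ (CP); then $\Lambda$ is applied (CP by hypothesis); then one conjugates by $V_k^\dagger$ (CP); and finally one sandwiches by the fixed vector, $X\mapsto\bra{\Phi_{k,r}}X\ket{\Phi_{k,r}}$, which is the channel with the single Kraus operator $\bra{\Phi_{k,r}}$ and hence CP. Averaging over $k\from\KeyGen(1^n)$ and $r$ drawn from $p_k$ is a convex combination and preserves complete positivity. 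Thus $\Lambda^{\acc}$ is CP, and $(\Lambda^{\acc},\Lambda^{\crej})$ is a legitimate \DNS witness, establishing the first implication.

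For the second claim I would not reprove anything: that \DNS-authentication implies \QIND secrecy is already established in the literature~\cite{BCG+02,GYZ17}, and the theorem statement only needs this as a ``recall.'' I do not anticipate a real obstacle, precisely because \QCA was engineered as a strengthening of \DNS; the entire content is the structural observation that imposing the extra constraint \eqref{eq:extra-constraint} cannot destroy any \DNS guarantee, plus the short completely-positivity check above. The one spot worth stating carefully is that the simulator $\Lambda^{\acc}$ in \eqref{eq:extra-constraint} is well-defined as a map on register $B$ alone (the registers $M$, $M'$, $T$ are all either fixed inputs or projected out by the sandwiching), so that it genuinely carries the signature $B\to\tilde B$ required by \DNS.
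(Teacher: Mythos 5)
Your proposal is correct and follows essentially the same route as the paper, which simply observes that \QCA is \DNS with the accept-branch map pinned down to the simulator of \eqref{eq:extra-constraint} (so any \QCA witness is a \DNS witness) and then cites \cite{BCG+02,GYZ17} for \DNS $\Rightarrow$ \QIND. Your added check that the pinned-down $\Lambda^{\acc}$ is completely positive and has signature $B\to\tilde B$ is a correct, if routine, elaboration of what the paper leaves implicit.
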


It is not difficult to see that the security proof in Theorem 5.1 of~\cite{BW16} (for establishing \DNS of the Clifford scheme) actually applies to arbitrary 2-designs, and in fact proves \QCA and not only \DNS. We thus have that the scheme \des fulfills ciphertext authentication. For details on the separation between \QCA and \DNS, see the appendix of the full version of this paper~\cite{AGM17eprint}.

%%%%%%%%%%%%%%%%%%%%%%%%%%%%%%%%%%%%%%%%%
\subsubsection{Computational-security variant.}

We now briefly record a computational-security variant of one-time ciphertext authentication, which simply requires that all elements in Definition~\ref{def:qCauth} are efficient.

\begin{definition}\label{def:cqCauth}
	A \SKQES $\Pi=(\KeyGen, \Enc, \Dec)$ is {\em computationally ciphertext authenticating (\cQCA)} if, for any efficiently implementable attack map $\Lambda_{CB\to C\tilde B}$, the effective attack $\tilde\Lambda_{MB\to M\tilde B}$ is computationally indistinguishable from the simulator:
	\begin{equation}\label{eq:simulator1}
	\Lambda^{\mathrm{sim}}_{MB\to M\tilde B}=\id_M\otimes\Lambda^{\acc}_{B\to \tilde B}+\proj\bot_M\otimes \Lambda^{\rej}_{B\to \tilde B}.
	\end{equation}
	Here the simulator is given by:
	\begin{align}\label{eq:simulator2}
	\Lambda^{\acc}_{B\to \tilde B}&=\mathbb{E}_{k,r}\left[\bra{\Phi_{k, r}} V_k^\dagger\Lambda \left(\Enc_{k;r}\left(\phi^+_{MM'}\otimes(\cdot)_B\right)\right)V_k\ket{\Phi_{k, r}}\right]\text{ and}\nonumber\\
	\Lambda^{\rej}_{B\to \tilde B}&=\mathbb{E}_{k,r}\left[\tr\left(\one-\proj{\Phi_{k, r}}\right) V_k^\dagger\Lambda \left(\Enc_{k;r}\left(\phi^+_{MM'}\otimes(\cdot)_B\right)\right)V_k\right] \ ,
	\end{align}
	where:
	$\ket{\Phi_{k, r}} = \ket{\phi^+}_{MM'}\otimes\ket{\psi^{(k,r)}}_T.$
\end{definition}
Because we fix the form of the simulator in the reject case, the simulator is efficiently implementable just as in~\cite{BW16} for schemes that satisfy \expref{Condition}{con:efficient}.
It is straightforward to define a computational variant of \DNS~\cite{BW16}, which we denote by \cDNS. Given that Theorem~\ref{thm:qCauth2DNS-IND} only talks about computationally bounded quantum adversaries, it also applies to \cDNS. In particular we have the following.

\begin{proposition}\label{prop:cQCAcDNS-QIND}
If a \SKQES is \cQCA, then it is also \cDNS; in particular, it satisfies \QIND.
\end{proposition}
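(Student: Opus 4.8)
The plan is to exploit the fact that \cQCA is, by design, just \cDNS with one additional constraint—namely that the simulator is forced to take the specific form \eqref{eq:simulator2}—and then to recycle the argument behind \expref{Theorem}{thm:qCauth2DNS-IND} verbatim in the computational regime. First I would make \cDNS explicit as the computational analogue of \expref{Definition}{def:auth}: a \SKQES is \cDNS if, for every efficiently implementable attack $\Lambda_{CB\to C\tilde B}$, the effective map $\tilde\Lambda_{MB\to M\tilde B}$ is computationally indistinguishable from \emph{some} simulator $\id_M\otimes\Gamma^{\acc}_{B\to\tilde B}+\proj\bot_M\otimes\Gamma^{\rej}_{B\to\tilde B}$, where $\Gamma^{\acc},\Gamma^{\rej}$ are CP maps whose sum is TP. Comparing with \expref{Definition}{def:cqCauth}, the only difference is that \cQCA pins down $\Gamma^{\acc},\Gamma^{\rej}$ to the explicit maps $\Lambda^{\acc},\Lambda^{\rej}$ of \eqref{eq:simulator2}. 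Hence, to obtain \cQCA $\Rightarrow$ \cDNS, it suffices to check that this particular pair is a legitimate \cDNS simulator.

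The verification is routine, and this is where the only real bookkeeping lives. Complete positivity of $\Lambda^{\acc}$ and $\Lambda^{\rej}$ is clear once one reads them as compressions of the positive operator $\omega(Z_B):=\mathbb{E}_{k,r}[V_k^\dagger\Lambda(\Enc_{k;r}(\phi^+_{MM'}\otimes Z_B))V_k]$ by the complementary projectors $\proj{\Phi_{k,r}}$ and $\one-\proj{\Phi_{k,r}}$ on $MM'T$: the accept branch is $\bra{\Phi_{k,r}}\omega\ket{\Phi_{k,r}}$ and the reject branch is $\tr_{MM'T}[(\one-\proj{\Phi_{k,r}})\,\omega\,]=\tr_{MM'T}[\sqrt{Q}\,\omega\sqrt{Q}]$ with $Q=\one-\proj{\Phi_{k,r}}$, both manifestly CP, and averaging over $k,r$ is a convex combination that preserves this. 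For the trace condition I would observe that $\{\proj{\Phi_{k,r}},\,\one-\proj{\Phi_{k,r}}\}$ is a complete two-outcome measurement, so summing the two branches reconstitutes the full trace over $MM'T$; using that $V_k$ is unitary, that $\Lambda$ and $\Enc_{k;r}$ are trace-preserving, and that $\tr[\phi^+_{MM'}]=1$, one gets $\tr[(\Lambda^{\acc}+\Lambda^{\rej})(Z_B)]=\tr[Z_B]$ for all $Z_B$. Thus $\Lambda^{\acc}+\Lambda^{\rej}$ is TP, and the \cQCA simulator witnesses \cDNS.

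It then remains to show \cDNS $\Rightarrow$ \QIND, and here I would simply transport the information-theoretic implication \DNS $\Rightarrow$ \QIND used inside \expref{Theorem}{thm:qCauth2DNS-IND} (via~\cite{BCG+02,GYZ17}) into the computational setting. The crucial point is that \QIND (\expref{Definition}{def:IND}) only quantifies over QPT distinguishers, so the argument never needs more than \emph{computational} closeness of the effective map to its simulator. Concretely, a QPT \QIND-adversary distinguishing encryptions of $\rho_M$ from encryptions of $\proj 0_M$ would, by the standard reduction, induce an efficient attack whose effective map is computationally distinguishable from the simulator of \eqref{eq:simulator2}; but that simulator acts as $\id_M$ on the plaintext in the accept branch and is independent of $M$ otherwise, so it leaks nothing about the plaintext—contradicting \cDNS. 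I expect the main obstacle to be purely presentational: getting the trace-preservation bookkeeping of the middle paragraph to go through cleanly (in particular the partial-trace cyclicity $\tr_{MM'T}[Q\omega]=\tr_{MM'T}[\sqrt Q\,\omega\sqrt Q]$ that secures positivity of the reject branch), since the substantive content of the second implication is inherited wholesale from \expref{Theorem}{thm:qCauth2DNS-IND}.
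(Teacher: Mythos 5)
Your proposal is correct and takes essentially the same approach as the paper: the paper likewise reads \cQCA as \cDNS with the simulator pinned down to the form of Equation~\eqref{eq:simulator2}, and obtains \QIND by observing that the \DNS~$\Rightarrow$~\QIND argument behind Theorem~\ref{thm:qCauth2DNS-IND} (via~\cite{BCG+02,GYZ17}) involves only computationally bounded adversaries, so it transfers to \cDNS. The only difference is presentational: the paper leaves implicit the routine check that the fixed accept/reject maps are CP and sum to a TP map, which you carry out explicitly (and correctly, including the partial-trace manipulation securing positivity of the reject branch).
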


%%%%%%%%%%%%%%%%%%%%%%%%%
%%%%%%%%%%%%%%%%%%%%%%%%%
%%%%%%%%%%%%%%%%%%%%%%%%%
%%%%%%%%%%%%%%%%%%%%%%%%%
%%%%%%%%%%%%%%%%%%%%%%%%%
%%%%%%%%%%%%%%%%%%%%%%%%%
%%%%%%%%%%%%%%%%%%%%%%%%%
%%%%%%%%%%%%%%%%%%%%%%%%%
\section{Quantum Unforgeability}
%%%%%%%%%%%%%%%%%%%%%%%%%

Translating the standard classical intuition of ciphertext unforgeability to the quantum setting appears nontrivial. As we develop our approach, it will be useful to keep in mind a ``prototype'' scheme that should (intuitively) satisfy quantum unforgeability against a polynomial-time adversary making an arbitrary number of queries. This is the scheme $\des^\PRF$, which encrypts via:
$$
\Enc_k ( \rho) = U_{f_k(r)} \left( \rho \otimes \egoketbra{0^n} \right) U_{f_k(r)}^\dagger \otimes \egoketbra{r}
$$
where $k$ is a key for the $\pqPRF$ $f$ and $r$ is randomness selected freshly for each encryption. This scheme is characterized (via \expref{Lemma}{lem:SKQES-char}) by the key-independent ``tag state'' $\egoketbra{0^n} \otimes \tau$ (where $\tau$ is the maximally mixed state) 
and the unitary $V_k$ which applies $U_{f_k(\cdot)}$ on the first two registers, controlled on the third register (i.e., the randomness $r$.)

To see why this scheme should be unforgeable, assume for the moment that $U_s$ is a Haar-random unitary and $f_k$ is a perfectly random function. Intuitively, from the point of view of the adversary, each plaintext is mapped into a subspace which is fresh, independent, random, and exponentially-small as a fraction of the total dimension (of the ciphertext space). Security should then reduce to the security of multiple uses of a \QCA one-time scheme, each time with a freshly generated key. We will carefully formalize this intuition in a later section.

\subsubsection{Formal definitions.}
%%%

Our definition will compare the performance of an adversary in two games: an unrestricted forgery game, and a cheat-detecting game. Fix an \SKQES $\Pi = (\KeyGen, \Enc, \Dec)$ and let $\adver$ be an adversary in the following.

\begin{experiment}\label{exp:qtest-ctxt}
The $\QUFforge(\Pi, \algo A, n)$ experiment:
	\begin{algorithmic}[1]
		\State $k \from \KeyGen(1^n)$;
		\State \textbf{if} $\Dec_k(\algo A^{\Enc_k}(1^n)) \neq \bot$, \textbf{output} $\win$; otherwise \textbf{output} $\rej$.
	\end{algorithmic}
\end{experiment}
We will think about this experiment as taking place between the adversary $\adver$ and a challenger $\chall$, who generates the key $k$, answers the queries of $\adver$, and then decrypts to see the outcome of the game.  

We now consider a different experiment where $\chall$ attempts to check $\adver$ for cheating. We will make use of the maximally entangled state $\ket{\phi^+}_{M'M''}$ on two copies ($M'$ and $M''$) of the plaintext register, and the corresponding measurement $\{\Pi^+_{M'M''}, \one - \Pi^+_{M'M''}\}$. We will also need a measurement that will help $\chall$ identify previously generated ciphertexts. Recall from~\expref{Section}{sec:characterization} that correctness implies that $\Enc$ can be written in the form $\Enc_k(X)=V_k\bigl(X_M\otimes \sigma_k \bigr)V_k^\dagger$ where $\sigma^{(k)}_T=\sum_r p_k(r) \Pi_{k, r}$ and $\Pi_{k, r} = \proj{\psi^{(k,r)}}_T$. This also defines, for each $(k, r)$, the two-outcome measurement $\{\Pi_{k, r}, \one - \Pi_{k, r}\}$. In all these two-outcome measurements, we denote the first outcome by $0$ and the second outcome by $1$. Notice that these projectors commute, as $\ket{\psi^{(k,r)}}_T$ are elements of an orthonormal basis of eigenvectors.

\begin{experiment}\label{exp:qcheat-ctxt}
	The $\QUFcheat(\Pi, \adver, n)$ experiment:
	\begin{algorithmic}[1]
		\State $\chall$ runs $k \from \KeyGen(1^n)$;
		\State $\adver$ receives $1^n$ and oracle access to $E_k$ (controlled by $\chall$), defined as follows:
		\begin{enumerate}[(1)]
			\item $\adver$ sends plaintext register $M$ to $\chall$;
			\item $\chall$ discards $M$ and prepares $\ket{\phi^+}_{M'M''}$;
			\item $\chall$ applies $\Enc_k$ to $M'$ using fresh randomness $r$, sends result $C$ to $\adver$;
			\item $\chall$ stores $(M'', r)$ in a set $\mathcal M$.
		\end{enumerate}
		\State $\adver$ sends final output register $C_\textsf{out}$ to $\chall$;
		\State $\chall$ applies $V_k^\dagger$ to $C_\textsf{out}$, places results in $MT$; \label{line:apply-unitary}
		\For{\textbf{each} $(M'', r) \in \mathcal M$}
				\State $\chall$ applies $\{\Pi_{k, r}, \one - \Pi_{k, r}\}$  to $T$;\label{line:measure1}
				\If{outcome is $0$}:
					\State $\chall$ applies $\{\Pi^+, \one - \Pi^+\}$ to $MM''$;\label{line:measure2}
					\State \textbf{if} {outcome is $0$}: \textbf{output} $\cheat$\label{line:check-tag-qcheat-ctxt}; \textbf{end if}
			\EndIf
		\EndFor
		\State \textbf{output }$\rej$.
	\end{algorithmic}
\end{experiment}

Note that the experiment always outputs \rej if $\adver$ makes no queries. We emphasize that $\chall$ is a fixed algorithm defined by the security game and the properties of $\Pi$. The challenger is efficient if the states $\proj{\psi^{(k,r)}}$ and the unitary $V_k$ are efficiently implementable and the probability distribution $p_k$ is efficiently sampleable. We believe this is not a significant constraint. It is easily satisfied in all schemes we are aware of. Moreover, in light of \expref{Lemma}{lem:SKQES-char}, it seems unlikely that any reasonable form of ciphertext unforgeability can be defined without this requirement. We are now ready to define security. 

\begin{definition}\label{def:quf-ptxt}
	A $\SKQES$ $\Pi$ has unforgeable ciphertexts (or is $\QUF$) if, for all QPT adversaries $\adver$, it holds:
	$$
	\left|\Pr[\QUFforge(\Pi, \adver, n) \to \win] - \Pr[\QUFcheat(\Pi, \adver, n) \to \cheat]\right| \leq \negl(n)\,.
	$$
\end{definition}

It is straightforward to adapt the above definition to the bounded-query setting, where we fix some positive integer $t$ (at scheme design time) and demand that adversaries can make no more than $t$ queries. We call the resulting notion $\QUF_t$. One then has the obvious implications $\QUF \Rightarrow \QUF_t \Rightarrow \QUF_{t-1} \foral t \in \NN$.

Let us briefly discuss a potential concern with these definitions. Consider the repeated measurements applied to the adversary's final output $C_\textsf{out}$ (\expref{Line}{line:measure1} and \expref{Line}{line:measure2}) in \QUFcheat. The first measurement simply compares the randomness of $C_\textsf{out}$ to that of previously generated ciphertexts. Such measurements will not disturb properly-formed ciphertexts at all, and malformed ones will not affect our security definition. The second measurement actually measures the plaintext register $M$, and thus might (a priori) appear to be concerning. Indeed, if multiple such measurements are applied to $M$, this might open up a vulnerability to attacks. As it turns out, this is not a problem. We will shortly show (see \expref{Theorem}{thm:qUFCTXT-qINDCPA} below) that \QUF implies \QINDCPA. For \QINDCPA schemes, any given random string $r$ is only chosen with negligible probability at encryption time (if not, querying the encryption oracle a polynomial number of times with the challenge plaintext would be enough to compromise security). It follows that, with overwhelming probability, the random strings chosen in the different oracle calls in \QUFcheat are pairwise distinct. This, in turn, implies that the measurement in \expref{Line}{line:measure2} is applied at most once in a given run of the experiment.

\subsubsection{Relationship to other security notions.}\label{sec:comparisonquantum}
%%%

It is well-known that even one-time quantum authentication implies \QIND secrecy~\cite{BCG+02}. As we now show, \QUF implies an even stronger notion of secrecy, \QINDCPA. This is a significant departure from classical unforgeability, which is completely independent of secrecy.

\begin{theorem}\label{thm:qUFCTXT-qINDCPA}
If a \SKQES satisfies \QUF, then it also satisfies \QINDCPA.
\end{theorem}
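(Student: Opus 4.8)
The plan is to prove the contrapositive: assuming $\Pi$ is $\QUF$ but \emph{not} $\QINDCPA$, I would construct QPT forgers whose $\QUFforge$ and $\QUFcheat$ behaviors differ non-negligibly, contradicting $\QUF$. So fix a $\QINDCPA$ adversary $(\M,\D)$ with non-negligible advantage $\delta(n)$; WLOG $p_{\mathrm{real}}-p_{\mathrm{ideal}}\ge\delta$, where $p_{\mathrm{real}}=\Pr[\D((\Enc_k\otimes\one)\rho_{ME})=1]$ and $p_{\mathrm{ideal}}=\Pr[\D(\Enc_k(\egoketbra{0}_M)\otimes\rho_E)=1]$, both with honest oracle access for $\M,\D$. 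I would first record a preliminary observation: for any $\QUF$ scheme, a \emph{fixed} ciphertext $c_0$ (independent of the key and of the challenger's stored data) decrypts to $\bot$ with overwhelming probability. This follows by running the trivial adversary that makes no queries and outputs $c_0$: in $\QUFcheat$ the set $\mathcal M$ is empty so $\cheat$ is never declared, while in $\QUFforge$ it wins exactly when $\Dec_k(c_0)\neq\bot$; hence $\QUF$ forces $\Pr[\Dec_k(c_0)\neq\bot]\le\negl(n)$. This licenses using $c_0$ as a guaranteed-invalid output below.

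Next I would define two forgers $\algo B_1,\algo B_2$. Both run $(\M,\D)$ internally, relaying all of $\M$'s and $\D$'s encryption queries to their own oracle, and both make a single distinguished ``challenge query'': $\algo B_1$ submits the plaintext register $M$ of $\rho_{ME}$, whereas $\algo B_2$ submits $\egoketbra{0}_M$. Each keeps $E$, feeds the received ciphertext register together with $E$ to $\D$, and obtains a bit $b$. If $b=1$ the forger makes one further oracle query (on $\egoketbra{0}$) and outputs the resulting ciphertext $C'$; if $b=0$ it outputs the fixed invalid ciphertext $c_0$.

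The crux is evaluating both quantities. In $\QUFforge$ the oracle is honest, so the view of $(\M,\D)$ is exactly the $\QINDCPA$ game with challenge $\Enc_k(\rho_M)$ for $\algo B_1$ (resp.\ $\Enc_k(\egoketbra{0})$ for $\algo B_2$); when $b=1$ the output $C'$ is a genuine ciphertext, hence valid, and when $b=0$ the output $c_0$ is invalid by the preliminary observation, so $\Pr[\QUFforge\to\win]=p_{\mathrm{real}}$ for $\algo B_1$ and $=p_{\mathrm{ideal}}$ for $\algo B_2$ (up to $\negl(n)$). In $\QUFcheat$ the submitted challenge plaintext is discarded and replaced by the $M'$-half of $\ket{\phi^+}_{M'M''}$, so the ciphertext reduces to $\Enc_k(\tau_M)$ and the kept register to $\rho_E$ \emph{identically for $\algo B_1$ and $\algo B_2$}, while $\M,\D$'s relayed queries are answered identically in the two runs; hence $b$ has the \emph{same} distribution, say $p_{\mathrm{cheat}}$, in both cheat runs. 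When $b=1$, $C'$ is an untouched oracle answer whose tag matches its stored $r$ and which is still maximally entangled with the stored $M''$, so \expref{Line}{line:measure1} and \expref{Line}{line:measure2} both yield $0$ and the experiment returns $\cheat$; when $b=0$, $c_0$ matches no stored tag and is unentangled, so $\cheat$ is not declared. Thus $\Pr[\QUFcheat\to\cheat]=p_{\mathrm{cheat}}$ for both forgers.

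Subtracting, the $\QUF$ gaps are $p_{\mathrm{real}}-p_{\mathrm{cheat}}$ for $\algo B_1$ and $p_{\mathrm{ideal}}-p_{\mathrm{cheat}}$ for $\algo B_2$, whose difference is $p_{\mathrm{real}}-p_{\mathrm{ideal}}\ge\delta$; hence at least one forger has gap $\ge\delta/2$, contradicting $\QUF$. I expect the main obstacle to be exactly the oracle mismatch in the cheat game: the cheat oracle returns encryptions of entangled halves and so answers $\D$'s \emph{own} queries differently from the honest oracle, and one cannot bound that difference without already assuming the secrecy one is trying to prove. The resolving trick is that this problematic $p_{\mathrm{cheat}}$ term is \emph{identical} for $\algo B_1$ and $\algo B_2$ (precisely because the challenge plaintext is discarded in $\QUFcheat$), so it cancels in the difference of the two gaps and leaves exactly the $\QINDCPA$ advantage. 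A minor secondary point to check is that the $b=0$ branch really outputs an invalid, non-cheating ciphertext, which is dispatched by the preliminary observation.
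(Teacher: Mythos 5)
Your proof is correct, and it shares the paper's central insight: in \QUFcheat every submitted plaintext is discarded, so the simulated \QINDCPA adversary's output bit carries no information about which challenge world it is in, whereas in \QUFforge the \QINDCPA game is simulated faithfully; converting that bit into a valid-versus-invalid output then transfers the \QINDCPA advantage into a forge/cheat gap. The reductions are assembled differently, though. The paper runs a \emph{single} forger that flips an internal coin $b$ (real challenge vs.\ maximally mixed), stores one honest oracle response, and outputs it iff the simulated adversary guesses $b$; its analysis therefore needs the stored ciphertext to actually \emph{trigger} cheat detection in \QUFcheat, and a random ciphertext-register state not to --- exactly the place where randomness collisions and small plaintext/tag dimension are delicate. (The paper's separate discussion of repeated measurements dismisses collisions by invoking this very theorem, which would be circular inside the contrapositive proof itself.) Your two-forger version never needs to evaluate the cheat probability at all: it is \emph{exactly} equal for $\algo B_1$ and $\algo B_2$ by construction, so it cancels from the difference of the two gaps, and the only facts you need are correctness (an honest encryption of $\egoketbra{0}$ is valid) and your preliminary observation that the fixed $c_0$ is invalid --- which you derive from \QUF itself via the no-query adversary, the same trick the paper uses inside the proof of \expref{Theorem}{thm:ae}. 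The one imprecision is your intermediate claim that $\Pr[\QUFcheat\to\cheat]$ equals $p_{\mathrm{cheat}}$ exactly; that identification can fail when random strings collide across queries, but, as your closing paragraph correctly notes, only the \emph{equality} of the two forgers' cheat probabilities enters the final inequality, so the argument stands and is in fact somewhat more robust than the paper's own.
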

\begin{proof}
Let $\Pi$ be a $\SKQES$, and let $\adver$ be an adversary winning $\QINDCPA$ with non-negligible advantage $\nu$ over guessing, with pre-challenge algorithm $\adver_1$ and post-challenge algorithm $\adver_2$. We will build an adversary $\badver$ with black-box oracle access to $\adver$, able to distinguish between the $\QUFforge$ game and the $\QUFcheat$ game with non-negligible advantage over guessing, as follows:
	\begin{enumerate}
		\item $\badver$ runs $\adver_1(1^n)$, answering its queries using his own oracle $\algo O$;
		\item get registers $M$ (challenge plaintext) and $B$ (side information) from $\adver_1$;
		\item choose a random bit $b \rand \bit$; if $b = 1$, then replace contents of $M$ with a maximally-mixed state;
		\item invoke oracle $\algo O$ on $M$ and place result in register $C$;
		\item run $\adver_2$ on registers $C$ and $B$, receiving output $b' \in \bit$;
		\item if $b = b'$, then output \real; else output \real or \ideal with equal probability.
	\end{enumerate}
Note that, if $\badver$ is playing $\QUFforge$, then $\algo O = \Enc_k$ and we are faithfully simulating the $\QINDCPA$ game for $\adver$. It follows that $b = b'$ with probability at least $1/2 + \nu$. If $\badver$ is playing $\QUFcheat$ instead, $\algo O$ discards its input (and replaces it with half of a maximally-entangled state) on every call. In that case, all inputs to $\adver_1$ and $\adver_2$ are completely uncorrelated with $b$, so that $b'=b$ with probability $1/2$. Therefore, $\adver'$ will correctly guess the game it is playing in with non-negligible advantage.
	
Now it is easy to see how to use $\badver$ to violate the main condition in the definition of $\QUF$ with the same distinguishing advantage. First, query the oracle once and store the output in register $C$. Next, run $\badver$. If $\badver$ outputs \real, then output the contents of $C$ (achieving $\win$ in $\QUFforge$). Otherwise, output a random state in the ciphertext register (achieving $\rej$ in $\QUFcheat$). 
\qed
\end{proof}

We also 
study the restriction of the quantum notion \QUF to the classical case, i.e., classical symmetric-key encryption schemes (\CSKE) vs classical adversaries. We denote this classical restriction by \UF. In this notion, the classical unrestricted forgery game $\UFforge$ is defined precisely as in \expref{Experiment}{exp:qtest-ctxt}. Regarding the quantum game $\QUFcheat$, notice that, in any classical scheme, one can apply ciphertext verification to a string $c$ as follows: (i.) make a copy $c'$ of $c$, (ii.) decrypt $c$, (iii.) if decryption rejected, output reject, and otherwise output $c'$. In other words, all classical encryption schemes automatically satisfy \expref{Condition}{con:efficient}. The appropriate classical restriction $\UFcheat$ of this game thus proceeds as \expref{Experiment}{exp:qcheat-ctxt}, with two modifications: (i.) in step 2: , $\algo C$ replaces the plaintext in register $M_j$ by a random plaintext, encrypts it, and stores a copy of the resulting ciphertext in $C_j$; and (ii.) in step 4: , without decrypting, the game outputs $\cheat$ if the challenge ciphertext $C$ equals any one of the saved $C_j$'s. We then have the following.
\begin{definition}\label{def:uf-ptxt}
A \CSKE $\Pi$ has unforgeable ciphertexts (or is $\UF$) if, for all PPT adversaries $\adver$,
$$
\left|\Pr[\UFforge(\Pi, \adver, n) \to \win] - \Pr[\UFcheat(\Pi, \adver, n) \to \cheat]\right| \leq \negl(n)\,.
$$
\end{definition}

The proof of \expref{Theorem}{thm:qUFCTXT-qINDCPA} carries over easily to the classical case. Moreover, one can show how \UF implies the classical security notion of {\em integrity of ciphertexts} \INTCTXT~\cite{BN00}, which states that no bounded adversary with oracle access to an encryption oracle can produce a ciphertext which is at the same time (i.) valid, and (ii.) fresh, i.e., never output by the oracle. Recall that, classically, it is known~\cite{BN00} that \INTCTXT plus \INDCPA defines authenticated encryption \CAE. Therefore, the notion of unforgeability of ciphertexts, when restricted to the classical case, is at least as strong as authenticated encryption. However, one can also show the converse, i.e., \CAE implies \UF.

\begin{theorem}\label{thm:ae}
$\UF \iff \CAE$.
\end{theorem}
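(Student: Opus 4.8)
The goal is to prove $\UF \iff \CAE$ for classical symmetric-key encryption schemes. The plan is to leverage the classical characterization of authenticated encryption, namely $\CAE = \INDCPA \wedge \INTCTXT$~\cite{BN00}, and reduce each direction to a comparison between the $\UFforge$ and $\UFcheat$ experiments of \expref{Definition}{def:uf-ptxt}.

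For the direction $\UF \Rightarrow \CAE$, I would argue that $\UF$ implies both $\INDCPA$ and $\INTCTXT$ separately. The privacy half, $\INDCPA$, follows immediately from the classical analogue of \expref{Theorem}{thm:qUFCTXT-qINDCPA}, which the text already asserts carries over. For $\INTCTXT$, I would show the contrapositive: suppose an adversary $\adver$ produces a fresh, valid forgery with non-negligible probability. Such an $\adver$ wins $\UFforge$ (it outputs a valid ciphertext) with non-negligible probability, while in $\UFcheat$ its output is declared $\cheat$ only when the produced ciphertext coincides with one of the stored $C_j$'s — but in $\UFcheat$ the challenger encrypts \emph{random} plaintexts, so a forgery that is genuinely fresh (distinct from all oracle responses) triggers $\cheat$ with only negligible probability. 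The gap between these two quantities contradicts $\UF$. I must be careful here to match the $\UFcheat$ bookkeeping: the stored ciphertexts $C_j$ are encryptions of \emph{random} replacement plaintexts, not of $\adver$'s actual queries, so I should argue that, by $\INDCPA$, replacing the real plaintexts by random ones does not change $\adver$'s forging probability by more than a negligible amount, and that freshness with respect to the real transcript coincides (up to negligible terms) with freshness with respect to the random-plaintext transcript.

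For the converse $\CAE \Rightarrow \UF$, I would use $\CAE = \INDCPA \wedge \INTCTXT$ to control both experiments. Fix any PPT adversary $\adver$ and consider $\Pr[\UFforge \to \win]$ versus $\Pr[\UFcheat \to \cheat]$. By $\INDCPA$, the view $\adver$ receives in $\UFforge$ (real encryptions of its queries) is computationally indistinguishable from the view in $\UFcheat$ (encryptions of random plaintexts), so $\adver$'s final output ciphertext $C$ has essentially the same distribution in both games up to negligible error; in particular $\Pr[\UFforge \to \win]$ differs negligibly from the probability that $\adver$ outputs a valid $C$ when fed the random-plaintext transcript. Now in this random-plaintext world I split on whether $C$ equals one of the stored $C_j$'s: if it does, $\UFcheat$ outputs $\cheat$; if it is valid but fresh, then $\INTCTXT$ guarantees this happens only with negligible probability. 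Hence $\Pr[\text{valid }C] \le \Pr[C \text{ matches some } C_j] + \negl(n) = \Pr[\UFcheat \to \cheat] + \negl(n)$, and the reverse inequality is trivial since a match forces validity. Combining the chain of estimates yields $\bigl|\Pr[\UFforge \to \win] - \Pr[\UFcheat \to \cheat]\bigr| \le \negl(n)$, which is exactly $\UF$.

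The main obstacle I anticipate is the careful reconciliation of the two transcripts in the presence of the random-plaintext substitution built into $\UFcheat$: the definition of $\cheat$ tests equality against ciphertexts of \emph{fresh random} plaintexts rather than against $\adver$'s own queries, so ``freshness'' in the $\INTCTXT$ sense and ``non-matching'' in the $\UFcheat$ sense are only equivalent after invoking $\INDCPA$ to swap the transcripts and after ruling out spurious collisions between the adversary's forgery and the random-plaintext encryptions. I would handle this with a short hybrid argument that interpolates between the real-query transcript and the random-plaintext transcript, bounding each hop by the $\INDCPA$ advantage, and would separately note that accidental ciphertext collisions among distinct encryptions occur with negligible probability for any $\INDCPA$-secure scheme (since otherwise the encryption oracle would leak plaintext equality). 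Everything else is routine reduction bookkeeping.
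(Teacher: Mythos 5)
Your overall plan is workable and the theorem is true along your route, but there is a recurring gap: you invoke \INDCPA to transfer the probability of events that are not efficiently observable. \INDCPA only guarantees that no efficient test \emph{computable from the transcript alone} (without the secret key and without any decryption oracle) can distinguish the real-query transcript from the random-plaintext transcript. The event ``$\adver$'s output ciphertext is valid'' requires decrypting to check, so \INDCPA does not imply its probability is preserved across the two worlds. This matters most in your contrapositive for $\UF \Rightarrow \INTCTXT$: there you apply the transfer to a scheme that is assumed \emph{not} to be \INTCTXT, exactly the regime where it can fail. Concretely, take plaintexts to be integers $m < 2^{n-1}$, let $\Enc_k(m)=(r,\,F_k(r)+m \bmod 2^n)$ for a PRF $F$, and let $\Dec_k(r,c)$ output $c-F_k(r) \bmod 2^n$ if this value is $<2^{n-1}$ and $\bot$ otherwise. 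This scheme is \INDCPA. The adversary that queries $m=0$, receives $(r,s)$, and outputs $(r,\,s+2^{n-1}-1 \bmod 2^n)$ forges successfully with probability $1$ given the real transcript, but given the random-plaintext transcript its output decrypts to $u+2^{n-1}-1 \bmod 2^n$ for unknown random $u$, which is valid only when $u=0$, i.e.\ with probability $2^{-(n-1)}$. So ``replacing the real plaintexts by random ones does not change $\adver$'s forging probability'' is simply false as an \INDCPA consequence. The same objection applies to the step in your $\CAE\Rightarrow\UF$ direction where you claim $\Pr[\UFforge\to\win]$ differs negligibly from the probability of outputting a \emph{valid} $C$ on the random-plaintext transcript.

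The fix is to route every world-swap through the \emph{replay} event ``the output equals one of the stored oracle answers $C_j$,'' which is a string comparison and hence is something an \INDCPA hybrid distinguisher can test. For $\UF\Rightarrow\INTCTXT$: in the forge world, $\Pr[\UFforge\to\win]=\Pr[\text{replay}]+\Pr[\text{fresh valid}]\ge\Pr[\text{replay}]+\nu$, while $\Pr[\UFcheat\to\cheat]$ is exactly the replay probability in the cheat world; swapping only the replay probabilities via \INDCPA gives a gap of at least $\nu-\negl(n)$. For $\CAE\Rightarrow\UF$: first use \INTCTXT \emph{in the forge world} to replace ``valid'' by ``replay'' up to negligible error, then swap the replay probability by \INDCPA, then observe it equals $\Pr[\UFcheat\to\cheat]$. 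With that reordering your argument goes through. For comparison, the paper avoids these transcript-swapping subtleties entirely: for $\UF\Rightarrow\INTCTXT$ it wraps the \INTCTXT adversary so that any replayed output is overwritten by a random ciphertext, which makes the cheat probability negligible by construction (using only the observation, under \UF, that random ciphertexts are invalid); and for $\CAE\Rightarrow\UF$ it does not go through the $\INDCPA\wedge\INTCTXT$ decomposition at all, but uses Shrimpton's real/ideal characterization of \CAE directly, converting a \UF gap into a real/ideal distinguishing advantage with a single decryption query on the adversary's final output.
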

\begin{proof}
The first non-trivial part to prove is $\UF \implies \INTCTXT$. Let $\Pi$ be an \INTCTXT insecure \CSKE. Then there exists an adversary \adver with oracle access to $\Enc_k$ which, with non-negligible probability $\nu$, outputs a ciphertext $c$ which was never output by the encryption oracle. Define a PPT algorithm \badver with oracle access to $\Enc_k$, as follows. First, \badver executes \adver and records a list $L$ of all $\Enc_k$'s answers $c_j$ output to \adver. When \adver outputs a ciphertext $c$, if $c \in L$, \badver outputs a random ciphertext $c'$; else it outputs $c$. For \badver, the success probabilities in the games defining \UF are as follows:
\begin{itemize}
\item in the $\UFforge$ experiment, since $c$ is a fresh ciphertext  with non-negligible probability $\nu$,  \badver wins \UFforge with probability at least $\nu$.
\item In $\UFcheat$ instead, whenever the ciphertext is not fresh, \badver replaces it with a random one, and hence only wins \UFcheat with negligible probability.
\end{itemize}
The fact that a random ciphertext is invalid with overwhelming probability follows by considering an adversary that does not make any queries. So we have:
$$
\left|\Pr[\UFforge(\Pi, \adver', n) \to \win] - \Pr[\UFcheat(\Pi, \adver', n) \to \cheat]\right| \geq  \nu,
$$
and hence $\Pi$ cannot be $\UF$.

The other direction to prove is $\CAE \implies \UF$. For this, we will use an equivalent characterization of \CAE, also known in the literature as $\mathsf{IND\mbox{-}CCA3}$~\cite{Shrimpton04}. In this definition, the adversary's goal is to distinguish whether he's playing in the \CAEreal world, or in the \CAEideal world. In the \CAEreal world, the adversary can interact freely with an encryption oracle $\Enc_k$, and with a restricted decryption oracle $\Dec_k$ which always rejects ($\bot$) decryption queries over any ciphertext which was output by $\Enc_k$. In the \CAEideal world, instead, the adversary is interacting with an oracle $\Enc_k(\$)$ (which ignores the input query, and always returns the encryption of a fresh random plaintext), and a constant $\bot$ oracle (which simulates the decryption oracle but always rejects any query). A scheme $\Pi$ is \CAE secure iff, for any adversary \adver it holds:
$$
\left| \Pr \left[ \CAEreal(\Pi,\adver,n) \to 1 \right] - \Pr \left[ \CAEideal(\Pi,\adver,n) \to 1 \right] \right| \leq \negl(\secpar)\,.
$$
Now, let \adver be a PPT adversary breaking \UF for a scheme $\Pi$. This means that there exists a non-negligible function $\nu$ such that:
$$
\left|\Pr[\UFforge(\Pi, \adver, n) \to \win] - \Pr[\UFcheat(\Pi, \adver, n) \to \cheat]\right| \geq \nu(n)\,.
$$
We use \adver to build an adversary \badver able to distinguish \CAEreal from \CAEideal. The new adversary \badver runs \adver and forwards all of \adver's encryption queries to his own encryption oracle. Finally, when \adver outputs a ciphertext $c$, \badver queries his own decryption oracle on $c$, and looks at the oracle's response. If the response is {\em not} $\bot$, then \badver returns \real, otherwise returns \real or \ideal with equal chance.

It is easy to see that \badver distinguishes \CAEideal from \CAEreal with non-negligible advantage at least $\nu/2$ over guessing. The reason is as follows. If \badver is in the \CAEreal world (probability $1/2$), then he is correctly simulating for $\adver$ the \UFforge game. Since \adver breaks \UF by assumption, it means that, with probability at least $\nu$, his output $c$ will be a fresh valid ciphertext; in that case, also \badver wins. On the other hand, if the world is \CAEideal, \badver still wins with probability $1/2$.\qed
\end{proof}

This means that \UF is actually {\em another characterization of authenticated encryption}. This is an interesting observation, given that \UF comes from the classical restriction of a quantum notion ``merely'' concerning the unforgeability of ciphertexts. However, we stress that this equivalence only holds at the classical level, and that this is insufficient evidence to declare that \QUF serves the same purpose quantumly as \CAE does classically. In fact, in \expref{Section}{sec:QAE} we introduce a quantum analogue of \CAE which we call \QAE, and provide stronger evidence that the latter is in fact the correct analogue.

%%%%%%%%%%%%%%%%%%%%%%%%%
%%%%%%%%%%%%%%%%%%%%%%%%%
%%%%%%%%%%%%%%%%%%%%%%%%%
%%%%%%%%%%%%%%%%%%%%%%%%%
%%%%%%%%%%%%%%%%%%%%%%%%%
%%%%%%%%%%%%%%%%%%%%%%%%%
%%%%%%%%%%%%%%%%%%%%%%%%%
%%%%%%%%%%%%%%%%%%%%%%%%%
%%%%%%%%%%%%%%%%%%%%%%%%%
\section{Quantum \INDCCAA}
%%%%%%%%%%%%%%%%%%%%%%%%%

Next, we move to the problem of defining adaptive chosen-ciphertext security for quantum encryption. In the usual classical formulation (\INDCCAA), the adversary $\algo A$ receives both an encryption oracle and a decryption oracle for the entire duration of the indistinguishability game. To eliminate the trivial strategy, we do not permit $\algo A$ to query the decryption oracle on the challenge ciphertext. This last condition does not make sense in the quantum setting, for a number of reasons we've seen before: no-cloning prevents us from storing a copy of the challenge, measurement may destroy the states involved, and so on. However, our approach to defining unforgeability can be adapted to this case. The resulting notion of {\em quantum indistinguishability under adaptive chosen-ciphertext attacks} (\QINDCCAA) can also be recast %straightforwardly to
in the public-key quantum encryption setting.

\subsubsection{Formal Definition.}
%%%

As before, we will compare the performance of the adversary in two games. In each case, the adversary $\adver=(\algo A_1,\algo A_2)$ consists of two parts (pre-challenge and post-challenge), and is playing against the challenger $\chall$, which is a fixed algorithm determined only by the security game and the scheme.

\begin{experiment}\label{exp:INDCCAA-test}
The $\QINDCCAAtest(\Pi, \algo A, n)$ experiment:
	\begin{algorithmic}[1]
		\State $\chall$ runs $k \from \KeyGen(1^n)$ and flips a coin $b \inrand \bit$;
		\State $\algo A_1$ receives $1^n$ and access to oracles $\Enc_k$ and $\Dec_k$;
		\State $\algo A_1$ prepares a side register $S$, and sends $\chall$ a challenge register $M$;
		\State $\chall$ puts into $C$ either $\Enc_k(M)$ (if $b=0$) or $\Enc_k(\tau_M)$ (if $b=1$);
		\State $\algo A_2$ receives registers $C$ and $S$ and oracles $\Enc_k$ and $\Dec_k$;
		\State $\algo A_2$ outputs a bit $b'$.  \textbf{If} $b'=b$, \textbf{output} \textsf{win}; otherwise \textbf{output} \textsf{fail}.
	\end{algorithmic}
\end{experiment}

Notice that in this game there are no restrictions on the use of $\Dec_k$ by $\algo A_2$. In particular, $\algo A_2$ is free to decrypt the challenge. In the second game, the challenge plaintext is replaced by half of a maximally entangled state, and $\adver$ only gains an advantage over guessing if he cheats, i.e., if he tries to decrypt the challenge.

\begin{experiment}\label{exp:INDCCAA-fake}
The $\QINDCCAAfake(\Pi, \algo A, n)$ experiment:
	\begin{algorithmic}[1]
		\State $\chall$ runs $k \from \KeyGen(1^n)$;
		\State $\algo A_1$ receives $1^n$ and access to oracles $\Enc_k$ and $\Dec_k$;
		\State $\algo A_1$ prepares a side register $S$, and sends $\chall$ a challenge register $M$;
		\State $\chall$ discards $M$, prepares $\ket{\phi^+}_{M'M''}$ and fresh randomness $r$, and stores $(M'', r)$; then $\chall$ encrypts the $M'$ register and sends the resulting ciphertext $C'$ to $\algo A_2$;
		\State $\algo A_2$ receives registers $C'$ and $S$ and oracles $\Enc_k$ and $D_k$, where $D_k$ is defined as follows. On input a register $C$:
			\begin{enumerate}[(1)]
			\item $\chall$ applies $V_k^\dagger$ to $C$, places results in $MT$; 
			\item $\chall$ applies $\{P_T^{\sigma_k}, \one - P_T^{\sigma_k}\}$  to $T$; 
				\If{outcome is $0$}:
					\State $\chall$ applies $\{\Pi_{k, r}, \one - \Pi_{k, r}\}$  to $T$;  
					\If{outcome is $0$}: 
						\State $\chall$ applies $\{\Pi^+, \one - \Pi^+\}$ to $M M''$;
						\State \textbf{if} {outcome is $0$}: \textbf{output} \textsf{cheat};
					\EndIf
				\Else 
					\State apply the default map for invalid ciphertexts, i.e., $\hat D_k$ to $MT$. %apply the remaining step of the decryption routine, i.e. apply $\hat D_k$. 
				\EndIf
		\State	\textbf{return} $M$;
		\end{enumerate}
		\State $\chall$ draws a bit $b$ at random. \textbf{If} $b=1$, \textbf{output} \textsf{cheat}; if $b=0$ \textbf{output} \textsf{reject}.
	\end{algorithmic}

\end{experiment}

We now define quantum \INDCCAA in terms of the advantage gap of adversaries between the above two games.\footnote{The interface that the two games provide to the adversary differ slightly in that the adversary is not asked to output a bit in the end of the \QINDCCAAfake game. This is not a problem as the games have the same interface until the second one terminates.}

\begin{definition}\label{def:qINDCCAA}
	A $\SKQES$ $\Pi$ is $\QINDCCAA$ if, for all QPT adversaries $\adver$,
	$$
	\Pr[\QINDCCAAtest(\Pi, \adver, n) \to \win] - \Pr[\QINDCCAAfake(\Pi, \adver, n) \to \cheat] \leq \negl(n)\,.
	$$
\end{definition}

The omission of absolute values in the above is intentional. Indeed, an adversary can artificially inflate his cheating probability by querying the decryption oracle on the challenge and then ignoring the result. What he should not be able to do (against a secure scheme) is make his win probability larger than his cheating probability. We note that \QINDCCAA clearly implies \QINDCCA.
\begin{proposition}
	$\QINDCCAA\implies\QINDCCA$.
\end{proposition}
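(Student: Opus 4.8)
The plan is to exhibit every \QINDCCA adversary as a special \QINDCCAA adversary that declines to use its decryption oracle once the challenge has been issued, and to check that against such an adversary the cheat-detection apparatus of \expref{Experiment}{exp:INDCCAA-fake} collapses to a fair coin. Suppose $\Pi$ fails to be \QINDCCA, witnessed by a distinguisher $(\M,\D)$ with non-negligible advantage, where $\M$ has access to $\Enc_k$ and $\Dec_k$ and outputs the challenge $\rho_{ME}$, while $\D$ has access to $\Enc_k$ only and must tell $(\Enc_k\otimes\one_E)\rho_{ME}$ from $(\Enc_k\otimes\one_E)(\egoketbra{0}_M\otimes\rho_E)$. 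I would build a \QINDCCAA adversary $\adver=(\algo A_1,\algo A_2)$ by letting $\algo A_1:=\M$ (forwarding its oracle calls to $\algo A_1$'s own $\Enc_k$ and $\Dec_k$ oracles), sending the plaintext register $M$ as the challenge and keeping $E$ as the side register $S$, and letting $\algo A_2:=\D$, which forwards its $\Enc_k$ queries and touches neither $D_k$ nor $\Dec_k$.

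Next I would analyze the two games of \expref{Definition}{def:qINDCCAA} on $\adver$. The pre-challenge phases of \expref{Experiment}{exp:INDCCAA-test} and \expref{Experiment}{exp:INDCCAA-fake} are identical---both hand $\algo A_1=\M$ the honest $\Enc_k$ and $\Dec_k$---so the challenge $\rho_{ME}$ produced by $\M$ is identically distributed in both. In the fake game, since $\algo A_2$ issues no decryption queries, the \cheat branch inside $D_k$ is never reached, and the experiment outputs \cheat exactly when the final fair coin lands on $1$; hence $\Pr[\QINDCCAAfake \to \cheat]=\tfrac12$ identically. Moreover, from $\algo A_2$'s viewpoint the fake challenge (the encryption of the $M'$ half of $\ket{\phi^+}_{M'M''}$, with $M''$ retained by $\chall$) becomes, after tracing out $M''$, exactly $\Enc_k(\tau_M)$, which coincides with the $b=1$ branch of the test game. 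A direct computation then gives $\Pr[\QINDCCAAtest \to \win]=\tfrac12+\tfrac12\bigl(\Pr[\D{=}0\mid\mathrm{real}]-\Pr[\D{=}0\mid\Enc_k(\tau_M)]\bigr)$, where ``$\D{=}0$'' abbreviates $\D$ guessing ``real''. Subtracting, the signed \QINDCCAA advantage is exactly half of $\D$'s signed advantage in distinguishing the real challenge from $\Enc_k(\tau_M)$, so \expref{Definition}{def:qINDCCAA} forces the latter to be negligible.

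Two mismatches then remain to be reconciled with \QINDCCA, and I expect the reference-state mismatch to be the main (if minor) obstacle. First, \expref{Definition}{def:qINDCCAA} carries no absolute value, so the above bounds only one direction of $\D$'s advantage; I would recover the other by rerunning the identical reduction on the distinguisher obtained from $\D$ by negating its output, which yields $\bigl|\Pr[\D{=}0\mid\mathrm{real}]-\Pr[\D{=}0\mid\Enc_k(\tau_M)]\bigr|\le\negl(n)$. Second, the \QINDCCA reference state is $\egoketbra{0}_M$ whereas the fake branch encrypts $\tau_M$; this gap is closed by the indistinguishability of $\Enc_k(\egoketbra{0}_M)$ and $\Enc_k(\tau_M)$, itself a consequence of \expref{Definition}{def:qINDCCAA} applied to the trivial adversary that submits the fixed plaintext $\egoketbra{0}_M$ and again makes no post-challenge decryption queries (so its fake probability is once more $\tfrac12$). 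A triangle inequality over $\mathrm{real}\approx\Enc_k(\tau_M)\approx\Enc_k(\egoketbra{0}_M)$ then delivers the \QINDCCA bound, and the contrapositive establishes $\QINDCCAA\implies\QINDCCA$.
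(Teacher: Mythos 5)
Your proof is correct and rests on the same core reduction as the paper's: run the \QINDCCA adversary as a \QINDCCAA adversary that makes no post-challenge decryption queries, so that \QINDCCAAtest reproduces the distinguishing game while in \QINDCCAAfake the cheat branch of $D_k$ is never reached and the output is \cheat exactly when the final coin lands $1$, i.e.\ with probability exactly $\tfrac{1}{2}$. Where you go beyond the paper is in rigor. The paper's three-line proof silently identifies the \QINDCCA reference state $\egoketbra{0}_M\otimes\rho_E$ with the $b=1$ branch $\Enc_k(\tau_M)$ of \QINDCCAAtest, and it does not comment on the missing absolute value in \expref{Definition}{def:qINDCCAA}; you patch the former with a second \QINDCCAA adversary (challenge plaintext $\egoketbra{0}_M$, again no post-challenge decryption queries) plus a triangle inequality over $\mathrm{real}\approx\Enc_k(\tau_M)\approx\Enc_k(\egoketbra{0}_M)$, and the latter by rerunning each reduction with the distinguisher's output bit negated. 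Both patches are sound, and they make your argument strictly more complete than the paper's own. The one imprecision is calling the second adversary ``trivial'': for the hybrid states to carry the correct side information, that adversary must still run $\M$ pre-challenge to generate $\rho_E$ (forwarding $\M$'s $\Enc_k$ and $\Dec_k$ queries, which is permitted before the challenge) and then run $\D$ post-challenge; only its choice of challenge plaintext is trivial.
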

\begin{proof}
Suppose we have a scheme $\Pi$ which is not $\QINDCCA$, i.e., there exists an adversary $\adver$ which wins the usual \QINDCCA game with non-negligible advantage $\nu$ over guessing. %probability $1/2 + 1/p(n)$. 
Clearly $\adver$ can also play the games \QINDCCAAtest and \QINDCCAAfake, but will not query the decryption oracle post-challenge. Note that $\adver$ wins \QINDCCAAtest with probability $1/2 + \nu$, %1/p(n)$, 
but is declared as cheating in \QINDCCAAfake with probability exactly $1/2$. Hence $\Pi$ is not $\QINDCCAA$.\qed
\end{proof}

Next, we show that the classical restriction of \QINDCCAA is equivalent to the classical security notion \INDCCAA. We denote the classical restriction of $\QINDCCAA$ by $\INDCCAA'$. This is defined by adapting the replacement and verification procedure of the challenger in \QINDCCAAtest in the same way as when defining \UF. We denote the classical versions of the games \QINDCCAAtest and \QINDCCAAfake by \INDCCAAtest and \INDCCAAfake, respectively. 
\begin{theorem}
	A $\SKES$ $\Pi$ is $\INDCCAA'$ iff it is $\INDCCAA$.
\end{theorem}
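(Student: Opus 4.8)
The plan is to prove the two implications separately, with the easy direction being $\INDCCAA'\Rightarrow\INDCCAA$ and the substantive work lying in $\INDCCAA\Rightarrow\INDCCAA'$. Throughout, the crucial observation is that in the classical setting ``cheating'' in $\INDCCAAfake$ reduces to a single syntactic event: the adversary querying the decryption oracle on the stored challenge ciphertext $c^*$. Call this event $Q$. Note also that, because of the final coin flip in $\INDCCAAfake$, the baseline cheat probability of any adversary is $1/2$, matching the random-guessing baseline of the win probability in $\INDCCAAtest$; concretely, $\Pr[\INDCCAAfake\to\cheat]=\tfrac12+\tfrac12\Pr[Q]$.

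For the easy direction, I would assume $\Pi$ is $\INDCCAA'$ and let $\adver$ be any legal $\INDCCAA$ adversary, i.e.\ one that by definition never queries $c^*$. Running $\adver$ in $\INDCCAAtest$ is then identical to running it in the standard $\INDCCAA$ game (the unrestricted decryption oracle is never used on $c^*$), so $\Pr[\INDCCAAtest\to\win]$ equals $\adver$'s $\INDCCAA$ winning probability; and running $\adver$ in $\INDCCAAfake$ never triggers $Q$, so its cheat probability is exactly $1/2$. The $\INDCCAA'$ guarantee then yields $\Pr[\win]-\tfrac12\le\negl(n)$, and the usual output-flipping trick upgrades this one-sided bound to a negligible advantage.

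For the hard direction, I would assume $\Pi$ is $\INDCCAA$ and let $\adver$ be an arbitrary $\INDCCAA'$ adversary, then decompose the test winning probability over $Q$: $\Pr[\win]=\Pr[\win\wedge\neg Q]+\Pr[\win\wedge Q]\le\Pr[\win\wedge\neg Q]+\Pr[Q]$. To bound the first term I would build a legal $\INDCCAA$ adversary $\badver$ that runs $\adver$ but outputs a uniformly random bit the moment $\adver$ would query $c^*$; since on the runs where $\neg Q$ holds $\badver$ coincides with $\adver$ in the test game, $\INDCCAA$ security gives $\Pr[\win\wedge\neg Q]+\tfrac12\Pr[Q]=\Pr[\badver\ \text{wins}]\le\tfrac12+\negl$, hence $\Pr[\INDCCAAtest\to\win]\le\tfrac12+\tfrac12\Pr[Q]+\negl$. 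Comparing with $\Pr[\INDCCAAfake\to\cheat]=\tfrac12+\tfrac12\Pr[Q_{\mathrm{fake}}]$, it remains to show $\Pr[Q]\le\Pr[Q_{\mathrm{fake}}]+\negl$, where $Q$ is measured in the test game (random $b$) and $Q_{\mathrm{fake}}$ in the fake game.

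This last step is the main obstacle and is where $\INDCCAA$ security must be invoked a second time. The point is that $\INDCCAAfake$ and the $b=1$ branch of $\INDCCAAtest$ both present an encryption of a random plaintext and behave identically on all oracle inputs except $c^*$; since $Q$ is the first query on $c^*$, the adversary's view is identically distributed up to that point, giving $\Pr[Q_{\mathrm{fake}}]=\Pr[Q\mid b=1]$. Thus I must control $\Pr[Q]=\tfrac12\Pr[Q\mid b=0]+\tfrac12\Pr[Q\mid b=1]$, which amounts to showing $\Pr[Q\mid b=0]\le\Pr[Q\mid b=1]+\negl$. I would obtain this from a legal $\INDCCAA$ distinguisher that runs $\adver$ and guesses $b=1$ exactly when $\adver$ attempts to query $c^*$ (aborting before the forbidden query) and guesses randomly otherwise: its advantage is $\tfrac14\bigl(\Pr[Q\mid b=1]-\Pr[Q\mid b=0]\bigr)$ up to sign, so $\INDCCAA$ security forces $\lvert\Pr[Q\mid b=0]-\Pr[Q\mid b=1]\rvert\le\negl$. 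Combining the pieces gives $\Pr[\INDCCAAtest\to\win]-\Pr[\INDCCAAfake\to\cheat]\le\negl(n)$. The conceptual crux throughout is that the fake game's cheat probability faithfully accounts for the advantage an adversary could gain by decrypting the challenge --- including the highly informative $b=0$ case --- and the role of $\INDCCAA$ is precisely to certify that the decision to cheat cannot itself depend on the hidden bit.
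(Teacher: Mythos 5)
Your proof is correct, and while it shares the paper's skeleton---the easy direction is handled the same way, and your $\badver$ is exactly the paper's ``self-checking version'' $\adver'$ of $\adver$---it diverges at one crucial step, and the divergence is an improvement. The paper's argument rests on the claim that the probability that $\adver$ cheats is \emph{the same} in $\INDCCAAtest$ and $\INDCCAAfake$, ``because the two games are identical up to the point where $\adver$ sends their first cheating query.'' That justification is valid only for the $b=1$ branch of $\INDCCAAtest$: when $b=0$ the adversary holds an encryption of its chosen plaintext rather than of a random one, and against a malleable scheme its decision to cheat can depend on which of the two it holds. You instead claim only the correct identity $\Pr[Q_{\mathrm{fake}}]=\Pr[Q\mid b=1]$, and close the $b=0$ versus $b=1$ gap with a second invocation of $\INDCCAA$ security, using the cheat attempt itself as a distinguisher.

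This second reduction is genuinely necessary, not a stylistic choice. Concretely, for $\Enc_k(m)=(r,f_k(r)\oplus m)$ consider the $\adver$ that submits $m_0$, learns via one malleated (non-cheating) decryption query whether the challenge encrypts $m_0$, then cheats and outputs $0$ if so, and outputs a uniform bit otherwise: it wins $\INDCCAAtest$ with probability about $3/4$ and is declared cheating in $\INDCCAAfake$ with probability about $1/2$, so it breaks $\INDCCAA'$ with advantage about $1/4$; yet its self-checking version wins the $\INDCCAA$ game with probability exactly $1/2$. Thus the paper's claimed bound $\Pr[\adver'\text{ wins }\INDCCAAtest]\geq 1/2+\nu$ fails for this adversary, whereas in your accounting the missing advantage is picked up by the second distinguisher, since $\Pr[Q\mid b=0]-\Pr[Q\mid b=1]\approx 1$ there. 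The only price of your route is quantitative: the $\INDCCAA'$ advantage $\nu$ is split between two reductions, so you extract an $\INDCCAA$ adversary with advantage roughly $\nu/2$ rather than $\nu$; what it buys is a sound argument that also covers adversaries whose advantage is concentrated on cheating runs.
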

\begin{proof}
	Suppose first that $\adver$ is an adversary breaking \INDCCAA', i.e., winning \INDCCAAtest with a probability higher than the one winning \INDCCAAfake by a 
	non-negligible advantage $\nu$. We construct an adversary $\adver'$, that runs $\adver$, keeps a copy of the challenge ciphertext and aborts by giving a random answer whenever $\adver$ is about to query the decryption oracle with the challenge ciphertext. Note that $\adver'$ wins \INDCCAAfake with probability exactly $1/2$. We call $\adver'$ the {\em self-checking version of $\adver$}. 
%A simple calculation shows 
It is easy to show 
that $\adver'$ wins the \INDCCAAtest game with probability at least $1/2+\nu$. First observe that the probability that $\adver$ cheats is the same in \INDCCAAtest and \INDCCAAfake. This is because the two games are identical up to the point where $\adver$ sends their first cheating query. Moreover we have 
\begin{align}
	\prob\left[\adver \text{ wins } \INDCCAAtest \wedge \adver \text{ cheats}\right]&\le \prob\left[ \adver \text{ cheats}\right]\nonumber\\
	&= \prob\left[ \adver \text{ wins } \INDCCAAfake \wedge\adver \text{ cheats}\right],\nonumber
\end{align}
implying
\begin{align}
	&\prob\left[\adver \text{ wins } \INDCCAAtest \wedge \adver \text{ does not cheat}\right]\nonumber\\=&\prob\left[\adver \text{ wins } \INDCCAAtest\right]-\prob\left[\adver \text{ wins } \INDCCAAtest \wedge \adver \text{ cheats}\right]\nonumber\\
	\ge& \prob\left[\adver \text{ wins } \INDCCAAfake\right]-\prob\left[\adver \text{ wins } \INDCCAAfake \wedge \adver \text{ cheats}\right]+\nu\nonumber\\
	=&\prob\left[\adver \text{ wins } \INDCCAAfake \wedge \adver \text{ does not cheat}\right]+\nu\nonumber\\
	=&\frac 1 2 \prob\left[\adver \text{ does not cheat}\right]+\nu.\nonumber
\end{align}
It follows that
	\begin{align}
		&\prob\left[\adver'\text{ wins }\INDCCAAtest\right]\nonumber\\
		=\,&\prob\left[\adver\text{ wins }\INDCCAAtest\wedge \adver\text{ does not cheat}\right]+\frac 1 2\prob\left[\adver \text{ cheats}\right]\nonumber\\
		\geq\,&\frac 1 2\prob\left[ \adver\text{ does not cheat}\right]+\nu+\frac 1 2\prob\left[\adver \text{ cheats}\right]\nonumber\\
		= &\frac{1}{2}+\nu\,.\nonumber
	\end{align}
But the \INDCCAAtest and \INDCCAA games are identical for adversaries that do not query the challenge, and $\adver'$ has been constructed not to, i.e., $\adver'$ wins the \INDCCAA game with probability $1/2+\nu
$.

For the other direction, let $\adver$ be an adversary that wins the \INDCCAA game with 
non-negligible advantage.
Note that $\adver$ 
behaves the same in
all games, as any difference only arrises upon cheating, and \adver does not cheat by definition of the \INDCCAA game. Therefore \adver wins the $\INDCCAAtest$ game with 
non-negligible advantage over random guessing by assumption, but it wins the \INDCCAAfake game with probability exactly $\frac 1 2$.\qed
\end{proof}

%%%%%%%%%%%%%%%%%%%%%%%%%
%%%%%%%%%%%%%%%%%%%%%%%%%
%%%%%%%%%%%%%%%%%%%%%%%%%
%%%%%%%%%%%%%%%%%%%%%%%%%
%%%%%%%%%%%%%%%%%%%%%%%%%
%%%%%%%%%%%%%%%%%%%%%%%%%
%%%%%%%%%%%%%%%%%%%%%%%%%
%%%%%%%%%%%%%%%%%%%%%%%%%
%%%%%%%%%%%%%%%%%%%%%%%%%
\section{Quantum Authenticated Encryption}\label{sec:QAE}
%%%%%%%%%%%%%%%%%%%%%%%%%

In the classical setting, authenticated encryption (\CAE) is defined as \INDCCAA and unforgeability of ciphertexts (see Definition 4.17 in~\cite{KL14}) or, equivalently, \INDCPA and unforgeability of ciphertexts~\cite{BN00}. A third equivalent formulation due to Shrimpton~\cite{Shrimpton04} defines \CAE in terms of a real vs ideal scenario. According to this definition, a classical scheme $\Pi = (\KeyGen, \Enc, \Dec)$ is \CAE if no adversary, given oracles $E$ and $D$, can distinguish these two scenarios:
\begin{itemize}
\item \CAEreal : $(E, D)$ is $(\Enc_k, \Dec_k)$ with $k \from \KeyGen$;
\item \CAEideal : $E$ discards the input and returns $\Enc_k(m)$ for random $m$, and $D$ always rejects; here again $k \from \KeyGen$; 
\end{itemize}
This is not yet enough, because the adversary $\adver$ can always distinguish real from ideal by composing $E$ with $D$. To patch this problem, we can (i.) demand that $\adver$ cannot do that, as in~\cite{Shrimpton04}, or (ii.) add the condition $D \circ E = \one$ to the ideal case\footnote{More precisely, the ideal world maintains a list of all queries that $\adver$ makes to $E$, and ensures that $D$ will respond correctly if queried on an output of $E$.}. We will take the latter approach.

Motivated by this formulation of \CAE and our general strategy so far, we will define quantum authenticated encryption by comparing the performance of the adversary in a real world and an ideal world. In the real world, the adversary gets unrestricted access to $\Enc_k$ and $\Dec_k$. In the ideal world, the challenger $\chall$ stores the $\Enc_k$ queries, replacing them with halves of maximally-entangled states; when a $\Dec_k$ query is detected as corresponding to a particular earlier $\Enc_k$ query, $\chall$ replies with the contents of the stored register; otherwise $\Dec_k$ rejects. Cheat detection is performed just as in the unforgeability game \QUFcheat.

\subsubsection{Formal definition.}
%%%

We now formally define the two worlds: the real world \QAEreal, and the ideal (or cheat-detecting) world \QAEideal. In both cases, the adversary $\algo A$ receives two oracles and then outputs a single bit. 

\begin{experiment}\label{exp:QAErealeal}
The $\QAEreal(\Pi, \algo A, n)$ experiment:
	\begin{algorithmic}[1]
		\State $k \from \KeyGen(1^n)$;
		\State \textbf{output} $\algo A^{\Enc_k, \Dec_k}(1^n)$.
	\end{algorithmic}
\end{experiment}

In the ideal setting, it will be convenient to describe the experiment in terms of an interaction between $\adver$ and the challenger $\chall$, a fixed algorithm determined only by the security game and the properties of $\Pi$.

\begin{experiment}\label{exp:QAEideal}
The $\QAEideal(\Pi, \algo A, n)$ experiment:
	\begin{algorithmic}[1]
		\State $\chall$ runs $k \from \KeyGen(1^n)$;
		\State initialize 	oracles $E_{M \rightarrow C}$ and $D_{C \rightarrow M}$:
		\begin{itemize}
		\item $E$ is defined as follows. On input a register $M$:
			\begin{enumerate}[(1)]
			\item $\chall$ prepares $\ket{\phi^+}_{M'M''}$, and generates fresh randomness $r$;
			\item $\chall$ stores $(r, M'', M)$ in a set $\mathcal M$;
			\item $\chall$ applies $\Enc_k$ to $M'$ using randomness $r$; \textbf{return} result to $\adver$.
			\end{enumerate}
		\item $D$ is defined as follows. On input a register $C$:
			\begin{enumerate}[(1)]
			\item $\chall$ applies $V_k^\dagger$ to $C$, places results in $M'T$; 
			\For {\textbf{each} $(r, M'', M) \in \mathcal M$}: 
				\State $\chall$ applies $\{\Pi_{k, r}, \one - \Pi_{k, r}\}$  to $T$;
				\If{outcome is $0$}:
					\State $\chall$ applies $\{\Pi^+, \one - \Pi^+\}$ to $M'M''$;
					\State \textbf{if} {outcome is $0$}: \textbf{return} $M$\label{line:check-tag};
			\EndIf
		\EndFor
		\State	\textbf{return} $\egoketbra{\bot}$;
		\end{enumerate}
		\end{itemize}
		\State \textbf{output} $\adver^{E,D}(1^n)$.
	\end{algorithmic}
\end{experiment}
Note that, as before, we number the measurement outcomes by $0$ (the first outcome) and $1$ (the second outcome). With the above games defined, we can now set down our definition of quantum authenticated encryption.

\begin{definition}
A \SKQES $\Pi=(\KeyGen,\Enc,\Dec)$ is an {\em authenticated quantum encryption scheme} (or is \QAE) if, for all QPT adversaries $\adver$:
 	\begin{equation}
 	\left|\Pr\left[\QAEreal(\Pi, \algo A, n) \to \real \right] -
 	\Pr\left[\QAEideal(\Pi, \algo A, n) \to \real \right]\right| \leq \negl(n).\nonumber
 	\end{equation}
\end{definition}

\subsubsection{Relationship to other security notions.}
%%%

Next, we give evidence that \QAE is indeed the correct formalization of a quantum analogue of \CAE, by showing that it implies all of the quantum security notions defined thus far. We begin with adaptive chosen-ciphertext security.

\begin{theorem}\label{thm:QAECCA2}
	$\QAE \implies \QINDCCAA$.
\end{theorem}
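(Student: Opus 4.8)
The plan is to prove the contrapositive: assuming a QPT adversary $\adver=(\algo A_1,\algo A_2)$ wins \QINDCCAA with non-negligible advantage $\nu$, i.e. $\Pr[\QINDCCAAtest(\Pi,\adver,n)\to\win]-\Pr[\QINDCCAAfake(\Pi,\adver,n)\to\cheat]>\nu$, I would build a \QAE distinguisher and derive a contradiction. The main reduction $\badver$ flips a bit $b$, runs $\adver$ while forwarding all of its encryption and decryption queries to its own oracles $(E,D)$, and prepares the challenge by querying $E$ on $M$ (if $b=0$) or on $\tau_M$ (if $b=1$); it outputs \real iff $\adver$'s guess $b'$ equals $b$. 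In the \QAEreal world $(E,D)=(\Enc_k,\Dec_k)$, so this is a perfect simulation of \QINDCCAAtest and $\Pr[\badver\to\real]=\Pr[\QINDCCAAtest\to\win]$. In the \QAEideal world every oracle call and the challenge itself are idealized: the returned challenge ciphertext is an encryption of half of a fresh $\ket{\phi^+}$ and is therefore independent of $b$, so $b$ can influence $\algo A_2$'s view only if some decryption query is recognized by $D$ as the challenge (the recognition projectors $\{\Pi_{k,r},\Pi^+\}$ succeed on the stored challenge entry), after which $D$ hands back the stored register, $M$ versus $\tau_M$. Writing $\gamma_{\mathrm{id}}$ for the probability of this recognition event, conditioning on it yields $\Pr[\badver\to\real\mid\QAEideal]\le\tfrac12+\tfrac12\,\gamma_{\mathrm{id}}$, since absent recognition the view is independent of $b$. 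As $\badver$ is a legitimate \QAE distinguisher, \QAE forces its two acceptance probabilities to be negligibly close, giving $\Pr[\QINDCCAAtest\to\win]\le\tfrac12+\tfrac12\,\gamma_{\mathrm{id}}+\negl(n)$.

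It then remains to bound $\gamma_{\mathrm{id}}$ by the cheat probability of \QINDCCAAfake. By construction $\Pr[\QINDCCAAfake\to\cheat]=\tfrac12+\tfrac12\,\gamma_{\mathrm{fake}}$, where $\gamma_{\mathrm{fake}}$ is the probability that the \QINDCCAAfake challenger detects cheating during the game; crucially, its detection measurement is exactly the recognition measurement performed by the ideal $D$ oracle. The two quantities $\gamma_{\mathrm{id}}$ and $\gamma_{\mathrm{fake}}$ live in games that differ only in whether the non-challenge encryption/decryption calls are real (as in \QINDCCAAfake) or idealized (as in the \QAEideal simulation). I would bridge them with a second \QAE distinguisher $\redu$: it runs $\adver$, forwarding non-challenge queries to $(E,D)$, and produces the challenge by preparing $\ket{\phi^+}_{M'M''}$, encrypting $M'$ through $E$, and retaining $M''$; whenever $D$ returns a register, $\redu$ applies $\{\Pi^+,\one-\Pi^+\}$ to that register together with $M''$ and declares ``recognized'' on the first success. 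In $\redu$'s real world this reproduces \QINDCCAAfake (real non-challenge oracles, a challenge that is an encryption of an EPR half, and—by correctness—$\Dec_k$ returning exactly the register entangled with $M''$), while in its ideal world it reproduces the fully-idealized simulation; in both, a successful entanglement test coincides with the recognition event. Hence \QAE gives $|\gamma_{\mathrm{fake}}-\gamma_{\mathrm{id}}|\le\negl(n)$, and chaining the estimates produces $\Pr[\QINDCCAAtest\to\win]-\Pr[\QINDCCAAfake\to\cheat]\le\negl(n)$, contradicting $\nu$.

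The hard part will be the second reduction, i.e. detecting the recognition event. Unlike the \QINDCCAAfake challenger, $\redu$ does not hold the key and cannot run the recognition measurement directly; it must infer it from the registers that $D$ hands back, using the retained EPR half $M''$ as a reference. Because $\redu$ cannot know in advance which decryption query is the challenge, it must test every returned register, and these tests disturb the answers to ordinary (non-challenge) decryption queries and carry a small false-accept probability of order $1/\dim M$ each. Controlling the accumulated disturbance and false positives across polynomially many queries is exactly where a gentle-measurement-type bound enters, and it is the reason such an argument is clean only for super-polynomial plaintext dimension; making this estimate tight—or routing the recognition entirely through the oracle so as to avoid the repeated measurement—is the principal technical obstacle, and I expect the paper's actual proof to take the latter route.
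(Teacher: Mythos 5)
Your main reduction is precisely the paper's proof: $\badver$ flips $b$, forwards all of $\adver$'s queries to its own oracles, submits either the challenge plaintext or $\tau_M$ to its encryption oracle, and outputs \real iff $b'=b$; in \QAEreal this perfectly simulates \QINDCCAAtest, and in \QAEideal the view of $\adver$ is independent of $b$ except when a decryption query is recognized as the challenge, whereupon the ideal oracle hands back the stored register. The paper stops there: it bounds $\Pr[\QAEideal(\Pi,\badver,n)\to\real]$ by $\Pr[\QINDCCAAfake(\Pi,\adver,n)\to\cheat]$ directly, informally identifying the recognition event inside the \QAEideal decryption oracle with the cheat event of \QINDCCAAfake (both are, after all, the same keyed measurement sequence $\Pi_{k,r}$ then $\Pi^+$). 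It introduces no second reduction and no comparison of recognition probabilities across two differently-oracled games.

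The gap in your proposal is therefore exactly the part you added, and it is a genuine gap, not merely the technical nuisance you describe at the end. Your bridge $\redu$ must certify $\left|\gamma_{\mathrm{id}}-\gamma_{\mathrm{fake}}\right|\le\negl(n)$, but since $\redu$ holds no key it replaces the challenger's recognition measurement (support test, then $\Pi_{k,r}$ on the tag, then $\Pi^+$ on $MM''$) by the $\Pi^+$ test alone, applied to whatever $D$ returns. In $\redu$'s real world this proxy event is \emph{not} the \QINDCCAAfake cheat event: a decryption query whose plaintext is still maximally entangled with $M''$ but whose randomness/tag no longer matches $r$ (think of a scheme whose ciphertexts can be re-randomized without the key) fires your test but not the fake game's detector, and the two procedures also back-act differently on non-recognized queries, so the executions diverge. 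Hence what your bridge can deliver is at best $\gamma_{\mathrm{id}}\approx(\text{proxy detection in the real world})\ge\gamma_{\mathrm{fake}}-\negl(n)$, an inequality in the \emph{wrong direction}; your chain needs $\gamma_{\mathrm{id}}\le\gamma_{\mathrm{fake}}+\negl(n)$. On top of this come the disturbance and $1/\dim M$ false-positive issues you flag yourself, which would restrict the argument to super-polynomial plaintext dimension --- a hypothesis appearing nowhere in the theorem. If you want an argument at the paper's level of rigor, drop the second reduction and argue the ideal world directly, as the paper does: conditioned on no recognition, $\adver$'s view is $b$-independent, and the recognition performed by the \QAEideal oracle is taken to be the very event whose probability (padded with the coin flip) defines $\Pr[\QINDCCAAfake\to\cheat]$.
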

\begin{proof}
The proof is similar to that of Theorem~\ref{thm:qUFCTXT-qINDCPA}. 
For a scheme $\Pi$, let \adver be an adversary against \QINDCCAA, e.g., let us say that:
$$
\Pr\left[ \QINDCCAAtest(\Pi,\adver,n) \to \win \right] = \Pr\left[ \QINDCCAAfake(\Pi,\adver,n) \to \cheat \right] + \nu(n)\, ,
$$
for non-negligible $\nu$. We then show how to build another adversary $\badver$ with black-box access to \adver, able to distinguish \QAEreal from \QAEideal.% with non-negligible advantage.

$\badver$ runs \adver, and forwards all of \adver's queries to his own oracles. When eventually \adver outputs a challenge plaintext state, \badver flips a random bit $b$. If $b=0$, then \badver forwards the challenge plaintext to his encryption oracle as usual. Otherwise, if $b=1$, \badver replaces the challenge with a totally mixed plaintext state before relaying it to the oracle. After that, \badver continues to answer \adver's queries during the second quantum CCA phase as before, by forwarding all the queries to his oracles, until \adver produces an output bit $b'$. Finally, if $b=b'$, then \badver outputs \real, otherwise he outputs \ideal.

Now notice the following: If we are in the \QAEreal environment (that is, \badver has unrestricted \Enc and \Dec oracles), then \badver is faithfully simulating for \adver the \QINDCCAAtest game, which means that the probability of \badver correctly outputting \real is exactly the same probability of \adver of winning \QINDCCAAtest.

If we are in the \QAEideal world, instead, \badver is playing in a ``malformed'' game, where all his encryption queries are replaced by random plaintexts before encryption. This means that the best \adver could do in order to guess the secret bit $b$ is guessing at random, {\em unless} \adver uses a ``cheating decryption query'' on the challenge ciphertext (in this case the modified decryption oracle of the game \QAEideal would actually return the encrypted plaintext). It follows that 
% Looking at the description of the \QINDCCAAfake game, it is clear that this is exactly the same as $\Pr[\QINDCCAAfake(\Pi,\adver,n) \to \cheat]$. So, summing up:
	\begin{align}
		& \Big| \Pr\left[ \QAEreal(\Pi,\badver,n) \to \mathsf{Real} \right] - \Pr\left[ \QAEideal(\Pi,\badver,n) \to \mathsf{Real} \right] \Big| \nonumber\\
		\ge & \Big| \Pr\left[ \QINDCCAAtest(\Pi,\adver,n) \to \win \right] - \Pr\left[ \QINDCCAAfake(\Pi,\adver,n) \to \cheat \right] \Big| \\
		=&\Pr\left[ \QINDCCAAtest(\Pi,\adver,n) \to \win \right] - \Pr\left[ \QINDCCAAfake(\Pi,\adver,n) \to \cheat \right]\nonumber
		=\nu \, ,\nonumber
	\end{align}
which conludes the proof.\qed
\end{proof}

In terms of authentication security, we can show that \QAE implies \cQCA (computational one-time ciphertext authentication), and hence also \cDNS.

\begin{theorem}
	Let $\Pi=(\KeyGen, \Enc, \Dec)$ be a \SKQES that is \QAE secure and satisfies \expref{Condition}{con:efficient}. Then it is \cQCA.
\end{theorem}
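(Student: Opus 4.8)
The plan is to prove the contrapositive: from a \cQCA-breaking attack I construct a \QAE distinguisher. Suppose $\Pi$ is not \cQCA. Then by \expref{Definition}{def:cqCauth} there is an efficiently implementable attack $\Lambda_{CB\to C\tilde B}$ and a QPT distinguisher $\D$ (which prepares an input $\rho_{MB}$ together with a reference register $R$ it keeps) that tells apart the effective map $\tilde\Lambda_{MB\to M\tilde B}=\mathbb E_k[\Dec_k\circ\Lambda\circ\Enc_k]$ from the simulator $\Lambda^{\mathrm{sim}}$ of \eqref{eq:simulator1}--\eqref{eq:simulator2} with non-negligible advantage $\nu$. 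I build a \QAE adversary $\badver$ as follows: $\badver$ prepares the distinguisher's input on $MBR$, makes a \emph{single} query to its encryption oracle on $M$, applies the efficient map $\Lambda$ to the returned ciphertext together with $B$, makes a \emph{single} query to its decryption oracle on the resulting ciphertext register, and finally runs $\D$ on the three registers (the decrypted plaintext, $\tilde B$, and $R$), outputting \real if $\D$ guesses ``real'' and \ideal otherwise. Since $\Lambda$ and $\D$ are efficient, $\badver$ is QPT.

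Next I analyze the two worlds. In \QAEreal the oracles are the honest $\Enc_k,\Dec_k$ with $k\from\KeyGen$, so the channel $\badver$ applies to $\rho_{MBR}$ is exactly $\mathbb E_k[\Dec_k\circ\Lambda\circ\Enc_k]\otimes\id_R=\tilde\Lambda\otimes\id_R$; hence $\D$'s view is distributed exactly as in the real \cQCA experiment. The crux is to show that in \QAEideal the composition of the single ideal encryption, the attack $\Lambda$, and the single ideal decryption implements \emph{exactly} $\Lambda^{\mathrm{sim}}$. Because $\badver$ makes only one encryption query, the set $\mathcal M$ in \expref{Experiment}{exp:QAEideal} contains a single entry $(r,M'',M)$ and the inner loop of the ideal decryption oracle $D$ runs once. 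Unwinding the definitions, the ideal encryption prepares $\ket{\phi^+}_{M'M''}$, stores the original plaintext register $M$ and the purifying half $M''$, and returns $\Enc_{k;r}$ applied to $M'$; decryption then applies $V_k^\dagger$ followed by the two-outcome measurements $\{\Pi_{k,r},\one-\Pi_{k,r}\}$ on $T$ and $\{\Pi^+,\one-\Pi^+\}$ on $M'M''$. The joint ``accept'' projector is precisely $\proj{\Phi_{k,r}}$ with $\ket{\Phi_{k,r}}=\ket{\phi^+}_{M'M''}\otimes\ket{\psi^{(k,r)}}_T$, while its complement governs the reject branch.

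Matching this to the simulator requires two observations. First, conditioned on the accept outcome, $D$ returns the \emph{stored} original plaintext register $M$; since $M'$ was maximally entangled with $M''$ and the accept projector post-selects $M'M''$ back onto $\ket{\phi^+}$, returning the stored $M$ realizes the identity channel $\id_M$ on the original input, exactly as in the $\id_M\otimes\Lambda^{\acc}$ term of \eqref{eq:simulator1}, and the residual side-information map $B\to\tilde B$ coincides with $\Lambda^{\acc}$ of \eqref{eq:simulator2} after averaging over $k\from\KeyGen$ and $r\sim p_k$. This entanglement-based (teleportation-style) argument is the step I expect to require the most care: one must verify that the post-selection onto $\ket{\phi^+}_{M'M''}$ together with the stored half faithfully reproduces the input plaintext on the output register, with the $\tilde B$-action left over matching $\Lambda^{\acc}$ precisely. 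Second, the reject branch outputs $\proj{\bot}_M$ with $\tilde B$-state given by the complementary projector $\one-\proj{\Phi_{k,r}}$, matching $\Lambda^{\rej}$. Here \expref{Condition}{con:efficient} is exactly what guarantees that these ideal oracles (hence the \QAEideal challenger) are efficiently implementable and coincide with the efficient simulator of \expref{Definition}{def:cqCauth}, so that the two notions are compared on the same object.

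Combining the two analyses, $\badver$'s distinguishing advantage between \QAEreal and \QAEideal equals $\D$'s advantage between $\tilde\Lambda$ and $\Lambda^{\mathrm{sim}}$, namely $\nu$, which is non-negligible; this contradicts \QAE security of $\Pi$. Hence $\Pi$ must be \cQCA, and the combination with \expref{Proposition}{prop:cQCAcDNS-QIND} additionally yields \cDNS.
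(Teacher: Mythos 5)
Your proposal is correct and follows essentially the same route as the paper's own proof: wrap the \cQCA attack into a \QAE adversary making one encryption and one decryption query, observe that in \QAEreal this yields the effective map $\tilde\Lambda$ while in \QAEideal the composition of ideal encryption, $\Lambda$, and ideal decryption is exactly the simulator of Equations \eqref{eq:simulator1}--\eqref{eq:simulator2}, and transfer the distinguisher's advantage. The only difference is that you spell out in detail the ``crucial observation'' (accept projector $\proj{\Phi_{k,r}}$, returned stored register realizing $\id_M$, complementary projector giving $\Lambda^{\rej}$) that the paper states without elaboration, which is a faithful expansion rather than a different argument.
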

\begin{proof}
Assume $\Pi$ is not \cQCA. Then there exists an algorithm $\adver=(\adver_1,\adver_2, \adver_3)$ that achieves the following. $\adver_1$ gets an input $1^n$ and outputs registers $M$ (the plaintext register) and $B$. $\adver_2$ implements a map $\Lambda_{CB\to C\tilde B}$ on two registers $C$ (the ciphertext register) and $B$. $\adver_3$ is a distinguisher between the two states resulting from applying $ \tilde\Lambda_{CB\to C\tilde B}$ or the corresponding simulator according to Equations \eqref{eq:simulator1} and \eqref{eq:simulator2} to the output of $\adver_1$.
	
The crucial observation is, that the map on registers $MB$ resulting from sending $M$ to the challenger $\chall'_{\ideal}$ as an encryption query in the ideal \QAE game, applying $\Lambda_{CB\to C\tilde B}$ to the output and sending the resulting $C$-register to $\chall'_{\ideal}$ as a decryption query, is exactly the simulator defined in Equations~\eqref{eq:simulator1} and~\eqref{eq:simulator2}. Thus, the adversary that runs $\adver_1$, queries the encryption oracle, runs $\adver_2$, queries the decryption oracle and finally runs $\adver_3$ is a successful \QAE adversary.\qed
\end{proof}

In addition, \QAE implies quantum unforgeability.

\begin{theorem}\label{thm:QAEUF}
	$\QAE \implies \QUF$.
\end{theorem}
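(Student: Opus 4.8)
The plan is to prove the contrapositive via a tight reduction: from any QPT adversary $\adver$ that breaks $\QUF$, I construct a QPT distinguisher $\badver$ for the worlds $\QAEreal$ and $\QAEideal$ whose advantage equals exactly the unforgeability gap of $\adver$. The construction of $\badver$ is the natural forwarding reduction. Given black-box access to $\adver$ and to the two $\QAE$ oracles $(E,D)$, $\badver$ runs $\adver$ and answers each of $\adver$'s encryption queries by relaying it to $E$ and returning the response. When $\adver$ halts with a forgery register $C_{\textsf{out}}$, $\badver$ submits $C_{\textsf{out}}$ as a single query to $D$, measures the result with $\{\Pi_\bot, \one - \Pi_\bot\}$, and outputs $\real$ if and only if the outcome is ``not $\bot$''. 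Note that $\adver$ only expects an $\Enc_k$ oracle, so the fact that $\badver$ withholds $D$ from $\adver$ is no obstacle.

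The heart of the argument is that this $\badver$ turns the two $\QAE$ worlds into the two $\QUF$ games. In $\QAEreal$ we have $(E,D)=(\Enc_k,\Dec_k)$, so $\badver$ faithfully simulates $\QUFforge$ for $\adver$, and the final $D$-query is simply $\Dec_k(C_{\textsf{out}})$; hence $\Pr[\QAEreal(\Pi,\badver,n)\to\real]=\Pr[\Dec_k(C_{\textsf{out}})\neq\bot]=\Pr[\QUFforge(\Pi,\adver,n)\to\win]$. In $\QAEideal$, the $E$ oracle discards the submitted plaintext, encrypts half of a fresh $\ket{\phi^+}_{M'M''}$ with fresh randomness $r$, and stores $(r,M'',\cdot)$; this coincides with the encryption oracle of \expref{Experiment}{exp:qcheat-ctxt}, up to the additional (ciphertext-irrelevant) bookkeeping of the original plaintext. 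Likewise, the $\QAEideal$ decryption oracle $D$ applies $V_k^\dagger$ and then runs precisely the loop of $\QUFcheat$ over the stored tuples, testing each tag with $\{\Pi_{k,r}, \one - \Pi_{k,r}\}$ and then the entanglement with $\{\Pi^+, \one - \Pi^+\}$, terminating at the first success. Since $\adver$'s view during the query phase is identical in both cases (it receives ciphertexts of entanglement halves and never regains the register it sent), the joint state at the moment of the final $D$-query is identical to the state at the end of $\QUFcheat$.

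The key identification to make carefully is that the event ``$D$ returns something other than $\bot$'' in $\QAEideal$ coincides with the $\cheat$ event of $\QUFcheat$. When the loop succeeds, $D$ returns the stored plaintext register, whose support lies in $\H_M$ and is therefore orthogonal to $\ket{\bot}$, so $\badver$'s $\{\Pi_\bot, \one - \Pi_\bot\}$ measurement reports ``not $\bot$'' with certainty — exactly when $\QUFcheat$ would output $\cheat$. When the loop fails, $D$ returns $\bot$, matching the $\rej$ output of $\QUFcheat$. Because the stored-plaintext bookkeeping affects neither the ciphertext handed to $\adver$ nor the measurements in the loop, we conclude $\Pr[\QAEideal(\Pi,\badver,n)\to\real]=\Pr[\QUFcheat(\Pi,\adver,n)\to\cheat]$. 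Combining the two equalities, the distinguishing advantage of $\badver$ equals $\bigl|\Pr[\QUFforge\to\win]-\Pr[\QUFcheat\to\cheat]\bigr|$, which is non-negligible by the assumption that $\adver$ breaks $\QUF$; hence $\Pi$ is not $\QAE$.

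The step I expect to require the most care is this final identification of the two cheat-detection procedures — in particular verifying that the measurements are applied in the same order on the same registers, that the early termination of the loop matches, and that the difference in ``output'' (returning the stored plaintext in $\QAEideal$ versus merely flagging $\cheat$ in $\QUFcheat$) is faithfully captured by $\badver$'s $\{\Pi_\bot,\one-\Pi_\bot\}$ test. Everything else is routine relaying of oracle queries, and no approximation is incurred, so the reduction is exact.
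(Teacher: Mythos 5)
Your proposal is correct and takes essentially the same approach as the paper: the identical forwarding reduction in which $\badver$ relays $\adver$'s encryption queries, submits the forgery $C_{\textsf{out}}$ to the decryption oracle, and outputs \real iff the result is not $\bot$, with the key observation that the oracles of \QAEideal coincide exactly with the cheat-detection procedure of \QUFcheat. Your bookkeeping is in fact slightly cleaner than the paper's (the paper folds in a factor $1/2$ from a random choice of world and reports advantage $\nu/2$, whereas your direct comparison of the two experiments gives advantage exactly $\nu$), but the reduction and its analysis are the same.
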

\begin{proof}
For a scheme $\Pi$, let \adver be an adversary against \QUF, e.g., let us say that:
$$
\Pr\left[ \QUFforge(\Pi,\adver,n) \to \win \right] = \Pr\left[\QUFcheat(\Pi,\adver,n) \to \cheat \right] + \nu \, ,
$$
where $\nu$ is non-negligible. We then build another adversary $\badver$ with black-box access to \adver, able to distinguish \QAEreal from \QAEideal with non-negligible advantage. $\badver$ runs \adver, and forwards all of \adver's queries to his own encryption oracle. When eventually \adver outputs a forgery, $\badver$ sends it for decryption to his own decryption oracle. If the decryption succeeds (that is, the oracle does not return $\egoketbra{\bot}$), then $\badver$ outputs \real, otherwise he outputs \ideal.

The idea is the following: suppose the decryption of the forgery state succeeds (i.e., it does not decrypt to $\egoketbra{\bot}$). This can happen in two cases:
\begin{enumerate}
\item we are in the \QAEreal game, and \adver produced a valid forgery (i.e., he won the \QUFforge game); or
\item we are in the \QAEideal game, and \adver cheated by replaying an output of the encryption oracle (i.e., he won the \QUFcheat game).
\end{enumerate}
Recall that, by assumption, \adver produces a valid forgery with probability at least $\nu$ over cheating. Therefore the case 2. above happens with noticeable less probability than case 1., which is in fact the one \badver ``bets'' on. Analogously, suppose the decryption fails. This can happen in two cases:
\begin{enumerate}
\item we are in the \QAEreal game, but \adver produced an invalid forgery (i.e., he lost the \QUFforge game); or
\item we are in the \QAEideal game, and \adver did not cheat (i.e., he lost \QUFcheat).
\end{enumerate}
For the same reasoning as above, 2. is noticeably more likely than 1., which is in fact \badver's bet. More in detail, we have:
\begin{align}
	& \Big| \Pr\left[ \badver(\QAEreal) \to \mathsf{Real} \right] - \Pr\left[ \badver(\QAEideal) \to \mathsf{Real} \right] \Big| \nonumber\\
	=\,& \Big| \Pr\left[ \QAEreal \right] \cdot \Pr\left[ \adver(\QUFforge) \to \win \right] - \nonumber\\
	~~~-\,& \Pr\left[ \QAEideal \right] \cdot \Pr\left[ \adver(\QUFcheat) \to \cheat \right] \Big| \nonumber\\
	=\,& \frac{1}{2}\Big| \Pr\left[ \adver(\QUFforge) \to \win \right] - \big( \Pr\left[ \adver(\QUFforge) \to \win \right] - \nu \big) \Big|=\frac{\nu}{2} \, ,\nonumber
\end{align}
which is non-negligible. \qed
\end{proof}

Finally, we consider the classical restriction $\CAE'$ of \QAE.

\begin{proposition}\label{prop:CAECAE}
	$\CAE' \iff \CAE$.
\end{proposition}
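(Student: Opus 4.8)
The plan is to show that, once stripped of quantum bookkeeping, the classical restriction of the \QAE games coincides game-for-game with the patched real/ideal formulation of \CAE recalled at the start of this section, so that the two notions yield identical distinguishing advantages. First I would pin down $\CAE'$: exactly as with $\INDCCAA'$ and \UF, it is obtained by classically restricting \QAEreal and \QAEideal, where the challenger's entanglement-based storage and cheat-detection are replaced by the copying-and-comparison procedure that is available classically. Since \QAEreal simply hands the adversary $(\Enc_k,\Dec_k)$, its classical restriction is immediately \CAEreal, so the only real work is in \QAEideal.

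Next I would translate each ingredient of \QAEideal into its classical form. Preparing $\ket{\phi^+}_{M'M''}$ and encrypting the $M'$ half becomes: sample a uniformly random plaintext $m'$, compute $c \from \Enc_k(m';r)$, and retain the perfectly correlated copy $m''=m'$, while the input register $M$ is stored untouched. Thus the ideal $E$ discards its input $m$ for the purpose of encryption, returns $\Enc_k$ of a fresh \emph{random} plaintext, and records $(r,m',m)$. For the ideal $D$, applying $V_k^\dagger$ to a query $c$ recovers $(m',r)$; the tag measurement $\{\Pi_{k,r},\one-\Pi_{k,r}\}$ tests whether $r$ equals a stored $r_j$, and the entanglement measurement $\{\Pi^+,\one-\Pi^+\}$ tests whether the decrypted $m'$ equals the stored $m''_j=m'_j$. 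Both outcomes are $0$ precisely when $c=c_j$, in which case $D$ returns the stored input $m_j$, and otherwise returns $\bot$. Up to the negligible-probability event that two encryption queries collide on their randomness $r$ (bounded exactly as in the classical restriction of \QUFcheat), this is the patched Shrimpton ideal world: $E$ encrypts random plaintexts, while $D$ answers correctly on outputs of $E$ and rejects everything else.

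With both worlds identified, a $\CAE'$ distinguisher is literally a \CAE distinguisher and conversely, so the two advantages differ by at most a negligible amount and $\CAE' \iff \CAE$ follows at once. I expect the only delicate point to be verifying that the composite tag-plus-entanglement test reduces \emph{exactly} to ciphertext equality, i.e. that the classical cheat-detection neither misses a replayed ciphertext nor flags a fresh one; this is where the randomness-collision bound is needed, and everything else is a direct rewriting of the game. As an independent cross-check for the direction $\CAE' \Rightarrow \CAE$, one can instead invoke the classical restriction of \expref{Theorem}{thm:QAEUF} (whose reduction is black-box and carries over verbatim) to obtain $\CAE' \Rightarrow \UF$, and then apply \expref{Theorem}{thm:ae} to conclude $\UF \Rightarrow \CAE$.
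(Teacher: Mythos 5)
Your identification of the classical restrictions is correct and matches the paper's setup: \QAEreal restricts to unrestricted $(\Enc_k,\Dec_k)$, and \QAEideal restricts to an ideal world where $E$ encrypts a fresh random plaintext and $D$ returns the stored query $m_j$ on a replayed ciphertext $c_j$ and $\bot$ otherwise (your reduction of the tag-plus-entanglement test to ciphertext equality, up to randomness collisions, is fine). The gap is in your final step. The pair of games you arrive at is the patch-(ii) formulation (the one with $D \circ E = \one$ added to the ideal world), but the \CAE in the proposition is the \emph{standard} classical notion, which the paper takes in Shrimpton's patch-(i) form, exactly as used in the proof of \expref{Theorem}{thm:ae}: in \CAEreal the decryption oracle \emph{rejects} every ciphertext previously output by the encryption oracle, and in \CAEideal the decryption oracle is the constant-$\bot$ map. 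These games are not the same as yours --- decrypting a replayed ciphertext yields the correct plaintext in your real world but $\bot$ in the paper's --- so a $\CAE'$ distinguisher is \emph{not} ``literally'' a \CAE distinguisher, and ``coincides game-for-game'' fails. Moreover, the equivalence of the two patches is established nowhere earlier in the paper; the proposition essentially \emph{is} that equivalence, so taking patch (ii) as the definition of \CAE begs the question. The entire content of the paper's proof is the bridge you skip: an efficient wrapper that records the pairs (queried plaintext $m_j$, returned ciphertext $c_j$) at encryption time and patches decryption queries on non-fresh ciphertexts --- replying with the recorded $m_j$ to turn \CAE oracles into $\CAE'$ oracles, or with $\bot$ to turn $\CAE'$ oracles into \CAE oracles --- while forwarding all other queries. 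This wrapper maps real to real and ideal to ideal in both directions, so it preserves distinguishing advantage.

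Your cross-check does legitimately close one direction: the classical restriction of \expref{Theorem}{thm:QAEUF} together with \expref{Theorem}{thm:ae} gives $\CAE' \Rightarrow \UF \Rightarrow \CAE$. But the converse $\CAE \Rightarrow \CAE'$ is precisely the direction your main argument was supposed to supply, and it is left unproved. To repair it, add the wrapper argument: given patch-(i) oracles, answer a decryption query on a recorded $c_j$ with the recorded $m_j$ (in the real world this equals $\Dec_k(c_j)$ by correctness; in the ideal world it is exactly the answer mandated by the $\CAE'$ ideal game) and forward every other query to the patch-(i) decryption oracle, which behaves identically to the unrestricted one on fresh ciphertexts. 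With that paragraph added, your proof is complete and essentially identical to the paper's.
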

\begin{proof}
The security notion \CAE' is given in terms of two experiments which are like the \CAEreal and \CAEideal experiments in Shrimpton's formulation of \CAE security, with the following difference:
\begin{enumerate}
\item in the modified \CAEreal experiment, the decryption oracle does not reject non-fresh ciphertexts, i.e. it is unrestricted; and
\item in the modified \CAEideal experiment, the decryption oracle does not always return $\bot$: in case it is queried on a non-fresh ciphertext, it decrypts correctly.
\end{enumerate}
Since classically we can store and compare plaintexts and ciphertexts, it is easy to construct an efficient simulator able to switch between the experiments of \CAE and \CAE', by inspecting \adver's decryption queries and reacting accordingly. Namely:
\begin{enumerate}
\item to switch from \CAE to \CAE', record \adver's plaintexts and ciphertexts during encryption queries, and reply with the right plaintext whenever \adver asks to decrypt a non-fresh ciphertext (otherwise, just send the query to the decryption oracle); and
\item to switch from \CAE' to \CAE, record \adver's received ciphertexts during encryption queries, and reply with $\bot$ whenever \adver asks to decrypt a non-fresh ciphertext  (otherwise, just send the query to the decryption oracle).
\end{enumerate}
This concludes the proof, as it shows the two cases to be equivalent.\qed
\end{proof}

In particular, $\CAE'$ is equivalent to $\UF$. We provide evidence that a quantum analogue of this statement does not hold in the next section.

%%%%%%%%%%%%%%%%%%%%%%%%%
%%%%%%%%%%%%%%%%%%%%%%%%%
%%%%%%%%%%%%%%%%%%%%%%%%%
%%%%%%%%%%%%%%%%%%%%%%%%%
%%%%%%%%%%%%%%%%%%%%%%%%%
%%%%%%%%%%%%%%%%%%%%%%%%%
\section{Constructions and separations}
%%%%%%%%%%%%%%%%%%%%%%%%%
%%%%%%%%%%%%%%%%%%%%%%%%%
%%%%%%%%%%%%%%%%%%%%%%%%%

In this section we exhibit constructions of \SKQES that fulfill and separate the different security notions presented in the preceding sections. We begin by showing that augmenting a one-time scheme by a (perfectly) random function family using the construction in \expref{Definition}{def:SKE-generic} turns a \QCA secure scheme into a \QAE secure scheme. Then we will move on to show how to satisfy \QAE with an efficiently implementable scheme. Recall that efficient $\QCA$-secure \SKQES can be constructed, e.g., from unitary two-designs like the Clifford group.
\begin{theorem}\label{thm:QAErandom}
	Let $\Pi$ be a $\QCA$-secure \SKQES, and let $f: \K \times \bit^n \to \bit^m$ be a random function family. Then the scheme $\Pi^f$ in \expref{Definition}{def:SKE-generic} is \QAE secure.
\end{theorem}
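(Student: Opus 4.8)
The plan is to reduce the many-query \QAE security of $\Pi^f$ to the one-time \QCA security of $\Pi$, exploiting that $f$ is a genuinely random function, so that distinct values of the outer randomness $r$ spawn independent, uniformly random keys for $\Pi$. First I would remove collisions: a QPT adversary makes some $q=\poly(n)$ encryption queries, each drawing $r \rand \bit^n$ afresh, so all sampled values are pairwise distinct except with probability $\binom{q}{2}2^{-n}=\negl(n)$, and conditioning on distinctness costs only a negligible amount. Under this conditioning the keys $k_j:=f_k(r_j)$ are i.i.d.\ uniform over $\K$ --- exactly the key distribution that \QCA averages over --- while $f_k(r)$ for any unqueried $r$ is uniform and independent of the adversary's view. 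The structural point (cf.\ the ``our choice of primitives'' remark) is that the classical register carrying $r$ is measured at the start of every decryption; thus each decryption query, after this measurement, is routed to exactly one session (the $r_j$ it matches) or to a fresh key, and never grants the adversary quantum access to $f$.

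Next I would observe that the \QAEideal oracles, restricted to one session, are precisely the \QCA simulator of \expref{Definition}{def:qCauth}: the ideal encryption prepares $\ket{\phi^+}_{M'M''}$ and encrypts the $M'$ half with randomness $r$ (matching $\ket{\Phi_{k,r}}$), and the ideal decryption applies $V_k^\dagger$, projects the tag onto $\Pi_{k,r}$ and the plaintext onto $\Pi^+$, returning the stored plaintext on success and $\bot$ otherwise --- which is exactly \eqref{eq:extra-constraint}. The \QAEreal oracles, in contrast, realise the honest effective map $\Dec^\Pi_{k_j}\circ\Lambda\circ\Enc^\Pi_{k_j}$. I would then set up a hybrid sequence $H_0,\dots,H_q$ in which $H_i$ runs sessions $1,\dots,i$ ideally and sessions $i+1,\dots,q$ really (with all decryption queries routed by the measured $r$ as above), so that $H_0=\QAEreal$ and $H_q=\QAEideal$.

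To bound $\bigl|\Pr[H_{i-1}\to\real]-\Pr[H_i\to\real]\bigr|$, I would build a \QCA distinguisher that internally simulates every session $\neq i$ using freshly and independently sampled keys (legitimate, by the independence established above) and embeds session $i$ into its single \QCA challenge: the entire activity of $\adver$ between the $i$-th encryption query and a matching decryption query becomes the \QCA attack map $\Lambda_{CB\to C\tilde B}$, and \QCA guarantees that its honest effective map and the simulator differ by at most $\negl(n)$ in diamond norm. Summing over the $q$ steps keeps the total negligible. Two easy sub-cases fall out of the same step: a decryption query carrying a fresh $r$ rejects in both worlds (in the ideal world no tuple matches; in the real world decryption under an unseen uniform key rejects except with negligible probability, by the one-time security of $\Pi$), so such queries can be treated uniformly across all hybrids; and a session that is encrypted but never decrypted is handled because \QCA implies \QIND (\expref{Theorem}{thm:qCauth2DNS-IND}), so replacing the plaintext by half of $\ket{\phi^+}$ is undetectable from the ciphertext alone.

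The step I expect to be the main obstacle is that \QCA is literally a one-encryption/one-decryption statement, whereas a single session of $\Pi^f$ is encrypted once but may be targeted by several decryption queries, since the adversary can freely copy the classical string $r$ and attach bogus quantum registers. I would need to show that the one-time simulator stays faithful under repeated verification of the same session. The intuition is that a given $r_j$ cannot yield two accepting decryptions that both recover the stored plaintext: the genuine ciphertext exists in a single copy and is consumed by the first decryption, while every bogus quantum part fails the $\Pi^+$ test against the fixed reference held in $M''$. Making this rigorous will likely require either folding all same-session decryption queries into one attack map $\Lambda$ (so that $\tilde B$ transports the adversary's post-measurement state across them) or a gentle-measurement argument showing that the commuting tag and entanglement projections accumulate only negligible disturbance. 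This composition of one-time \QCA into the many-query, multiply-verified setting --- rather than the collision bound or the per-session identification of the simulator --- is where the real work lies.
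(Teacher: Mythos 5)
Your reduction skeleton---collision removal, i.i.d.\ per-session keys arising from the random function, routing of each decryption query by the measured classical randomness, identification of the \QAEideal oracles (restricted to a single session) with the \QCA simulator, the constant-reject treatment of fresh-$r$ decryption queries, and the appeal to $\QCA\Rightarrow\QIND$ for sessions that are encrypted but never decrypted---is exactly that of the paper's proof of \expref{Theorem}{thm:QAErandom}. The one structural difference is how the hybrid is organized: you hybridize over \emph{encryption sessions}, switching all of session $i$'s activity from real to ideal in a single step, whereas the paper runs an induction over \emph{decryption queries}. That difference is not cosmetic: it is precisely what produces the obstacle you flag in your final paragraph, and the paper's organization is what removes it. Since \QCA is a one-encryption/one-decryption statement, the unit of replacement must be a single (encryption, decryption) pair, never a whole session. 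In the paper's induction step one peels off the \emph{last} decryption query of an adversary making $i$ of them: if its measured randomness $s$ is fresh, that query is replaced by the constant-$\bot$ map, justified by observing that the relevant attack map's output is independent of its input, so by \eqref{eq:normal-constraint} and \eqref{eq:extra-constraint} the accept branch of its effective map must carry negligible weight; if instead $s=s_j$ for an earlier encryption query $j$, one single application of \QCA replaces the pair formed by the $j$-th encryption and this decryption by the simulator of Equations~\eqref{eq:simulator1} and~\eqref{eq:simulator2}, which is exactly the ideal oracle pair. Either way the adversary is reduced to $i-1$ decryption queries and the induction hypothesis applies; repeated decryptions of the same session are thus absorbed one induction step at a time rather than all simultaneously.

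Accordingly, the step you leave open is a genuine gap in your write-up---though an honestly and precisely localized one---and neither of your two proposed repairs is the right tool. Folding all same-session decryption queries into one attack map $\Lambda_{CB\to C\tilde B}$ cannot work as stated: the intermediate decryptions apply $\Dec_k$, which depends on the secret key, so they cannot be absorbed into the key-independent map that \QCA quantifies over; by the very syntax of \expref{Definition}{def:qCauth}, $\Lambda$ must sit strictly between one $\Enc_k$ and one $\Dec_k$. A gentle-measurement argument is likewise not what is needed. The fix is simply to re-index your hybrids: let each hybrid step switch one decryption query (jointly, in the matching case, with its unique matching encryption query) rather than one session. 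With that single reorganization, your proposal becomes the paper's proof.
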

\begin{proof}

		We let $\Pi=(\KeyGen, \Enc,\Dec)$ and $\Pi^\algo F =(\KeyGen', \Enc',\Dec')$ where
	\begin{enumerate}
		\item $\KeyGen'(1^n)$ outputs a random function $F$ from $\bit^n$ to $\bit^m$;
		\item $\Enc'_F(X_M)$ outputs $\proj{s}_R\otimes \Enc_{F(s)}(X)_C$ , where $s \rand \bin^n$;
		\item $\Dec'_F(Y_{RC})$ first measures the $R$ register to get outcome $s'$; then it runs $\Dec_{F(s')}$ on register $C$ and outputs the result.
	\end{enumerate}
	Suppose $\adver$ is a \QAE adversary against $\Pi^\algo F$, i.e., a QPT algorithm with oracle access to $\Enc'_k$ and $\Dec'_k$. Suppose $\adver$ makes $\ell(n)$ queries to the oracle, where $\ell$ is some polynomial function of $n$. We assume that the randomnesses $s_i$ and the keys $F(s_i)$ used for the scheme $\Pi$ in the different encryption queries (for $i = 1, \ldots, \ell(n)$) are all distinct; this is true except with negligible probability. 
	
	Let us first analyze what happens in the \QAEreal experiment. Consider the $i$-th decryption oracle call. The decryption begins with a measurement of the $R$ register, yielding some outcome $s$ and thereby a key $\bar k=F(s)$. We can analyze the situation for each outcome $s$ that occurs with non-negligible probability, separately. This is because if an adversary is successful, it is easy to see that there is also a modified successful adversary, that submits only decryption queries with a fixed string $s$ in the randomness register.
	
	Suppose first that $\bar k=F(s) \neq F(s_i)$ for all $i$. In this case, the $\Pi$-encrypted part of the forgery candidate gets decrypted with a key different from all the ones used for encryption. We analyze the attack map $\Lambda=\tilde\adver(1^n)\tr_C$ against the \QCA scheme $\Pi$, where $\tilde\adver$ is defined to first run $\adver$ until the $i$th decryption query, while answering each encryption query by sampling a fresh key for the scheme $\Pi$. Note that $\Lambda$ does not use  initial side information, therefore $\sigma^{\acc}:=\Lambda^{\acc}$ and $\sigma^{\crej}:=\Lambda^{\crej}$ are just positive semidefinite matrices whose trace sums to one.

	According to Equation \eqref{eq:extra-constraint} in the definition of \QCA, the trace of $\sigma^{\acc}$ is the probability that the simulator applies the identity to the plaintext. The output of the attack map $\Lambda$ does not depend on it's input, i.e. the same holds for the effective map $\Lambda^\Pi$ and hence for $(\mathds{1}-\proj{\bot})\Lambda^\Pi(\cdot)(\mathds{1}-\proj{\bot})$. Any such map is far from any non-negligible multiple of the identity channel so the trace of $\sigma^{\acc}$ is negligible according to Equation \ref{eq:normal-constraint}. We have hence shown that the decryption oracle returns $\bot$ with ovewhelming probability, so we can take $\sigma^{crej}=\tr_C\tilde\adver(1^n)$.

	Let now $s'=r_j$, and write $\adver=\adver_1\Enc_{\hat k}\adver_0$, splitting the adversary into two parts before and after the $j$-th encryption query. Let $(\tilde \adver_1)_{CE_1\to CE_2}$ be defined analogous to $\tilde \adver$. $E_1$ and $E_2$ are the internal memory registers of $\adver$ at the time of the $j$-th encryption query and the $i$-th decryption query, respectively. $\Pi$ is \QCA secure, implying that $\tilde \adver_1^\Pi=\mathbb{E}_{\bar k}\left[\Dec_k\circ\tilde \adver_1\circ\Enc_{\bar k}\right]$ fulfills:
	\begin{equation}\label{eq:sim1}
	\|(\tilde \adver^\Pi_1)_{ME_1\to ME_2}-\id_M\otimes(\tilde \adver^{\acc}_1)_{E_1\to E_2}-\bot\otimes(\tilde \adver^{\crej}_1)_{E_1\to E_2}\|_\diamond\le\negl(n),
	\end{equation}
	where (using $P_{\text{inv}}=\mathds 1-\proj{\Phi_{\bar k,\bar r}}$):
	\begin{align}\label{eq:sim2}
	\tilde \adver^{\acc}_1 &= \mathbb{E}_{\bar k,\bar r}\left[\bra{\Phi_{\bar k,\bar r}} V_{\bar k}^\dagger\tilde \adver^{\acc}_1 \left(\Enc_{\bar k;\bar r}\left(\phi^+_{MM'} \right)\otimes (\cdot)_{E_1} \right)V_{\bar k}\ket{\Phi_{\bar k, \bar r}}\right]\text{ and}\nonumber\\
	\tilde \adver^{\crej}_1 &= \mathbb{E}_{\bar k,\bar r}\left[\tr_{MM'T}P_{\text{inv}} V_{\bar k}^\dagger\tilde \adver^{\acc}_1 \left(\Enc_{\bar k;\bar r}\left(\phi^+_{MM'} \right)\otimes (\cdot)_{E_1} \right)V_{\bar k}\right].
	\end{align}
	The form of the simulator in the reject case follows by using that the maximally entangled state is a point in the optimization defining the diamond norm in \eqref{eq:normal-constraint} and using the monotonicity of the trace norm under partial trace.

We now show indistinguishability of the real and ideal experiments by induction over the decryption queries. Since \QCA implies \IND, the two are indistinguishable before the first decryption query. Assume now that the two experiments cannot be distinguished using an algorithm that makes at most $i-1$ decryption queries.
Consider \adver running in the ideal experiment until right before the $(i+1)$-th decryption query (or until the end, if $i=\ell$). We make the same case distinction as before. In the first case the measurement in line (3) in the ideal decryption oracle in \expref{Experiment}{exp:QAErealeal} never returns 0, i.e. the output is always \rej. Therefore we can replace the $i$-th decryption oracle by the constant reject function, thereby reducing the number of decryption oracle calls of to $i-1$. By the induction hypothesis, the contents of the internal register are therefore indistinguishable whether in the \QAEreal or in the \QAEideal experiment.
	
	Turning to the second case, we make a very similar argument. We have $s=s_j$, i.e. the only encryption query where the measurement from line (3) in the definition of the ideal decryption oracle in \expref{Experiment}{exp:QAErealeal} can possibly return $0$ is the $j$-th. Here it is left to observe that the rest of the ideal decryption oracle implements exactly the same map as in the ideal world, i.e. the ones from equations \eqref{eq:sim1} and \eqref{eq:sim2}. Replacing the $j$-th encryption and the $i$-th decryption oracle call by this map, and using the induction hypothesis, we get that \adver run until before the $i+1$-th decrytion oracle call cannot distinguish \QAEreal from \QAEideal. This ends the proof by induction. \qed

\end{proof}

We now show how to satisfy \QAE efficiently, by means of a post-quantum-secure pseudorandom function.

\begin{corollary}\label{cor:QAEconstr}
	Let $\Pi$ be a $\QCA$-secure \SKQES that satisfies \expref{Condition}{con:efficient}, and let $f$ be a \pqPRF. Then the scheme $\Pi^f$ (from \expref{Definition}{def:SKE-generic}) satisfies \QAE.
\end{corollary}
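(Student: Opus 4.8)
The plan is to derive the corollary from \expref{Theorem}{thm:QAErandom} by a standard computational-indistinguishability (hybrid) argument, leveraging that a \pqPRF cannot be distinguished from a truly random function family by any QPT algorithm making only \emph{classical} oracle queries. Write $\Pi^f$ for the scheme built from the fixed \pqPRF $f$, and $\Pi^F$ for the same \expref{Definition}{def:SKE-generic} construction with $f_k$ replaced by a truly random function $F$. Since $\Pi$ is \QCA-secure, \expref{Theorem}{thm:QAErandom} already gives that $\Pi^F$ is \QAE-secure, and I want to transfer this to $\Pi^f$. Suppose toward a contradiction that some QPT adversary $\adver$ has non-negligible \QAE advantage $\nu(n)$ against $\Pi^f$. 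Introduce the four quantities $p_1=\Pr[\adver(\QAEreal(\Pi^f))\to\real]$, $p_2=\Pr[\adver(\QAEreal(\Pi^F))\to\real]$, $p_3=\Pr[\adver(\QAEideal(\Pi^F))\to\real]$ and $p_4=\Pr[\adver(\QAEideal(\Pi^f))\to\real]$, so that $\nu\le|p_1-p_4|\le|p_1-p_2|+|p_2-p_3|+|p_3-p_4|$. The middle term $|p_2-p_3|$ is exactly $\adver$'s \QAE advantage against $\Pi^F$, hence negligible by \expref{Theorem}{thm:QAErandom}.

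To bound $|p_1-p_2|$, I would build a distinguisher $\badver_1$ that, given oracle access to an unknown $g\in\{f_k,F\}$, internally simulates the \QAEreal experiment for $\adver$ using $g$ in place of the underlying function wherever the honest $\Enc$/$\Dec$ of $\Pi^g$ would evaluate it, and finally outputs the bit $\adver$ outputs; then $|p_1-p_2|$ is precisely the distinguishing advantage of $\badver_1$ against $f$. An analogous distinguisher $\badver_2$ simulating \QAEideal bounds $|p_3-p_4|$. If $\badver_1$ and $\badver_2$ are legitimate \pqPRF distinguishers, both terms are negligible, so $\nu$ is negligible, a contradiction, and the corollary follows.

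The crux, and the main obstacle, is verifying that $\badver_1$ and $\badver_2$ are \emph{legitimate} distinguishers in the \pqPRF model: each must be QPT and query $g$ only \emph{classically}. For $\badver_1$ this is immediate, since in $\Pi^g$ the honest $\Enc$ samples a classical $r$ and evaluates $g(r)$, while $\Dec$ begins by measuring the randomness register in the computational basis to obtain a classical $s$ and then evaluates $g(s)$; thus $g$ is only touched at freshly sampled or freshly measured classical points, and efficiency follows from $\Pi$ being QPT and $f$ efficiently computable. The delicate case is $\badver_2$, whose \QAEideal challenger must apply $V_k^\dagger$ and the measurements $\{\Pi_{k,r},\one-\Pi_{k,r}\}$ from \expref{Corollary}{cor:randomized-isometry}. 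Here I would exploit the product structure of $\Pi^f$: its randomness register $R$ is classical, the tag register of $\Pi^f$ decomposes as $R$ together with $\Pi$'s own tag register, and $V_k$ acts as $V^{\Pi}_{g(r)}$ controlled on the value $r$ in $R$. Consequently the computational-basis measurement of $R$ commutes past both $V_k^\dagger$ and each $\{\Pi_{k,r},\one-\Pi_{k,r}\}$, so I may push it to the very front of the decryption interface; after it yields a classical $s$, the challenger only applies $\Pi$'s operators $V^{\Pi}_{g(s)}$ and tag test, together with the $\{\Pi^+,\one-\Pi^+\}$ measurement on the stored half of $\ket{\phi^+}$, all of which are efficient by \expref{Condition}{con:efficient} and evaluate $g$ solely at the classical value $s$. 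This is exactly the observation flagged in the ``Our choice of primitives'' remark, and it is what makes \expref{Condition}{con:efficient} the precise hypothesis required: with both reductions established as classical-query QPT distinguishers, the hybrid bound closes.
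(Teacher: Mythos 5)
Your proposal is correct and follows essentially the same route as the paper's proof: reduce to the truly random function case via \expref{Theorem}{thm:QAErandom}, and justify the PRF-security reduction by observing that a computational-basis measurement of the randomness register commutes with the ideal challenger's application of $V_k^\dagger$ and the tag-test measurements, so only classical oracle queries to $g$ are ever needed (this is exactly the paper's key observation, and your identification of \expref{Condition}{con:efficient} as the hypothesis making the challenger efficient matches as well). The only difference is packaging: the paper combines the real and ideal cases into a single distinguisher that flips a coin $b$ to choose which experiment to simulate and outputs $b \oplus s$, whereas you run two separate distinguishers inside a four-term triangle inequality --- these are interchangeable formulations of the same reduction.
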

\begin{proof}
	As a contradiction, suppose there exists a QPT algorithm $\adver$ that distinguishes \QAEreal from \QAEideal. We claim that this also holds if $f$ is replaced with a completely random function family $\algo F$. If $\adver$ cannot break the random scheme $\Pi^{\algo F}$, then we can build a distinguisher for $f$ versus $\algo F$, as follows. What we would like to do is the following. Given an oracle $\algo O$, we:
	\begin{enumerate}
		\item choose a random bit $b \inrand \bit$;
		\item if $b=0$, we simulate the $\QAEreal(\Pi^\algo O, \algo A, n)$ experiment using our oracle;
		\item if $b=1$, we simulate the $\QAEideal(\Pi^\algo O, \algo A, n)$ experiment using our oracle;
		\item output $b \oplus s$ where $s$ is the output of \adver.
	\end{enumerate}
	This may at first not seem possible using the classical oracle we are provided with, as the ideal decryption oracle has to implement the unitary $V_k^\dagger$, which seems to require superposition access to the random/pseudorandom function. However, observe that steps 5-11 of Experiment \ref{exp:qcheat-ctxt} commute with a measurement of the randomness register $R$ in the computational basis, and afterwards this register is discarded. Therefore the outcome of the experiment is not changed by first measuring the register $R$, which yields an outcome $r$. Then the modified challenger can use classical oracle access to the random/pseudorandom function to implement $V_k^\dagger$ on the measured input state.
	
	Note that, if $\Pi^\algo O$ is secure, then $b$ and $s$ are independent (up to negligible terms) and $b \oplus s$ is a fair coin. If $\Pi^\algo O$ is insecure, then it deviates from uniform by the $\QUF$ distinguishing advantage of $\algo A$. This yields a distinguisher between the case $\algo O = f$ and $\algo O = \algo F$. The claim then follows from \expref{Theorem}{thm:QAErandom}.
	\qed
\end{proof}

In particular, the scheme family $\des^\pqPRF$ is sufficient for \QAE. We remark that the proof uses the fact that, given classical oracle access to $f$, the scheme $\Pi^f$ is efficiently implementable in the sense of \expref{Condition}{con:efficient} -- regardless of the nature of the family $f$. Of course, in the special case where $f$ is a \pqPRF, then $\Pi^f$ simply satisfies \expref{Condition}{con:efficient} without any need for oracles.

As \QAE implies both \QUF and \QINDCCAA (see \expref{Theorem}{thm:QAEUF} and \expref{Theorem}{thm:QAECCA2}), we have the following corollary.

\begin{corollary}
	Let $\Pi$ be a $\QCA$-secure \SKQES that satisfies \expref{Condition}{con:efficient}, and let $f$ be a \pqPRF. Then the scheme $\Pi^f$ (from \expref{Definition}{def:SKE-generic}) satisfies \QUF and \QINDCCAA.
\end{corollary}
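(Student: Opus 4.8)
The plan is to prove this as an immediate composition of three results already established in the excerpt, with no new work required. The hypotheses of this corollary---namely that $\Pi$ is \QCA-secure and satisfies \expref{Condition}{con:efficient}, and that $f$ is a \pqPRF---are precisely the hypotheses of \expref{Corollary}{cor:QAEconstr}. So the first step is simply to invoke that corollary to conclude that $\Pi^f$ satisfies \QAE.

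Once \QAE is in hand, the two desired conclusions follow by chaining through the implication theorems. Concretely, I would apply \expref{Theorem}{thm:QAEUF}, which gives $\QAE \implies \QUF$, to conclude that $\Pi^f$ is \QUF; and then apply \expref{Theorem}{thm:QAECCA2}, which gives $\QAE \implies \QINDCCAA$, to conclude that $\Pi^f$ is \QINDCCAA. Both implications are unconditional (they require nothing beyond \QAE itself), so no additional hypotheses on $\Pi^f$ need to be checked at this stage; in particular, the efficiency condition was consumed entirely by \expref{Corollary}{cor:QAEconstr}.

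There is essentially no obstacle here: all of the technical difficulty has already been absorbed into \expref{Corollary}{cor:QAEconstr} (the hybrid/induction argument reducing to a random function family together with the \pqPRF switch) and into the two implication theorems. The only thing worth double-checking is that the hypotheses line up exactly, which they do, so the proof is a one-line chain of the form ``\QCA-secure $\Pi$ satisfying \expref{Condition}{con:efficient} plus \pqPRF $f$ $\Rightarrow \QAE \Rightarrow \QUF \wedge \QINDCCAA$.'' I would simply state this composition as the proof.

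\begin{proof}
By \expref{Corollary}{cor:QAEconstr}, since $\Pi$ is \QCA-secure and satisfies \expref{Condition}{con:efficient} and $f$ is a \pqPRF, the scheme $\Pi^f$ satisfies \QAE. The claim now follows by combining this with the implications $\QAE \implies \QUF$ (\expref{Theorem}{thm:QAEUF}) and $\QAE \implies \QINDCCAA$ (\expref{Theorem}{thm:QAECCA2}). \qed
\end{proof}
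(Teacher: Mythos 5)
Your proof is correct and matches the paper exactly: the paper derives this corollary in precisely this way, by combining \expref{Corollary}{cor:QAEconstr} (which yields \QAE for $\Pi^f$) with the unconditional implications in \expref{Theorem}{thm:QAEUF} and \expref{Theorem}{thm:QAECCA2}. Your observation that the efficiency condition is consumed entirely by \expref{Corollary}{cor:QAEconstr} and that nothing further needs checking is also accurate.
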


We can also show how to satisfy bounded-query unforgeability, i.e., $\QUF_t$. Recall that a $t$-wise independent function is a deterministic, efficiently computable keyed function family $\{f_k\}_k$ which appears random to any algorithm (of unbounded computational power) which gets classical oracle access to $f_k$ for uniformly random $k$, and can make at most $t$ queries. One can apply the proof technique of \expref{Corollary}{cor:QAEconstr} and Theorem \ref{thm:QAErandom} to obtain the following.

\begin{corollary}\label{cor:t-time}
Let $\Pi$ be a $\QCA$-secure \SKQES, and let $f$ be a $t$-wise independent function family. Then the scheme $\Pi^f$ (as defined in \expref{Definition}{def:SKE-generic}) satisfies $\QUF_t$.
\end{corollary}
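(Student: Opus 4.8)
The plan is to reduce the claim to the random-function case already established in \expref{Theorem}{thm:QAErandom}. That theorem, together with the implication $\QAE \implies \QUF$ of \expref{Theorem}{thm:QAEUF}, shows that $\Pi^{\algo F}$ is $\QUF$, hence a fortiori $\QUF_t$, whenever $\algo F$ is a truly random function family. It therefore suffices to prove that, for every adversary $\adver$ making at most $t$ queries, the games $\QUFforge(\Pi^f,\adver,n)$ and $\QUFcheat(\Pi^f,\adver,n)$ are statistically close to the corresponding games played against $\Pi^{\algo F}$. If so, the gap $|\Pr[\QUFforge \to \win] - \Pr[\QUFcheat \to \cheat]|$ for $\Pi^f$ differs from that for $\Pi^{\algo F}$ only negligibly, and the latter is negligible by the above.

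First I would invoke the observation from the proof of \expref{Corollary}{cor:QAEconstr}: the cheat-detection steps of \expref{Experiment}{exp:qcheat-ctxt} commute with a computational-basis measurement of the randomness registers, so we may measure them without changing the outcome of either game. Concretely, the $j$-th encryption query of $\Pi^f$ samples a fresh string $r_j$ and runs $\Enc^\Pi$ under key $f_k(r_j)$; by \expref{Lemma}{lem:SKQES-char} and \expref{Corollary}{cor:randomized-isometry}, measuring the randomness register of the forgery yields a classical value $r^*$. After these measurements, the function $f$ enters only through the evaluations $f_k(r_1),\dots,f_k(r_q)$ at the $q\le t$ query points and the single evaluation $f_k(r^*)$ used to decrypt the forgery.

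The core step is an information-theoretic substitution of $\algo F$ for $f$. Arguing as in \expref{Theorem}{thm:QAErandom}, the strings $r_1,\dots,r_q$ are pairwise distinct except with negligible probability (the randomness space is super-polynomial). Conditioned on this event, $t$-wise independence with $q\le t$ guarantees that $f_k(r_1),\dots,f_k(r_q)$ are distributed exactly as $q$ independent uniform keys, i.e.\ as under $\algo F$. Since $\adver$ never sees these key values directly---they are hidden inside $\Pi$-ciphertexts, which are $\QIND$ by \expref{Theorem}{thm:qCauth2DNS-IND}---the entire view of $\adver$ up to the production of the forgery is statistically identical in the $\Pi^f$ and $\Pi^{\algo F}$ experiments, and hence so is the distribution of the forgery $(r^*,\psi)$. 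When $r^*=r_i$ for some $i$ (a replay), no new evaluation is required and the $\win$/$\cheat$ probabilities already match $\algo F$; when $r^*$ is fresh and $q<t$, the set $\{r_1,\dots,r_q,r^*\}$ has at most $t$ elements, so $f_k(r^*)$ is uniform and independent of the forgery and $\QCA$-security of $\Pi$ forces rejection with overwhelming probability, exactly as against $\algo F$.

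The main obstacle is the boundary case of a fresh forgery produced after the full budget of exactly $t$ queries, which touches $t+1$ distinct evaluation points and so lies beyond the reach of $t$-wise independence alone. The resolution I would pursue is to note that $t$-wise independence is needed only to match the $t$ \emph{query} keys to the random-function case, whereas the forgery point requires merely that $f_k(r^*)$ be uniform (which is $1$-wise independence) \emph{and} nearly uncorrelated with $\psi$. The latter follows not from the statistics of $f$ but from the key-hiding property of $\Pi$: by the $\QIND$ guarantee of \expref{Theorem}{thm:qCauth2DNS-IND}, the ciphertexts that $\adver$ observes are essentially independent of the hidden keys $f_k(r_j)$, so the forgery---and with it $\psi$---is essentially independent of those key values, leaving $f_k(r^*)$ an effectively fresh one-time key for the $\QCA$ scheme $\Pi$. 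Making this decoupling fully rigorous, and verifying that the cheat-detection measurements of \expref{Experiment}{exp:qcheat-ctxt} act identically under the substitution $f\to\algo F$, is the delicate part; the rest is a routine hybrid over the $t$ queries in the spirit of the induction in \expref{Theorem}{thm:QAErandom}.
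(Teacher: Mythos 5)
Your top-level route is the same as the paper's: establish security of $\Pi^{\algo F}$ for a truly random $\algo F$ via \expref{Theorem}{thm:QAErandom} together with a bounded-query version of \expref{Theorem}{thm:QAEUF}, then transfer to $\Pi^f$ on the grounds that the experiments evaluate the function at few enough points for $t$-wise independence to take over. You have also correctly isolated the real crux, which the paper's own sketch elides (it asserts that running a single game needs only $t$ evaluations; that accounting is right for $\QUFcheat$, where a forgery carrying fresh randomness $r^*$ is rejected without ever evaluating the function at $r^*$, but wrong for $\QUFforge$, where the forgery must actually be decrypted): a fresh forgery after a full budget of $t$ queries involves $t+1$ distinct evaluation points, about whose joint distribution $t$-wise independence says nothing.

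The resolution you propose for that boundary case, however, is unsound, and the gap cannot be closed from the stated hypotheses at all. \QIND hides plaintexts, not keys: the ciphertexts the adversary holds are functions of the hidden keys $f_k(r_j)$, and the danger is not that the adversary \emph{learns} those keys but that the correlation between $f_k(r^*)$ and the $f_k(r_j)$ can be exploited \emph{without} learning anything about them. Concretely, let $\Pi$ be \des with its key padded by $\ell$ extra bits that $\Enc$ and $\Dec$ simply ignore; this is still a correct \SKQES and still \QCA, since its ciphertext space, its decomposition per \expref{Lemma}{lem:SKQES-char}, and hence the \QCA condition are literally those of \des. Let $f_k(r)=k\oplus r$, which is $1$-wise independent. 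An adversary making one query receives $(r_1,c_1)$ and outputs the forgery $(r_1\oplus\delta,\,c_1)$, where $\delta\neq 0$ is supported on the ignored key bits. Since $f_k(r_1\oplus\delta)=f_k(r_1)\oplus\delta$ agrees with $f_k(r_1)$ on the key bits $\Pi$ actually uses, the forgery decrypts correctly, so the adversary wins $\QUFforge$ with probability $1$; in $\QUFcheat$ the check $\{\Pi_{k,r_1},\one-\Pi_{k,r_1}\}$ fails deterministically because the randomness register holds $r_1\oplus\delta\neq r_1$, so the cheat probability is $0$, and $\QUF_1$ is broken. Note that here the forgery is manifestly independent of the hidden key values --- exactly the property you hoped would make $f_k(r^*)$ ``effectively fresh'' --- and yet the attack succeeds, because what matters is the correlation among the keys themselves combined with key-degeneracy of $\Pi$. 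So no argument along your lines (indeed, no argument) can work under the hypotheses as written; the natural repair is to assume $(t+1)$-wise independence, under which both experiments touch at most $t+1$ points and your substitution of $\algo F$ for $f$ (equivalently, the paper's oracle-distinguisher reduction) goes through verbatim.
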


\begin{proof}(Sketch.)
If there exists a QPT $\algo A$ which can break $\QUF_t$ for $\Pi^f$ using $t$-many queries, then it also breaks $\Pi^\algo F$ where $\algo F$ is a random function. If not, we construct an oracle distinguisher for $\algo O = f$ versus $\algo O = \algo F$ which simulates $\algo A$ in one of the two games (each with probability $1/2$) and outputs a bit which is biased depending on $\algo O$. Note that we only need $t$ queries to do this, since we only run one of the games (and not both). It then remains to invoke \expref{Theorem}{thm:QAErandom}, and observe that \expref{Theorem}{thm:QAEUF} holds in the case of a bounded number of queries.
\qed
\end{proof}

\subsubsection{Separations.}
%%%

While \QAE implies \QINDCCAA according to \expref{Theorem}{thm:QAECCA2}, the converse does not hold. 
In fact, consider any \QAE secure scheme and modify the decryption function by replacing the reject symbol by a fixed plaintext, e.g. the all zero state. Such a scheme is certainly still \QINDCCAA secure, as any adversary against it can be used against the original scheme by simulating the modified one. The modified scheme is, however, manifestly not \QAE as it never outputs $\bot$. The same reasoning works for \QUF in place of \QAE.
\begin{proposition}\label{prop:sepQINDCCAA-QUF}
	$\QINDCCAA\not\Rightarrow\QUF$, and therefore  $\QINDCCAA\not\Rightarrow\QAE$.
\end{proposition}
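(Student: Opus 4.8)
The plan is to exhibit a single scheme witnessing both non-implications at once. Since \QAE\ implies \QUF\ by \expref{Theorem}{thm:QAEUF}, any scheme that is \QINDCCAA\ but not \QUF\ is automatically also \QINDCCAA\ but not \QAE, so it suffices to separate \QINDCCAA\ from \QUF. I would start from any \QAE-secure \SKQES\ $\Pi=(\KeyGen,\Enc,\Dec)$ --- one exists by \expref{Corollary}{cor:QAEconstr}, and it is \QINDCCAA\ by \expref{Theorem}{thm:QAECCA2} --- and ``cripple'' its decryption so that it never rejects. Concretely, set $\Pi'=(\KeyGen,\Enc,\Dec')$ with the same key generation and encryption, where $\Dec'_k$ post-composes $\Dec_k$ with the channel that leaves the $M$-part untouched and relabels the reject flag $\proj{\bot}$ to a fixed plaintext $\proj{0^n}$. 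Honest ciphertexts never trigger the reject branch, so $\Dec'_k\circ\Enc_k=\Dec_k\circ\Enc_k\approx\one_M$ and $\Pi'$ is a legitimate \SKQES.

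First I would verify that $\Pi'$ fails \QUF\ in the strongest possible way. Because $\Dec'_k$ never outputs $\bot$, \emph{every} candidate ciphertext decrypts to a valid plaintext, so the trivial adversary $\adver$ that makes no queries and submits an arbitrary fixed state wins $\QUFforge(\Pi',\adver,n)$ with probability $1$. Since this $\adver$ makes no encryption queries, the stored set $\mathcal{M}$ in \expref{Experiment}{exp:qcheat-ctxt} is empty and $\QUFcheat(\Pi',\adver,n)$ outputs $\rej$ with certainty, so $\Pr[\cheat]=0$. The gap is therefore $|\Pr[\win]-\Pr[\cheat]|=1$, which is non-negligible, so $\Pi'$ is not \QUF, and by \expref{Theorem}{thm:QAEUF} not \QAE.

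The substantive step is to show that $\Pi'$ remains \QINDCCAA, which I would do by reduction: given an adversary $\adver'$ against $\Pi'$, build a QPT adversary $\adver$ against $\Pi$ that runs $\adver'$, forwards encryption queries verbatim (encryption is unchanged), and simulates $\Dec'_k$ from its own $\Dec_k$ oracle by measuring $\{\proj{\bot},\one-\proj{\bot}\}$ on each answer and swapping $\bot$ for $\proj{0^n}$. The point that makes this clean is \expref{Lemma}{lem:SKQES-char}: since $\Enc'_k=\Enc_k$, the characterizing unitary $V_k$, tag state $\sigma_k$, and projector $P^{\sigma_k}_T$ of $\Pi'$ coincide with those of $\Pi$, and only the default map $\hat D_k$ for invalid ciphertexts changes. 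Consequently the challenger's cheat-detection in \expref{Experiment}{exp:INDCCAA-fake} --- which is built solely from the measurements associated with $V_k,\sigma_k,\Pi_{k,r},\Pi^+$ --- is \emph{identical} for $\Pi$ and $\Pi'$, so the \cheat\ flag fires with exactly the same probability. The only effect of the modification lives in the returned plaintext, and this is precisely the $\bot\mapsto\proj{0^n}$ relabeling that $\adver$ reproduces by post-processing. Thus $\adver$ perfectly recreates $\adver'$'s view in both the test and fake games, giving $\Pr[\QINDCCAAtest(\Pi',\adver',n)\to\win]=\Pr[\QINDCCAAtest(\Pi,\adver,n)\to\win]$ and the matching identity for the \cheat\ probability; hence $\adver'$'s \QINDCCAA\ advantage against $\Pi'$ equals $\adver$'s against $\Pi$, which is negligible.

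I expect the main obstacle to be exactly this coupling argument, namely verifying carefully that replacing $\hat D_k$ does not disturb the cheat-detection branch. The key fact to spell out is that in \expref{Experiment}{exp:INDCCAA-fake} the cheat test operates on the accept/valid branch (where the tag measurement returns $0$) while $\hat D_k$ acts on the complementary invalid branch (outcome $1$); these branches are orthogonal, so the $\bot\mapsto\proj{0^n}$ change commutes with the challenger's measurements and touches only the register handed back to the adversary. Establishing this disjointness rigorously --- and checking that the reduction handles both pre- and post-challenge decryption queries and the final coin flip uniformly --- is the crux, after which the separation follows immediately.
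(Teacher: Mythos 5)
Your proposal is correct and follows essentially the same route as the paper: the paper also takes a \QAE-secure (hence \QINDCCAA) scheme, modifies decryption to replace the reject symbol $\proj{\bot}$ by a fixed plaintext, observes the result is trivially forgeable (never rejecting), and argues it stays \QINDCCAA\ because any adversary against the modified scheme can be simulated by one against the original. Your write-up is simply a more detailed elaboration of that argument, in particular making explicit (via \expref{Lemma}{lem:SKQES-char}) why the challenger's cheat-detection is unaffected by changing only $\hat D_k$ --- a point the paper leaves implicit.
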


Finally, we turn to the relationship of \QAE and \QUF, and propose a separation as follows. Let $\Pi$ be a scheme that fulfills \cQCA (\expref{Definition}{def:cqCauth}) for trivial register $\tilde B$, but can be broken using an efficient attack with nontrivial $\tilde B$. For any PRF $f$, $\Pi^f$ is clearly \QUF, as the security notion ignores side information. It can however not be \QAE, as \QAE implies $\cQCA$.

%%%%%%%%%%%%%%%%%%%%%%%%%%%%%%%%%%%%
%%%%%%%%%%%%%%%%%%%%%%%%%%%%%%%%%%%%
%%%%%%%%%%%%%%%%%%%%%%%%%%%%%%%%%%%%
%%%%%%%%%%%%%%%%%%%%%%%%%%%%%%%%%%%%
%%%%%%%%%%%%%%%%%%%%%%%%%%%%%%%%%%%%
%%%%%%%%%%%%%%%%%%%%%%%%%%%%%%%%%%%%
%%%%%%%%%%%%%%%%%%%%%%%%%%%%%%%%%%%%
\section{Discussion}
%%%%%%%%%%%%%%%%%%%%%%%%%%%%%%%%%%%%
%%%%%%%%%%%%%%%%%%%%%%%%%%%%%%%%%%%%
%%%%%%%%%%%%%%%%%%%%%%%%%%%%%%%%%%%%
%%%%%%%%%%%%%%%%%%%%%%%%%%%%%%%%%%%%
%%%%%%%%%%%%%%%%%%%%%%%%%%%%%%%%%%%%
%%%%%%%%%%%%%%%%%%%%%%%%%%%%%%%%%%%%
%%%%%%%%%%%%%%%%%%%%%%%%%%%%%%%%%%%%

In this work, we presented four new security notions for symmetric key quantum encryption: \QCA, \QUF, \QINDCCAA and \QAE. While we have already made significant progress on understanding these notions, a number of open questions remain. A few are as follows. Does an encryption scheme as discussed below \expref{Proposition}{prop:sepQINDCCAA-QUF} exist, proving $\QUF\not\Rightarrow \QAE$? If so, does \QUF imply \QINDCCAA or \QINDCCA? Classically, unforgeability and \INDCCAA imply \CAE;  does this hold quantumly as well? Finally, is there a scheme that satisfies \QINDCCAA but cannot be upgraded to \QAE by simply modifying the decryption function?

\section{Acknowledgements}

%G.A. would like to thank Alex Russell and Fang Song for many engaging discussions on quantum security. T.G. is thankful to Anne Broadbent and Christian Schaffner for insightful discussions regarding quantum CCA2. C.M. would like to thank Christian Schaffner and Yfke Dulek for encouraging discussions about the problems solved in this work. 
The authors would like to thank Anne Broadbent, Fr\'ed\'eric Dupuis, Yfke Dulek, Alex Russell, Christian Schaffner, and Fang Song for insightful discussions about the problems solved in this work. The authors are indebted to Christopher Portmann who discovered an error in an earlier version of this paper.
Part of this work was done while T.G. was supported by the TU Darmstadt. Part of this work was done while G.A. and C.M. were at QMATH, University of Copenhagen. 
Part of this work was sponsored by the COST CryptoAction IC1306. T.G. acknowledges financial support from the European Commission’s PERCY grant (agreement 321310).  
G.A. and C.M. acknowledge financial support from the European Research Council (ERC Grant Agreement no 337603), the Danish Council for Independent Research (Sapere Aude) and VILLUM FONDEN via the QMATH Centre of Excellence (Grant No. 10059). 
This work is part of the research programme "Cryptography in the Quantum Age" with project number 639.022.519, which is financed by the Netherlands Organisation for Scientific Research (NWO).

%---------------------------------------------------------------------%
\bibliographystyle{abbrv}
%\bibliography{newbib}

%---------------------------------------------------------------------%

%---------------------------------------------------------------------%
%---------------------------------------------------------------------%
\end{document}